\theoremstyle{plain}
\newtheorem{theorem}{Theorem}[section]
\newtheorem{corollary}{Corollary}[theorem]
\newtheorem{lemma}[theorem]{Lemma}
\newtheorem{proposition}[theorem]{Proposition}
\theoremstyle{remark}
\newtheorem{definition}{Definition}[section]
\newtheorem{remark}{Remark}
\theoremstyle{definition}
\newtheorem{example}{Example}[section]
\newenvironment{lema}[1]{\par\noindent{\bf Lemma #1\ }\em}{\em}
\newenvironment{cora}[1]{\par\noindent{\textbf{Corollary #1} }\em}{\em}
\newenvironment{proa}[1]{\par\noindent{\textbf{Proposition #1} }\em}{\em}
\DeclareMathOperator{\head}{head}
\DeclareMathOperator{\tail}{tail}
\DeclareMathOperator{\pa}{pa}
\DeclareMathOperator{\sib}{sib}
\DeclareMathOperator{\an}{an}
\DeclareMathOperator{\de}{de}
\DeclareMathOperator{\dis}{dis}
\DeclareMathOperator{\barren}{barren}
\DeclareMathOperator{\ceil}{ceil}
\DeclareMathOperator{\mb}{mb}
\DeclareMathOperator{\ham}{ham}
\DeclareMathOperator{\nb}{nb}
\DeclareMathOperator{\ch}{ch}
\DeclarePairedDelimiter{\abs}{\lvert}{\rvert}
\newcommand{\lqarrow}{\mathbin{\leftarrow\!\!\medmath{?}}}
\newcommand{\rqarrow}{\mathbin{\medmath{?}\!\!\rightarrow}}
\newcommand*\phantomrel[1]{\mathrel{\phantom{#1}}}
\newcommand{\ei}[1]{u_{\langle #1 \rangle}}
\DeclarePairedDelimiterX{\infdivx}[2]{(}{)}{%
  #1\;\delimsize\|\;#2%
}
\newcommand{\infdiv}{KL\infdivx}
\newcommand\indep{\protect\mathpalette{\protect\independenT}{\perp}}
\def\independenT#1#2{\mathrel{\rlap{$#1#2$}\mkern2mu{#1#2}}}
\DeclareMathOperator{\BIC}{BIC}
\DeclareMathOperator{\MF}{MF}
\newcommand{\Sset}{\cal S}
\newcommand{\I}{{\cal I}}
\newcommand{\G}{{\cal G}}
\newcommand{\cmid}{\,|\,}
\tikzset{deg/.style={->, very thick, color=blue}}
\tikzset{degl/.style={->, very thick, color=red}}
\tikzset{beg/.style={<->, very thick, color=red}}
\title{Towards standard imsets for maximal ancestral graphs}
\author{{\bf Zhongyi Hu}\\Department of Statistics\\University of Oxford\\zhongyi.hu@keble.ox.ac.uk \and 
{\bf Robin J.~Evans}\\Department of Statistics\\University of Oxford\\evans@stats.ox.ac.uk}
\begin{document}

\allowdisplaybreaks

\maketitle

\begin{abstract}
The imsets of \citet{studeny2006probabilistic} are an algebraic method for representing conditional independence models.  They have many attractive properties when applied to such models, and they are particularly nice for working with directed acyclic graph (DAG) models.  In particular, the `standard' imset for a DAG is in one-to-one correspondence with the independences it induces, and hence is a label for its Markov equivalence class. We first present a proposed extension to standard imsets for maximal ancestral graph (MAG) models, using the parameterizing set representation of \citet{hu2020faster}.  
We show that for many such graphs our proposed imset is \emph{perfectly Markovian} with respect to the graph, including a class of graphs we refer to as \emph{simple} MAGs, which includes DAGs as a special case.
In these cases the imset provides a scoring criteria by measuring the discrepancy for a list of independences that define the model; this gives an alternative to the usual BIC score that is also consistent, and much easier to compute. We also show that, of independence models that do represent the MAG, the imset we give is minimal.
Unfortunately, for some graphs the representation does not represent all the independences in the model, and in certain cases does not represent any at all. For these general MAGs, we refine the reduced ordered local Markov property \citep{richardlocalmarkov} by a novel graphical tool called \emph{power DAGs}, and this results in an imset that induces the correct model and which, under a mild condition, can be constructed in polynomial time.
\end{abstract}
\section{Introduction}

\emph{Maximal ancestral graphs} (MAGs) \citep{richardson2002} are used to model distributions via \emph{conditional independence} (CI) relations. They are an extension of models based on \emph{directed acyclic graphs} (DAGs), as MAG models remove the assumption of no unobserved common parents and allow data arising from distributions with a more general independence structure. These graphs have been proven to be useful in various scenarios, for example, to infer causal effects from observational data. Therefore learning the best graph associated with a given dataset is a crucially important task.

There are three classical types of learning algorithm for graphical models: as well as constraint-based and score-based methods, there are hybrid methods that combine the first two approaches. The canonical constraint-based method for learning DAGs/MAGs is the PC/FCI algorithm \citep{spirtes2000causation}. Briefly speaking, this type of method tests for conditional independences in the empirical distribution, and uses the results to reconstruct the graph. The problem of constraint-based methods is that when the group of variables is large, it is likely that a mistake in testing a conditional independence will be made; this error will be propagated through the algorithm, and the resulting graph will not reflect the true independence structure generating the data \citep[see, e.g.][]{ramsey06adjacency, evans20model}.

Score-based methods tend to be more accurate. The main idea is that they go through many graphs and select the graph with the highest score, which asymptotically gives the correct graph. There now exist several score-based methods for MAGs  \citep{triantafillou2016score, rantanen2021maximal,chen2021integer,claassen2022greedy}. There are two main problems with these approaches.
First, the score used for DAGs is generally the Bayesian information criterion  \citep[BIC,][]{chickering2002optimal,jaakkola2010learning}, which is also used by the above methods for MAGs. This scoring criteria is known to be \emph{consistent}, that is, in the limit of infinite samples, models with lower complexity that contain the true model always have better BIC compared to higher complexity models. However, although methods for fitting Gaussian or discrete MAG models via maximum likelihood have been given by \citet{drton2009computing} and \citet{evans2010maximum,Evans2014} (so the corresponding BIC score can be obtained), MLEs are not available in closed-form, and therefore they need to be computed iteratively using a numerical method.  The functions being maximized are also not generally convex if the model is not a DAG, and so these iterative algorithms may converge to a point which is not globally optimal. Moreover, the factorisation of distributions in MAG models is complicated \citep{richardson2009factorization}, and the scores are only decomposable with respected to the components connected by bidirected paths, also known as districts or c-components.  In contrast, for DAGs the MLE is available in closed-form and the BIC score can be decomposed in terms of individual variables and their parent sets.

Another problem is how to reduce the number of graphs visited. Logically we could score every graph, but this is obviously hopelessly inefficient. Graphs that represent the same CI relations are said to be in the same \emph{Markov equivalence class} (MEC). Graphs in the same MEC generally have the same score, and so it is a waste of time to score other graphs in the same class. Among the previous works mentioned, only \citet{claassen2022greedy} explore graphs in the space of MECs. However their method uses the BIC score, which is computationally expensive. We will see that the framework of imsets \citep{studeny2006probabilistic} combined with another representation of MECs \citep{hu2020faster}, provides a solution to address the above two issues at the same time.

\subsection{Imsets}

Imsets are `integer valued vectors indexed by subsets of the variables' that can be used to encode arbitrary CI models. Imsets are very useful in the context of graphical models, because DAG models fit into this representation very elegantly. The imset used to represent a DAG model is usually the `standard imset', meaning that it is the simplest imset that correctly and precisely represents the conditional independences implied by the corresponding graph. For DAGs, there are several advantages to using standard imsets. Firstly, when two DAGs represent the same CI model, in other words they are Markov equivalent, their standard imsets also agree. Moreover, the BIC score of a DAG (defined in Appendix \ref{sec:scoring_main}) is an affine function of the standard imset. So we can see that for DAGs, imsets not only provide a representation of the MEC but also lead to a consistent scoring criterion. 

In this paper, we propose a standard imset formula for MAGs which, though it does not always define the same model as the graph, and sometimes does not define a model at all, works for large sub-classes of MAGs. 
%
We use the \emph{parametrizing set} that arises in the discrete parametrization and a factorization theorem of MAG models \citep{richardson2009factorization, hu2020faster}. One of the motivations is that for DAGs, the \emph{characteristic} imset $c_{\G}$ introduced by \citet{studeny2010characteristic}, agrees with the \emph{parametrizing set}; since there is a bijection between the characteristic and standard imsets, we use the parameterizing set to define the characteristic imset for MAGs, and then work backwards to deduce an expression for the standard imset. Another motivation is that \citet{hu2020faster} show that two MAGs are Markov equivalent if and only if they have the same \emph{parametrizing set}. Thus by construction two MAGs agree on their standard imsets if and only if they are Markov equivalent. 

For general MAGs, we provide an imset from a new Markov property. The imset always defines the same model as the graph it is derived from, and can be constructed in polynomial time under mild assumptions on the graph structure. 

The paper is organised as follows. The definitions for basic terminology in graphical models and imsets are given in Section \ref{sec:dfn}. In Section \ref{sec:main_res}, we present our main results on:
\begin{itemize}
    \item the formula of the `standard' imset $u_{\G}$ and its properties;
    \item when the MAGs are \emph{simple} (no head of size three or more), the imset does define the right model;
    \item  a proof that $\I_{u_\G} \subseteq \I_\G$ for all MAGs $\G$ (provided that $\I_{u_\G}$ is well-defined), where $\I_{u_\G} $ and $ \I_\G$ are lists of conditional independences implied by the imset $u_{\G}$ and graph $\G$ respectively.
\end{itemize}
If $\I_{u_{\G}} \subsetneq \I_{\G}$, this means that the imset we define does not include all the conditional independences implied by the graph. Full definitions will be provided in Section \ref{sec:dfn}. 

In Section \ref{power DAGs}, we introduce the concept of \emph{power DAGs}, which are DAGs on certain subsets of vertices. This is inspired by the decomposition of $u_\G$ given in Section \ref{sec:decomp of u_G} and it helps us to give a list of independences, which is equivalent to, but shorter than, the list of independences given by the ordered local Markov property. This results in an imset for general MAGs.

In Section \ref{sec:exp}, we report our experimental results on which graphs are perfectly Markovian ($\I_{\G} = \I_{u_{\G}}$) when there are at most seven nodes. Then we discuss related and future work in Section \ref{sec:Discussion}.

We also study our imsets of bidirected graphs and show that for a large class of bidirected graphs, the imset defines the right model but due to limit of space, we present our results in Appendix \ref{sec:bidirected graphs}. The condition we give is proved to be sufficient and empirically we checked that it is also necessary for graphs with at most seven vertices. The condition, however, is complicated, thus we conjecture that it is combinatorically difficult to obtain a minimal list of independences that define the model. We also give a list of forbidden induced subgraphs; that is, 
a motif that graphs cannot contain if $\I_{u_{\G}}$ is to be valid and be perfectly Markovian with respect to $\G$.  

Lastly, the MAGs considered in this paper are all \emph{directed} and contain no undirected edges. Such MAGs are sometimes referred as \emph{directed MAGs} by other authors. To obtain imsets for MAGs an with undirected component, one can first obtain an imset for the directed part as though the undirected component were complete, then obtain another imset for the undirected component \citep{kashimura2015standard}, and add two imsets together. If the directed part alone has an imset that defines the model for the subgraph, then this combination will also define the model for the whole graph.

\subsection{Related work}\label{sec: related work}

Similar work has been done by \citet{andrews:phd, andrews22}, who call the parametrizing set of MAGs the \emph{m-connecting} set. In particular, we have the same initial motivation as them, which is the similarity between the 0-1 characteristic imset \citep{studeny2010characteristic} and parametrizing sets \citep{hu2020faster} of DAGs. 
Our work contains results distinct from those works; we give  a more detailed comparison in Section \ref{sec:Discussion}.

\section{Definitions} \label{sec:dfn}

A \emph{graph} $\G$ consists of a vertex set $\mathcal{V}$ and an edge set $\mathcal{E}$ of pairs of distinct vertices. We consider mixed graphs with two types of edge: \emph{directed} ($\to$) and \emph{bidirected} ($\leftrightarrow$). For an edge in $\mathcal{E}$ connecting vertices $a$ and $b$, we say these two vertices are the \emph{endpoints} of the edge and the two vertices are \emph{adjacent} (if there is no edge between $a$ and $b$, they are \emph{nonadjacent}). 

A \emph{path} of length $k$ is an alternating sequence of $k+1$ distinct vertices $v_{i}$ and edges connecting $v_i$ and $v_{i+1}$ for $i = 0,\dots, k-1$. Note that if there is more than one edge between a pair of vertices, it is ambiguous which edge is used in a path; however, we will only ever use paths in graphs which are \emph{simple} (i.e.\ there is at most one edge between any pair of vertices). A path is \emph{directed} if its edges are all directed and point from $v_i$ to $v_{i+1}$. A \emph{directed cycle} is a directed path of length $k \geq 1   $ plus the edge $v_k \rightarrow v_0$, and a graph $\G$ is \emph{acyclic} if it has no directed cycle. A \emph{graph} $\G$ is called an \emph{acyclic directed mixed graph} (ADMG) if it is \emph{acyclic} and contains only directed and bidirected edges. 

For a vertex $v$ in an ADMG $\G$, we define the following sets:
\begin{align*}
\pa_{\G}(v) &= \{w: w \rightarrow v \text{ in } \G\} & & \sib_{\G}(v) = \{w:w \leftrightarrow v \text{ in } \G\}\\
\an_{\G}(v) &= \{w: w \rightarrow \cdots \rightarrow v \text{ in } \G \text{ or } w=v\} && \de_{\G}(v) = \{w: v \rightarrow \cdots \rightarrow w \text{ in } \G \text{ or } w=v\}\\
\dis_{\G}(v) &= \{w: w \leftrightarrow \cdots \leftrightarrow v \text{ in } \G \text{ or } w=v\}.
\end{align*}
They are known as the \emph{parents}, \emph{siblings}, \emph{ancestors}, \emph{descendants} and \emph{district} of $v$, respectively. These operators are also defined disjunctively for a set of vertices $W \subseteq \mathcal{V}$. For example $\pa_{\G}(W) = \bigcup_{w \in W} \pa_{\G}(w)$. Vertices in the same district are connected by a bidirected path and this is an equivalence relation, so we can partition $\mathcal{V}$ and denote the \emph{districts} of a \emph{graph} $\G$ by $\mathcal{D}(\G)$. We sometimes ignore the subscript if the graph we refer to is clear, for example $\an (v)$ instead of $\an_{\G}(v)$.

A \emph{topological ordering} is an ordering on the vertices such that if $w \in \an_\G(v)$ then $w$ precedes $v$ in the ordering. There might be several topological orderings for any single graph. For an ADMG $\G$, given a subset $W \subseteq \mathcal{V}$, the \emph{induced subgraph} $\G_{W}$ is defined as the graph with vertex set $W$ and edges in $\G$ whose endpoints are both in $W$. Also for operators on an induced subgraph $\G_{W}$, we may denote them by the shorthand (e.g.) $\dis_{W}(v) := \dis_{\G_W}(v)$ if $\G$ is clear. 

For the set of vertices $\mathcal{V}$ in an ADMG $\G$, we associate them with random
variables $(X_v)_{v \in \mathcal{V}}$ which are defined in real finite dimensional  
probability spaces. Let $X_A = (X_v)_{v \in A}$; 
we will sometimes abuse notation by using $A$ to denote both a set of vertices and 
the corresponding collection of variables $X_A$. 

Graphs are associated with conditional independence relations via a separation criterion; in the case of mixed graphs we use \emph{m-separation}, which is defined in Appendix \ref{sec:msep}. Distributions are associated with graphs via different but equivalent Markov properties (lists of independences). We will use the (reduced) \emph{ordered local Markov property}. 

\begin{definition}\label{ancestral set}
If a set $A$ satisfies $\an_\G(A)=A$, then it is an \emph{ancestral set}.

If $A$ is an ancestral set in an ADMG $\G$, and $v$ is a vertex in $A$ such that $\ch_\G(v) \cap A=\emptyset$, then the \emph{Markov blanket} of $v$ with respect to $A$ is defined as:$$\mb_\G(v,A)=\pa_{A}(\dis_{{A}}(v)) \cup \dis_{{A}}(v) \setminus \{v\}.$$
\end{definition}

\begin{definition}\label{local Markov property}
A distribution $P(X_{\mathcal{V}})$ is said to satisfy the \emph{ordered local Markov property} with respect to an ADMG $\G$ if for any ancestral set $A$ and a childless vertex $v$ in $A$, $$X_v \indep X_{A \setminus (\mb(v,A) \cup \{v\})} \mid X_{\mb(v,A)} \quad [P].$$
\end{definition}

It is possible to reduce this ordered local Markov property to a smaller collection 
of ancestral sets, as first shown by \citet{richardlocalmarkov}.

\begin{definition}
For an ancestral set $A$ and a childless vertex $v \in A$, we say that $A$ is \emph{maximal} with respect to $\mb_\G(v,A)$ if, for every ancestral set $B$ such that $A \subseteq B$ and $\mb_\G(v,A) = \mb_\G(v,B)$, we have $A=B$. 

The \emph{reduced ordered local Markov property} is then the ordered local Markov property but restricted to only maximal ancestral sets.
\end{definition}
\subsection{MAGs}

\begin{definition}\label{MAGs}
An ADMG $\G$ is called a \emph{maximal ancestral graph} (MAG), if:
\begin{itemize}
    \item[(i)] for every pair of \emph{nonadjacent} vertices $a$ and $b$, there exists some set $C$ such that $a,b$ are m-separated given $C$ in $\G$ (\emph{maximality});
    \item[(ii)] for every $v \in \mathcal{V}$, $\sib_{\G}(v) \cap \an_{\G}(v) = \emptyset$ (\emph{ancestrality}).
\end{itemize}
\end{definition}

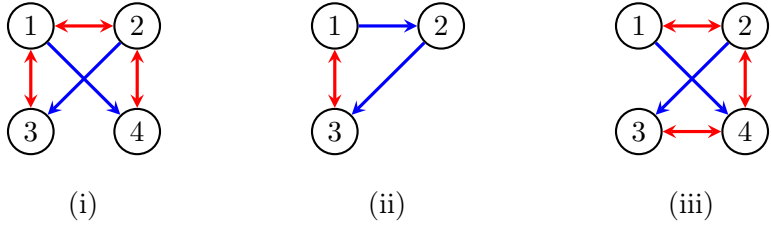
\begin{figure}
\centering
  \begin{tikzpicture}
  [rv/.style={circle, draw, thick, minimum size=6mm, inner sep=0.8mm}, node distance=14mm, >=stealth]
  \pgfsetarrows{latex-latex}
\begin{scope}
  \node[rv]  (1)            {$1$};
  \node[rv, right of=1] (2) {$2$};
  \node[rv, below of=1] (3) {$3$};
  \node[rv, right of=3] (4) {$4$};
  \draw[<->, very thick, red] (1) -- (3);
  \draw[<->, very thick, red] (2) -- (4);
  \draw[<->, very thick, red] (1) -- (2);
  \draw[<-, very thick, color=blue] (3) -- (2);
  \draw[<-,very thick, blue] (4) -- (1);
  \node[below right of=3,xshift=-0.3cm] {(i)};
  \end{scope}
\begin{scope}[xshift = 4cm]
   \node[rv]  (1)           {$1$};
  \node[rv, right of=1] (2) {$2$};
  \node[rv, below of=1] (3) {$3$};
  \draw[<->, very thick, red] (1) -- (3);
  \draw[->, very thick, blue] (1) -- (2);
  \draw[->, very thick, blue] (2) -- (3);
  \node[below right of=3,xshift=-0.3cm] {(ii)};
  \end{scope}
 \begin{scope}[xshift = 8cm]
   \node[rv]  (1)           {$1$};
  \node[rv, right of=1] (2) {$2$};
  \node[rv, below of=1] (3) {$3$};
  \node[rv, right of=3] (4) {$4$};
  \draw[<->, very thick, red] (2) -- (4);
  \draw[<->, very thick, red] (3) -- (4);
  \draw[<-, very thick, color=blue] (3) -- (2);
  \draw[<-,very thick, blue] (4) -- (1);
  \draw[<->, very thick,red] (1) -- (2);
  \node[below right of=3,xshift=-0.3cm] {(iii)};
  \end{scope}
\end{tikzpicture}
\caption{(i) An ancestral graph that is not maximal. (ii) A maximal graph that is not ancestral. (iii) A maximal ancestral graph. }
\label{fig:MAGs}
\end{figure}

For example, the graph in Figure \ref{fig:MAGs}(i) is not maximal because 1 and 2 are not adjacent, but no subset of $\{3,4\}$ will m-separate them; (ii) is not ancestral as 1 is a sibling of 3, which is also one of its descendants; and (iii) is a MAG, in which the only conditional independence is $X_1 \indep X_3 \mid X_2$.

\begin{definition}
Two graphs $\G_{1}$ and $\G_{2}$ with the same vertex sets, are said to be \emph{Markov equivalent} if any m-separation holds in $\G_{1}$ if and only if it holds in $\G_{2}$.
\end{definition}

For every ADMG $\G$, we can project it to a MAG $\G^m$ such that $\G$ is Markov equivalent to $\G^m$, and $\G^m$ preserves the ancestral relations in $\G$ \citep{richardson2002}. Moreover, \citet{hu2020faster} show that the heads and tails defined below (and so the parametrizing set) are preserved through the projection. Hence in this paper, we will only consider MAGs.

\subsection{Heads and tails}\label{sec: heads and tails}

A head is a subset of vertices with a corresponding tail. The concept of heads and tails originates from \citet{richardson2009factorization}, who provides a factorization theorem for ADMGs. 
The heads and tails are analogous to vertices and parents set in DAGs, but are generalized to set of vertices; in particular, the tail of a head $H$ is the Markov blanket of $H$ within the ancestral set $\an_\G(H)$. 
\begin{definition}\label{def: barren}
For a vertex set $W \subseteq \mathcal{V}$, we define the \emph{barren subset} of $W$ as:
$$
\barren_{\G}(W) = \{w \in W:\de_{\G}(w) \cap W = \{w\}\},
$$
which is the subset of $W$ such that each node has no non-trivial descendants in $W$.
\end{definition}

\begin{definition}\label{def: heads and tails}
A vertex set $H$ is called a \emph{head} if:
\begin{itemize}
    \item[(i)] $\barren_{\G}(H) = H$;
    \item[(ii)] $H$ is contained in a single district in $\G_{\an (H)}$.
\end{itemize}
For an ADMG $\G$, we denote the set of all heads in $\G$ by $\mathcal{H}(\G)$. The \emph{tail} of a $\head$ is defined as:
$$\tail_\G(H) = (\dis_{\an(H)}(H) \setminus H) \cup \pa_{\G}(\dis_{\an(H)}(H)).$$
\end{definition}

\begin{remark}
Note that for a head $H$ and each $h \in H$, the set $(H \setminus \{h\}) \cup \tail_\G(H)$ is the Markov blanket of $h$ within $\an_\G(H)$.
\end{remark}

\begin{definition}\label{def: parametrizing sets}
The \emph{parametrizing set} of $\G$, denoted by $\mathcal{S}(\G)$ is defined as:
$$\mathcal{S}(\G) = \{H\cup A:H \in \mathcal{H}(\G)\text{ and } \emptyset \subseteq A \subseteq \tail_\G(H)\}.$$
\end{definition}
In Figure \ref{example: heads and tails}, we give an example for heads and tails, as well as the sets not contained in the parametrizing set.
\begin{figure}

\centering
\begin{tikzpicture}[baseline=-1.5\baselineskip, 
node distance = 7mm,
  start chain = going right,
  rv/.style={circle, draw, thick, minimum size=6mm, inner sep=0.8mm}, node distance=14mm, >=stealth
]

  \pgfsetarrows{latex-latex}
\begin{scope}
  \node[rv]  (1)            {$1$};
  \node[rv, below of=1, xshift=-6mm,yshift=2mm] (2) {$2$};
  \node[rv, below of=1, xshift=6mm,yshift=2mm] (3) {$3$};
  \node[rv, above of=1,yshift=-2mm] (4) {$4$};
  \node[rv, below of=2, xshift=-10mm,yshift=5mm](5){$5$};
  \node[rv, below of=3, xshift=10mm,yshift=5mm](6){$6$};
  \draw[->, very thick, blue] (1) to [bend right=30] (5);
  \draw[->, very thick, blue] (2)  to [bend right=30] (6);
  \draw[->, very thick, blue] (3)  to [bend right=30] (4);
  \draw[beg] (1) -- (3);
  \draw[beg] (2) -- (3);
  \draw[beg] (1) -- (2);
  \draw[beg] (2) -- (5);
  \draw[beg] (3) -- (6);
  \draw[beg] (1) -- (4);
  \end{scope}
\end{tikzpicture}
\hfil
  \small
\begin{tabular}{c|c|c} 
  \toprule
  heads & tails & sets not in $\mathcal{S}(\G) $\\ [0.5ex] 
  \midrule
$\{1\},\{2\},\{3\}$ & \multirow{3}{*}{$\emptyset$} & $\{2,4\},\{3,5\}$\\
 $\{1,2\},\{1,3\},\{2,3\}$ &  & $\{4,5\},\{1,6\}$ \\
 $\{1,2,3\}$   &  &$\{4,6\},\{5,6\}$ \\
 \cmidrule{1-2}
  $\{5\},\{2,5\},\{2,3,5\}$ & $\{1\}$ &$\{1,2,6\},\{1,3,5\}$\\

  $\{6\},\{3,6\},\{1,3,6\}$ & $\{2\}$&$\{1,4,5\},\{1,5,6\}$\\

  $\{4\},\{1,4\},\{1,2,4\}$ & $\{3\}$&$\{2,3,4\},\{2,4,6\}$\\

  $\{3,5,6\}$ & $\{1,2\}$ & $\{2,5,6\},\{3,4,5\}$\\ 

  $\{2,4,5\}$ & $\{1,3\}$ & $\{3,4,6\},\{1,3,4,5\}$\\ 

  $\{1,4,6\}$ & $\{2,3\}$ & $\{2,3,4,5\},\{1,2,5,6\}$\\ 

  $\{4,5,6\}$& $\{1,2,3\}$ & \\ [1ex] 
 \bottomrule
\end{tabular}
\caption{A MAG $\G$ and a table for its heads and tails.}
\label{example: heads and tails}
\end{figure}

For a conditional independence $I = \langle A \indep B \cmid C \rangle$, let 
$$
\bar{\Sset}(I) = \{A' \cup B' \cup C': \emptyset \subset A' \subseteq A , \; \emptyset \subset B' \subseteq B, \; \emptyset \subseteq C' \subseteq C \}
$$ 
be the (\emph{constrained}) sets associated with $I$. These can be thought of as the set of interactions that are constrained to be zero by $I$, within the margin over $A \cup B \cup C$. Proposition \ref{prop:parametrizing set and independence} shows that a set $S$ is \emph{not} in $\Sset(\G)$ if and only if $S \in \bar{\Sset}(I)$ for some conditional independence $I$ entailed by the graph. 

\citet{hu2020faster} has a detailed introduction to the concept of heads and tails, and we recommend it for background reading. They prove that Markov equivalent MAGs have the same parametrizing set.  We will explain this in more detail as we will use this result.

\subsection{Imsets}

We now introduce the framework of imsets and conditional independence of \citet{studeny2006probabilistic}. 
Let $\mathcal{P}(\mathcal{V})$ be the power set of a finite set of variables $\mathcal{V}$. For any three disjoint sets, $A,B,C \subseteq \mathcal{V}$,  we write the triple as $\langle A, B \cmid C \rangle$ and denote the set of all such triples by $\mathcal{T}(\mathcal{V})$.
The set of conditional independence statements under $P$ is denoted as: 
$$
\I_P = \{\langle A,B \cmid C \rangle \in \mathcal{T}(\mathcal{V}) : A \indep B \mid C \, [P] \},
$$ 
and\footnote{Here for simplicity we use $A \indep B \mid C \, [P]$ to represent $X_A \indep X_B \mid  X_C$ under $P$.} this is called the \emph{conditional independence model} induced by $P$.

\begin{definition} \label{semi-elementary imset}
An \emph{imset} is an integer-valued function $u$: $\mathcal{P}(\mathcal{V}) \rightarrow \mathbb{Z}$. The \emph{identifier} function $\delta_A$ of a set $A \subseteq \mathcal{V}$ is an imset, defined as $\delta_{A}(B) = 1$ if $B=A$ and otherwise $\delta_{A}(B) = 0$.

A \emph{semi-elementary imset} $u_{\langle A,B | C\rangle}$ associated with any triple $\langle A,B \cmid C\rangle \in \mathcal{T}(\mathcal{V})$ is defined as: $u_{\langle A,B | C\rangle} = \delta_{A \cup B \cup C}-\delta_{A \cup C}-\delta_{B \cup C}+\delta_{C}$. An \emph{elementary} imset corresponds to the case when both $A$ and $B$ are singletons.

An imset $u$ is \emph{combinatorial} if it can be written as a non-negative integer combination of elementary imsets.
We call an imset $u$ \emph{structural} if there exists $k \in \mathbb{N}$ such that $k \cdot u$ is combinatorial.

We also define the \emph{degree} of a structural imset as the number of elementary 
imsets in the sum for the minimum $n$ that makes $n\cdot u$ combinatorial.
\end{definition}

Note that in the case of $\abs{\mathcal{V}} \leq 4$, every structural imset is also combinatorial, but for $\abs{\mathcal{V}} \geq 5$ this is not true \citep{hemmecke2008three}. We will also give an example in Section \ref{sec: non Markovian example}.

A triple $\langle A, B \cmid C \rangle$ is represented in a structural imset $u$ over $\mathcal{V}$, written as $A \indep B \mid C \, [u]$ if there exists $k \in \mathbb{N}$ such that $k \cdot u - u_{\langle A, B \mid C \rangle}$ is a combinatorial imset over $\mathcal{V}$. The \emph{model induced by} $u$ then is defined as: $$\I_u = \{\langle A, B \mid C\rangle \in \mathcal{T}(\mathcal{V}): A \indep B \mid C \, [u]\}.$$
We say that an imset $u$ is \emph{Markovian} with respect to an independence 
model $\I$ if for every $\langle A,B \mid C \rangle \in \I$,
we have that $\langle A,B \mid C \rangle \in \I_u$.  If the converse holds, we say it is \emph{faithful}, and if it is both Markovian and faithful we say it is \emph{perfectly Markovian} with respect to $\I$.

For a MAG $\G$, we write $A \perp_m B \mid C \, [\G]$ if $A$ and $B$ are m-separated by $C$ in $\G$. The \emph{model induced by} $\G$ is then defined as: $$\I_{\G} = \{\langle A, B \mid C\rangle \in \mathcal{T}(\mathcal{V}): A \perp_m B \mid C \, [\G] \}.$$

\begin{example}
Consider the DAG in Figure \ref{imsetexample}; a combinatorial imset $u$ such that $\I_u = \I_{\G}$ is: $u = u_{\langle 1,2 \rangle}+u_{\langle 4,12 \mid 3 \rangle}$, which has entries:
\begin{center}
\begingroup
\renewcommand{\arraystretch}{1.5} 
\begin{tabular}{c|cccc cccc}
$C$ & $\emptyset$  & $\{1\}$ & $\{2\}$ & $\{1,2\}$ & $\{3\}$ & $\{3,4\}$ & $\{1,2,3\}$ & $\{1,2,3,4\}$\\
\hline
$u(C)$ & $+1$ & $-1$ & $-1$ & $+1$ & $+1$ & $-1$ & $-1$ & $+1$ \\ 
\end{tabular}.
\endgroup
\end{center}

\end{example}

\begin{remark}
Independence models (whether represented by imsets or not) always obey the semi-graphoid axioms, which are listed in Appendix \ref{sec:graphoids}. In the previous example, the 
conditional independences $4 \indep 2 \mid 3$ and $4 \indep 1 \mid 2,3$ can both be 
deduced from $4 \indep 1,2 \mid 3$, and indeed $\I_u$ represents these constraints
as well.

Appendix \ref{sec:graphoids} also lists some other rules, but these only apply to probabilistic independence models under some additional assumptions. Note that we 
restrict to using the semi-graphoids, these additional rules are for discussion 
with related work in Section \ref{sec: related work}.
\end{remark}


\section{Main results} \label{sec:main_res}

In this section we will define the standard imset of MAGs. We begin by reviewing existing results on imsets for DAGs, and then draw a link between them and the parametrizing set of MAGs. This provides the motivation for our definition for the standard imsets of MAGs.

\subsection{Previous work}

For a DAG $\G$ over $\mathcal{V}$, its standard imset is defined as:
\begin{align}
u_{\G}=\delta_{\mathcal{V}}-\delta_{\emptyset}-\sum_{i \in \mathcal{V}} \left\{ \delta_{\{i\} \cup \pa_{\G}(i)}-\delta_{\pa_{\G}(i)}\right\}. \label{DAG imset}    
\end{align}

\citet{studeny2006probabilistic} shows that for a DAG $\G$, $\I_{\G} = \I_{u_{\G}}$.  That is, $u_\G$ is perfectly Markov with respect to $\I_\G$.  We will often just say that $u_\G$ is perfectly Markovian with respect to $\G$, rather than explicitly invoking the list of independences.

\begin{figure}
    \centering
     \begin{tikzpicture}
  [rv/.style={circle, draw, thick, minimum size=6mm, inner sep=0.8mm}, node distance=14mm, >=stealth]
  \pgfsetarrows{latex-latex}
\begin{scope}
  \node[rv,yshift=-1.25cm]  (1)            {$1$};
  \node[rv, above of=1] (2) {$2$};
  \node[rv, right of=1, yshift=0.70cm] (3) {$3$};
  \node[rv, right of=3] (4) {$4$};

  \draw[deg] (1) -- (3);
  \draw[deg] (2) -- (3);
  \draw[deg] (3) -- (4);
  \end{scope}
\begin{scope}[xshift=6cm]
  \node[rv]  (1)            {$1$};
  \node[rv, right of=1] (2) {$2$};
  \node[rv, below of=2] (3) {$3$};
  \node[rv, left of=3] (4) {$4$};
  \draw[<->, very thick, red] (1) -- (2);
  \draw[<->, very thick, red] (4) -- (3);
  \draw[->, very thick, blue] (1) -- (4);
  \draw[deg] (3) -- (2);
  \end{scope}

\end{tikzpicture}
    \caption{(i) A DAG with 4 nodes; (ii) a MAG $\G$ in which there is no topological ordering such that the tail of a head precedes any
vertex in the head.}
    \label{imsetexample}
\end{figure}
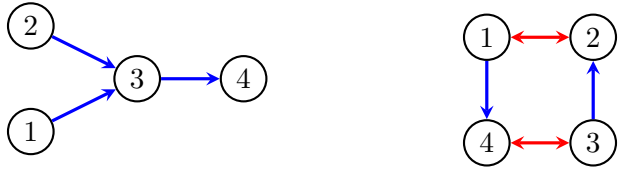

\begin{example}
Consider the DAG $\G$ in Figure \ref{imsetexample}(i), by definition, its standard imset is:
\begin{align*}
    u_{\G} &= \delta_{1234}-\delta_{\emptyset}-(\delta_{34}-\delta_{3})
    -(\delta_{123}-\delta_{12})-(\delta_{2}-\delta_{\emptyset})-(\delta_{1}-\delta_{\emptyset})\\
    &= \left(\delta_{1234}-\delta_{123} - \delta_{34} + \delta_3\right) +
    \left(\delta_{12}-\delta_1-\delta_2 + \delta_\emptyset\right)\\
    &= u_{\langle 4,12 \cmid 3\rangle}+u_{\langle 1,2 \rangle}.
\end{align*}
In the last line, the conditional independences of the semi-elementary imsets are the independences required for the ordered local Markov property of $\G$, that is, given the numerical (and also topological) ordering $$i \indep [i-1] \setminus \pa_\G(i) \mid \pa_\G(i),$$ where $[i] = \{1,2,\ldots,i\}$.
\end{example}

For a DAG $\G$, \citet{studeny2010characteristic} introduce the \emph{characteristic imset} $c_{\G}$, which can be obtained from the standard imset $u_{\G}$ by a one-to-one linear transformation called the M\"obius transform \citep{lauritzen1996graphical}:
\begin{align*}
c_{\G}(S) &= 1- \sum_{T: S \subseteq T \subseteq \mathcal{V}} u_{\G}(T) & \text{and} &&
u_{\G}(T) &= \sum_{S: T \subseteq S \subseteq \mathcal{V}} (-1)^{|V \setminus S|} \left\{1 - c_{\G}(S)\right\}. &
\end{align*}
These transforms are just inclusion and exclusion formulae.
This gives another representation of the equivalence classes of DAGs, and \citep{studeny2010characteristic},
%
provide the following theorem.

\begin{theorem}\label{thm: charac imset for MAGs}
For a DAG $\G$, we have:
\begin{itemize}
    \item[(i)] $c_{\G}(S) \in \{0,1\}$ for each $S \subseteq \mathcal{V}$, and
    \item[(ii)] $c_{\G}(S)=1$ if and only if either $S = \emptyset$ or there exists some $i \in S$ with $S \setminus \{i\} \subseteq \pa_{\G}(i)$.
\end{itemize}
\end{theorem}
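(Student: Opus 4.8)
The plan is to exploit the linearity of the Möbius-type transformation $c_\G(S) = 1 - \sum_{T : S \subseteq T \subseteq \mathcal{V}} u_\G(T)$ together with the single observation that, for any fixed set $A$, $\sum_{T : S \subseteq T} \delta_A(T) = \mathbbm{1}[S \subseteq A]$, since $\delta_A$ is supported on the single set $T = A$. Substituting the standard imset \eqref{DAG imset} term by term then turns $c_\G(S)$ into a signed count of indicators. First I would compute, using $\mathbbm{1}[S \subseteq \mathcal{V}] = 1$ and $\mathbbm{1}[S \subseteq \emptyset] = \mathbbm{1}[S = \emptyset]$,
\[
c_\G(S) = \mathbbm{1}[S = \emptyset] + \sum_{i \in \mathcal{V}} \bigl( \mathbbm{1}[S \subseteq \{i\} \cup \pa_\G(i)] - \mathbbm{1}[S \subseteq \pa_\G(i)] \bigr).
\]

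Next I would simplify each summand by a short case analysis. If $i \notin S$ then the two events $S \subseteq \{i\} \cup \pa_\G(i)$ and $S \subseteq \pa_\G(i)$ coincide, so the summand vanishes. If $i \in S$ then, because a DAG has no self-loops, $i \notin \pa_\G(i)$ forces $\mathbbm{1}[S \subseteq \pa_\G(i)] = 0$, while $\mathbbm{1}[S \subseteq \{i\} \cup \pa_\G(i)] = \mathbbm{1}[S \setminus \{i\} \subseteq \pa_\G(i)]$. Hence each summand equals $\mathbbm{1}[\,i \in S \text{ and } S \setminus \{i\} \subseteq \pa_\G(i)\,]$, and
\[
c_\G(S) = \mathbbm{1}[S = \emptyset] + \#\{ i \in S : S \setminus \{i\} \subseteq \pa_\G(i) \}.
\]

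The crux, and the only place where acyclicity enters, is bounding this count by one in order to establish (i). Suppose two distinct vertices $i, j \in S$ both satisfied the condition. Then $j \in S \setminus \{i\} \subseteq \pa_\G(i)$ and $i \in S \setminus \{j\} \subseteq \pa_\G(j)$, i.e.\ $j \to i$ and $i \to j$, a directed $2$-cycle, contradicting acyclicity. So the count is at most $1$; together with $c_\G(\emptyset) = 1$ this gives $c_\G(S) \in \{0,1\}$ for every $S$, proving (i). I expect this acyclicity step to be the main (though short) conceptual point; the remaining work is the bookkeeping of indicators above.

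Finally, part (ii) reads off immediately: for nonempty $S$ the count is $1$ exactly when some $i \in S$ has $S \setminus \{i\} \subseteq \pa_\G(i)$, which is the stated condition. The one wrinkle to flag is the set $S = \emptyset$, for which the displayed formula gives $c_\G(\emptyset) = 1$ although the right-hand condition in (ii) is vacuously false; this is the usual convention that the characteristic imset is read on nonempty sets (for singletons $S = \{i\}$ the condition holds trivially via $S \setminus \{i\} = \emptyset \subseteq \pa_\G(i)$, and indeed $c_\G(\{i\}) = 1$), so I would restrict the statement of (ii) to $S \neq \emptyset$ or note the convention explicitly.
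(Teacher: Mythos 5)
Your proof is correct. Note that the paper itself gives no proof of this statement: it is quoted from \citet{studeny2010characteristic}, so there is nothing internal to compare against line by line. Your derivation is the natural direct verification: apply the M\"obius-type sum to (\ref{DAG imset}) term by term using $\sum_{T \supseteq S}\delta_A(T) = \mathbbm{1}[S \subseteq A]$, reduce each summand to $\mathbbm{1}[\,i \in S,\ S\setminus\{i\} \subseteq \pa_\G(i)\,]$, and then use acyclicity (no directed $2$-cycle) to show at most one $i \in S$ can qualify. This is exactly the DAG specialization of the computation the paper carries out for MAGs in the proof of Theorem \ref{thm: standard imset}, where each singleton $\{i\}$ is a head with tail $\pa_\G(i)$; there, the "at most one" step is delegated to a cited lemma about heads, whereas you prove it directly with the $2$-cycle argument, which is the cleaner route in the DAG case. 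Your flag about $S = \emptyset$ is also well taken: as literally stated, part (ii) fails for the empty set (where $c_\G(\emptyset) = 1$ but no $i \in S$ exists), and the usual convention is to read the characteristic imset only on sets of size at least one (or two, in the original reference), so noting that restriction is appropriate rather than a defect of your argument.
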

It can be shown that for a DAG $\G$, a nonempty set $S$ is in the \emph{parametrizing set} $\mathcal{S}(\G)$ if and only if $c_{\G}(S) = 1$. We also have the following result about parameterizing sets \citep{hu2020faster}.

\begin{theorem}\label{thm: ME of MAGs}
Let $\G_{1}$ and $\G_{2}$ be two MAGs. Then $\G_{1}$ and $\G_{2}$ are Markov equivalent if and only if $\mathcal{S}(\G_{1})=\mathcal{S}(\G_{2})$.
\end{theorem}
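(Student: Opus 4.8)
The plan is to prove the sharper statement that the parametrizing set and the m\nobreakdash-separation model carry exactly the same information: for every triple $I = \langle A, B \cmid C\rangle \in \mathcal{T}(\mathcal{V})$ one has $I \in \I_\G$ if and only if $\bar{\Sset}(I) \cap \Sset(\G) = \emptyset$. Granting this equivalence the theorem follows immediately. By Proposition \ref{prop:parametrizing set and independence} a set is outside the parametrizing set exactly when it is associated with an entailed independence, so the complement decomposes as $\Sset(\G)^c = \bigcup_{I \in \I_\G} \bar{\Sset}(I)$. Hence if $\G_1$ and $\G_2$ are Markov equivalent, i.e.\ $\I_{\G_1} = \I_{\G_2}$, then $\Sset(\G_1)^c = \Sset(\G_2)^c$ and so $\Sset(\G_1) = \Sset(\G_2)$; conversely, if $\Sset(\G_1) = \Sset(\G_2)$ then the emptiness criterion above reads identically for the two graphs at every triple $I$, whence $\I_{\G_1} = \I_{\G_2}$, which is the definition of Markov equivalence.

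It therefore remains to establish the stated equivalence. The forward implication is immediate from Proposition \ref{prop:parametrizing set and independence}: if $I$ is entailed, then every $S \in \bar{\Sset}(I)$ is a set associated with the entailed independence $I$, and so $S \notin \Sset(\G)$. For the converse I would argue contrapositively and reduce to singletons. Suppose $I = \langle A, B \cmid C\rangle \notin \I_\G$. Since $A$ and $B$ are m\nobreakdash-separated given $C$ precisely when every pair $a \in A$, $b \in B$ is m\nobreakdash-separated given $C$ (immediate from the path definition of m\nobreakdash-separation), there exist $a \in A$ and $b \in B$ that are m\nobreakdash-connected given $C$. Because $\bar{\Sset}(\langle a, b \cmid C\rangle) = \{\{a,b\} \cup C' : C' \subseteq C\} \subseteq \bar{\Sset}(I)$, it is enough to exhibit a single $C' \subseteq C$ with $\{a,b\} \cup C' \in \Sset(\G)$; this set then witnesses $\bar{\Sset}(I) \cap \Sset(\G) \neq \emptyset$.

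This last step is the crux and I expect it to be the main obstacle; the content to prove is that \emph{if $a$ and $b$ are m\nobreakdash-connected given $C$ in a MAG $\G$, then $\{a,b\} \cup C'$ is a parametrizing set for some $C' \subseteq C$}. When $a$ and $b$ are adjacent this is clear, since $\{a,b\}$ is itself a parametrizing set (a bidirected edge makes $\{a,b\}$ a head, a directed edge makes it a head together with one tail vertex), so the real work is in the non-adjacent case. My approach would be to take an m\nobreakdash-connecting path $\pi$ between $a$ and $b$ given $C$ and to read a head off it: every collider on $\pi$ lies in $\an_\G(C)$ and the non-colliders avoid $C$, so one can try to assemble a head $H$ from the barren vertices of $\{a,b\}$ together with the collider structure, and then collect the necessary vertices of $C$ into a subset $C' \subseteq \tail_\G(H)$, yielding $\{a,b\} \cup C' = H \cup C' \in \Sset(\G)$. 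The delicate bookkeeping is in verifying simultaneously the two defining conditions of a head --- that $H$ is barren and lies in a single district of $\G_{\an(H)}$ --- while ensuring that the forced vertices of $C$ genuinely land in the tail; dealing with colliders that are strict ancestors of $C$ rather than elements of $C$, and choosing $\pi$ minimally (shortest, or fewest colliders), is where care is needed, and an induction on $\abs{C}$ or on the length of $\pi$ is the natural device. This construction is essentially the technical core of the argument in \citet{hu2020faster}.
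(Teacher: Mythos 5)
First, note that the paper does not actually prove this statement: Theorem \ref{thm: ME of MAGs} is imported verbatim from \citet{hu2020faster}, so there is no in-paper argument to compare against. Your high-level reduction is nevertheless a faithful reconstruction of how that cited proof is organised: you reduce Markov equivalence to the claim that $I \in \I_\G$ if and only if $\bar{\Sset}(I) \cap \Sset(\G) = \emptyset$, and your forward direction is essentially correct (though not quite ``immediate'' --- for $S = A' \cup B' \cup C'$ with $\abs{A'} > 1$ you must first pick a single pair $a \in A'$, $b \in B'$ and use weak union for m-separation to move $(A' \setminus a) \cup (B' \setminus b)$ into the separating set before Proposition \ref{prop:parametrizing set and independence} applies).

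The genuine gap is the converse, and you have correctly located it: the lemma that $a \not\perp_m b \mid C$ implies $\{a,b\} \cup C' \in \Sset(\G)$ for some $C' \subseteq C$. This does \emph{not} follow from Proposition \ref{prop:parametrizing set and independence} as quoted in the paper: applying that proposition to $S = \{a,b\} \cup C$ only yields that \emph{some} pair $x,y \in S$ (not necessarily $a,b$) is separated by \emph{some} superset of $S \setminus \{x,y\}$, which says nothing about the status of $a \perp_m b \mid C$. So the lemma is strictly stronger than anything available in the paper, and it carries the entire content of the hard direction of the theorem. Your sketch of ``reading a head off an m-connecting path'' is the right instinct but is not yet a proof: the non-colliders on $\pi$ lie outside $C$ entirely and need not be ancestors of $\{a,b\}$, the colliders need only lie in $\an_\G(C)$ rather than in $C$, and it is not clear which vertices of $\pi$ end up in the head, which in the tail, and which must be discarded and replaced by a detour through a different path. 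The standard resolution (and the route taken in \citealp{hu2020faster}) is to first convert the m-connecting path into a \emph{collider-connecting} path all of whose non-endpoints lie in $\an_\G(\{a,b\} \cup C)$, and then take $C'$ to be the intersection of $C$ with the relevant heads and tails; an induction on path length or on $\abs{C}$ is indeed needed, but as written your argument stops exactly where the work begins. As a proof proposal it is an accurate map of the terrain; as a proof it is incomplete.
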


Inspired by the relation in DAGs between standard imsets and characteristic imsets, and the consistency between characteristic imsets and parametrizing sets, we extend the definition of the characteristic imset to MAGs using the parametrizing set. The previous M\"obius transform used on $u_{\G}$ to obtain $c_{\G}$ is linear and thus has an inverse form (we will see later), we then apply this inverse formulae on $1 - c_\G$, to obtain the `standard' imsets, $u_\G$.  
We will show that for many MAGs, these `standard' imsets are combinatorial and perfectly Markovian with respect to the original graph $\G$.

\begin{example}\label{standard imset motivation}
Consider the graph $\G$ in Figure \ref{imsetexample}(ii), which is an example from \citet{richardson2009factorization}. 
There are two absent edges, each of them corresponding to a conditional independence; specifically $1 \indep 3$ and $2 \indep 4 \cmid 1,3$.  Clearly we want a standard imset $u_{\G}$ such that it is the sum of the elementary imsets for these two independences. So we want 
\begin{align*}
 u_{\G} &=u_{\langle 1,3 \rangle} + u_{\langle 2,4 | 13\rangle}\\
 &= \delta_{1234}-\delta_{134}-\delta_{123}+2\delta_{13}-\delta_{3}-\delta_{1}+\delta_{\emptyset}.
\end{align*}

The fourth term $\delta_{13}$ with coefficient 2 is quite interesting. Graphically these vertices are nonadjacent;
however, $\{1\}$ is the tail for the head $\{3,4\}$ that contains 3, and conversely $\{3\}$ is the tail for the head $\{1,2\}$ that contains 1. 
\end{example}

\subsection{Standard imsets of MAGs}

The fact that the parametrizing set and the characteristic imset of DAGs agree motivates our definition for the characteristic imset of MAGs, then we work backwards to deduce the form of the standard imset of MAGs.

\begin{definition}\label{def: characteristic imset}
For a MAG $\G$, define its characteristic imset $c_\G$ as $c_\G(S) = 1$ if $S \in \Sset(\G)$ and $c_\G(S) = 0$ otherwise. Moreover, we define $c_\G(\emptyset) = 1$ by convention and in order to preserve the characteristic imset for DAG models.
\end{definition}

\begin{theorem}\label{thm: standard imset}
For a MAG $\G$ with characteristic imset $c_\G$, let 
$$
u_{\G}(T)= \sum_{S: T \subseteq S \subseteq \mathcal{V}} (-1)^{\abs{S \setminus T}} (1-c_{\G}(S)).
$$
Then 
$$
u_{\G}=  \delta_{\mathcal{V}}-\delta_{\emptyset}-\sum_{H \in \mathcal{H}(\G)} \sum_{W \subseteq H} (-1)^{\abs{H \setminus W}} \delta_{W \cup \tail(H)}. 
$$
\end{theorem}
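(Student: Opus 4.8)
The plan is to read the displayed formula
$u_{\G}(B)=\sum_{A:\,B\subseteq A\subseteq\mathcal V}(-1)^{\abs{A\setminus B}}(1-c_{\G}(A))$
as the (upper) M\"obius inversion of the map $A\mapsto 1-c_{\G}(A)$ on the subset lattice of $\mathcal V$, so that $u_\G$ is equivalently characterised by the inverse relation $\sum_{T:\,S\subseteq T\subseteq\mathcal V}u_{\G}(T)=1-c_{\G}(S)$ for every $S$ (this mirrors the DAG transform $c_{\G}(S)=1-\sum_{T\supseteq S}u_{\G}(T)$ recalled above, and forces the convention that the empty set counts as a parametrizing set, so $c_{\G}(\emptyset)=1$). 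Since M\"obius inversion is a linear bijection, it suffices to verify that the claimed closed form, say $w=\delta_{\mathcal V}-\delta_{\emptyset}-\sum_{H\in\mathcal H(\G)}\sum_{W\subseteq H}(-1)^{\abs{H\setminus W}}\delta_{W\cup\tail(H)}$, satisfies $\sum_{T\supseteq S}w(T)=1-c_{\G}(S)$ for all $S\subseteq\mathcal V$. I would then compute this upper sum term by term.

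The term $\delta_{\mathcal V}$ contributes $1$ for every $S$, and $-\delta_{\emptyset}$ contributes $-1$ exactly when $S=\emptyset$. For the double sum I would interchange the order of summation and, for each fixed head $H$, evaluate $\sum_{W\subseteq H:\,W\cup\tail(H)\supseteq S}(-1)^{\abs{H\setminus W}}$. The constraint $W\cup\tail(H)\supseteq S$ is precisely $S\setminus\tail(H)\subseteq W\subseteq H$; writing $W_0=S\setminus\tail(H)$, the elementary binomial identity $\sum_{W_0\subseteq W\subseteq H}(-1)^{\abs{H\setminus W}}=[\,W_0=H\,]$ collapses the inner sum to $1$ when $S\setminus\tail(H)=H$ and to $0$ otherwise. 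Using that heads and their tails are disjoint (indeed $\tail(H)\subseteq\an_{\G}(H)\setminus H$), this condition is equivalent to $H\subseteq S\subseteq H\cup\tail(H)$, that is, to $S$ being a parametrizing set generated by the head $H$. Hence $\sum_{T\supseteq S}w(T)=1-[\,S=\emptyset\,]-N(S)$, where $N(S)$ counts the heads $H\in\mathcal H(\G)$ for which $S=H\cup A$ with $A\subseteq\tail(H)$.

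It then remains to match this against $1-c_{\G}(S)=1-[\,S\in\Sset(\G)\,]$, i.e.\ to show $[\,S=\emptyset\,]+N(S)=[\,S\in\Sset(\G)\,]$. The empty set is dispatched by the convention $\emptyset\in\Sset(\G)$ together with $N(\emptyset)=0$; for $S\neq\emptyset$ the identity reduces to $N(S)=[\,S\in\Sset(\G)\,]$, which holds exactly when every set admits at most one head, i.e.\ $N(S)\le 1$. This head-uniqueness is the crux of the argument and the step I expect to be the main obstacle. I would establish it by showing that whenever $S=H\cup A$ with $H$ a head and $A\subseteq\tail(H)$, the head is recovered as $H=\barren_{\G}(S)$: each $h\in H$ remains barren in $S$ because $A\subseteq\tail(H)\subseteq\an_{\G}(H)$ together with the barrenness of $H$ forbid any descendant of $h$ from lying in $A$, while each $a\in A$ fails to be barren in $S$ because every tail vertex---whether a district-mate of $H$ in $\G_{\an(H)}$ or a parent of that district---is a strict directed ancestor of some vertex of $H\subseteq S$ and hence has a strict descendant in $S$. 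Consequently $\barren_{\G}(S)=H$ is determined by $S$ alone, giving $N(S)\le 1$ and completing the verification; alternatively this uniqueness of the head may be quoted directly from the head/tail theory of \citet{richardson2009factorization,hu2020faster}.
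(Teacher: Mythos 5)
Your proof is correct and follows essentially the same route as the paper's: both verify that the zeta transform $\sum_{T \supseteq S} w(T)$ of the claimed closed form collapses, via the binomial identity $\sum_{W_0 \subseteq W \subseteq H}(-1)^{\abs{H\setminus W}} = \mathbbm{1}_{W_0 = H}$, to counting the heads $H$ with $H \subseteq S \subseteq H \cup \tail(H)$, and then invoke uniqueness of such a head to identify this count with $c_{\G}(S)$. The only differences are that you prove head-uniqueness directly (showing $H = \barren_{\G}(S)$, which is a correct and complete argument) where the paper cites Lemma 4.3 of \citet{MLL}, and that you explicitly flag the $S=\emptyset$ convention that the paper's displayed computation silently absorbs; both are refinements rather than departures.
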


We will refer to $u_\G$ as the `standard' imset of the MAG $\G$, the quotes being in recognition of the fact that it does not always define the model (see Example \ref{example:5-cycle}).
 
\begin{proof}
We can obtain $c_\G$ via the following transformation from $u_\G$:
$$c_{\G}(S) = 1-\sum_{T: S \subseteq T \subseteq \mathcal{V}} u_{\G}(T).$$
Then we can prove the theorem by showing that, after substituting
$$
u_{\G} = \delta_{\mathcal{V}}-\delta_{\emptyset}-\sum_{H \in \mathcal{H}(\G)} \sum_{W \subseteq H} (-1)^{\abs{H \setminus W}} \delta_{W \cup \tail(H)}
$$ 
to the RHS of the previous equality, the result (say $c_{\G}^*$) is the same as 
$c_{\G}$.

Note that $c_{\G}^*(\emptyset) = 1$, as the sum of coefficients in $u_{\G}$ is 0. Suppose $S \neq \emptyset$ and let $\mathbbm{1}_P$ denote an indicator function, taking value 1 if $P$ is true and 0 otherwise. Then
\begin{align*}
c_{\G}^*(S) &= 1-\sum_{T: S \subseteq T \subseteq \mathcal{V}} \left\{\delta_{\mathcal{V}}(T)-\delta_{\emptyset}(T)-\sum_{H \in \mathcal{H}(\G)} \sum_{W \subseteq H} (-1)^{\abs{H \setminus W}} \delta_{W \cup \tail(H)}(T)  \right\}\\
&=\sum_{T: S \subseteq T \subseteq \mathcal{V}} \sum_{H \in \mathcal{H}(\G)} \sum_{W \subseteq H} (-1)^{\abs{H \setminus W}} \delta_{W \cup \tail(H)}(T) \\
&= \sum_{H \in \mathcal{H}(\G)} \sum_{T: S \subseteq T \subseteq \mathcal{V}} \sum_{W \subseteq H} (-1)^{\abs{H \setminus W}} \delta_{W \cup \tail(H)}(T) \\
&= \sum_{H \in \mathcal{H}(\G)} \sum_{T: S \subseteq T \subseteq \mathcal{V}} (-1)^{\abs{H \setminus T}} \mathbbm{1}_{\tail(H) \subseteq T \subseteq H \cup \tail(H)} \\
&= \sum_{H \in \mathcal{H}(\G)} \mathbbm{1}_{S \subseteq H \cup \tail(H)} \sum_{K \subseteq H \setminus S} (-1)^{\abs{K}+\abs{H \setminus S}} \\
&= \sum_{H \in \mathcal{H}(\G)} \mathbbm{1}_{H \subseteq S \subseteq H \cup \tail(H)}.
\end{align*}
The fifth equality can be seen in the following way: for each $S$ and $H$ we are counting the number of supersets $T \supseteq S$ such that $\tail_\G(H) \subseteq T \subseteq H \cup \tail_\G(H)$, multiplied by some constant $(-1)^{\abs{H \setminus T}}$. The indicator function comes from the fact that if $S$ is not a subset of $H \cup \tail_\G(H)$ then there is no such $T$. Now $S$ can be partitioned into two sets $S = S_{1} \dot\cup S_{2}$ which are subsets of $H$ and $\tail_\G(H)$, respectively. For any set $T \supseteq S$ such that $\tail_\G(H) \subseteq T \subseteq H \cup \tail_\G(H)$, It can be partitioned into three sets: $\tail_\G(H) \supseteq S_{2}$, $S_{1}$ and $K$. The last one $K$ is (any) subset of $H \setminus S$ and the first two sets are deterministic. Moreover, $(-1)^{\abs{H \setminus T}}$ is then equal to $(-1)^{\abs{(H \setminus S) \setminus K}}=(-1)^{\abs{K}+\abs{H \setminus S}}$.

Now the result follows from the proof of Lemma 4.3 in \citet{MLL} which shows that for each set $A$ there is at most one head $H$ such that $H \subseteq A \subseteq H \cup \tail_\G(H)$.
\end{proof}

If $\G$ is a DAG, then $u_\G$ in Theorem \ref{thm: standard imset} agrees with (\ref{DAG imset}).

\begin{remark}
Notice that the form of standard imsets in Theorem \ref{thm: standard imset} considers a tail with subsets of its head, in an opposite manner to the parametrizing set where we consider a head with subsets of its tail. One can check that this is how $2\delta_{13}$ is obtained in Example \ref{standard imset motivation}.
\end{remark}

\begin{corollary}
For two MAGs $\G$ and $\mathcal{H}$, they are Markov equivalent if and only if $u_{\G} = u_{\mathcal{H}}$.
\end{corollary}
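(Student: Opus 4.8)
The plan is to prove the corollary by combining Theorem \ref{thm: ME of MAGs} with the explicit formula for $u_\G$ given in Theorem \ref{thm: standard imset}. The key observation is that the standard imset $u_\G$ is, by construction, a one-to-one linear transformation of the characteristic imset $c_\G$, and $c_\G$ is itself a direct encoding of the parametrizing set $\Sset(\G)$. So the strategy is to chain together three equivalences: Markov equivalence $\iff$ equal parametrizing sets $\iff$ equal characteristic imsets $\iff$ equal standard imsets.

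First I would establish the reverse direction of the final link, namely that $c_\G$ determines $u_\G$ and vice versa. The formula in Theorem \ref{thm: standard imset} gives $u_\G(B) = \sum_{A: B \subseteq A \subseteq \mathcal{V}} (-1)^{\abs{A \setminus B}} (1 - c_\G(A))$, which is precisely the M\"obius inversion of the transformation $c_\G(S) = 1 - \sum_{T: S \subseteq T \subseteq \mathcal{V}} u_\G(T)$ used in the proof of that theorem. Since M\"obius inversion over the subset lattice is an invertible linear map, we have $u_{\G_1} = u_{\G_2}$ if and only if $c_{\G_1} = c_{\G_2}$. I would state this explicitly, perhaps noting that the two displayed formulas are mutual inverses so the correspondence is genuinely bijective and not merely one-directional.

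Next I would connect $c_\G$ to $\Sset(\G)$: by Definition \ref{def: characteristic imset}, $c_\G$ is simply the indicator function of the parametrizing set, i.e.\ $c_\G(S) = 1$ exactly when $S \in \Sset(\G)$. Hence $c_{\G_1} = c_{\G_2}$ if and only if $\Sset(\G_1) = \Sset(\G_2)$, which is immediate since an indicator function and the set it indicates carry the same information. Finally I would invoke Theorem \ref{thm: ME of MAGs}, which states that $\G_1$ and $\G_2$ are Markov equivalent if and only if $\Sset(\G_1) = \Sset(\G_2)$. Stringing these together yields $\G_1, \G_2$ Markov equivalent $\iff \Sset(\G_1) = \Sset(\G_2) \iff c_{\G_1} = c_{\G_2} \iff u_{\G_1} = u_{\G_2}$, completing the proof.

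I do not expect any genuine obstacle here, since all three equivalences are either definitional or already proved earlier in the paper; the corollary is essentially a bookkeeping consequence of Theorem \ref{thm: standard imset} and Theorem \ref{thm: ME of MAGs}. The only point requiring a modicum of care is justifying the invertibility of the M\"obius transformation between $u_\G$ and $c_\G$ over the full power set $\mathcal{P}(\mathcal{V})$, so I would make sure to state that the map $c \mapsto u$ given by the formula in Theorem \ref{thm: standard imset} is exactly the inverse of the map $u \mapsto c$, rather than leaving the bijection implicit. With that remark in place the argument is a short chain of iff's and needs no further computation.
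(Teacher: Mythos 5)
Your proposal is correct and follows exactly the same route as the paper's proof: invoke Theorem \ref{thm: ME of MAGs} together with the fact that the M\"obius transformation between $u_\G$ and $c_\G$ is one-to-one, with $c_\G$ being the indicator of $\Sset(\G)$ by definition. The paper states this in one line; you have merely spelled out the chain of equivalences more explicitly.
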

\begin{proof}
This follows from Theorem \ref{thm: ME of MAGs} and the fact that the transformation between the standard imset and the characteristic imset is one-to-one.
\end{proof}

We also have the following result for induced
subgraphs, proven in Appendix \ref{sec:main_res_res}.

\begin{proposition} \label{prop:induced_subgraph}
Let $\G$ be a MAG, and suppose that for some ancestral subset $A \subset V$ we have that $u_{\G_A}$ is not Markovian with respect to $\G_A$.  Then the model $u_\G$ is not perfectly Markovian with respect to $\G$.
\end{proposition}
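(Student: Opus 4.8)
The plan is to use faithfulness to reduce to producing a single witness. By the main result (iii), $\I_{u_\G}\subseteq\I_\G$ holds for every MAG, so $u_\G$ being perfectly Markovian with respect to $\G$ is equivalent to $\I_\G=\I_{u_\G}$; likewise $u_{\G_W}$ is Markovian with respect to $\G_W$ exactly when $\I_{\G_W}\subseteq\I_{u_{\G_W}}$. Hence it suffices to exhibit a triple lying in $\I_\G\setminus\I_{u_\G}$. I would start from the hypothesis, which hands us a triple $\langle A,B\mid C\rangle\in\I_{\G_W}\setminus\I_{u_{\G_W}}$ with $A,B,C\subseteq W$, and attempt to manufacture from it a corresponding defect of $u_\G$.

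The key structural input I would establish first is that parametrizing sets are hereditary under induced subgraphs: for every $S\subseteq W$, $S\in\Sset(\G)$ if and only if $S\in\Sset(\G_W)$, equivalently $c_\G$ and $c_{\G_W}$ agree on $\mathcal{P}(W)$. The natural route is the ``sandwich'' characterization extracted in the proof of Theorem \ref{thm: standard imset}, that $S\in\Sset(\G)$ precisely when there is a head $H$ with $H\subseteq S\subseteq H\cup\tail_\G(H)$, unique by the cited Lemma of \citet{MLL}. One then checks that confining everything to $W$ preserves the relevant heads and tails: if $H\subseteq W$ then $\barren_\G(H)=H$ forces $\barren_{\G_W}(H)=H$ since descendants only shrink. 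The genuinely delicate part here is the district/ancestor condition, because $\an_\G(H)$ and the district of $H$ are computed in $\G$ and may recruit vertices of $\mathcal{V}\setminus W$; the MAG ancestrality axiom $\sib_\G(v)\cap\an_\G(v)=\emptyset$ is what prevents bidirected connectors from being ancestors, and I expect this is exactly what keeps the district structure of $H$ intact upon passing to $\G_W$. Via Proposition \ref{prop:parametrizing set and independence} this links the independence structures of $\G$ and $\G_W$ through a common family of constrained sets $\bar{\Sset}(\cdot)$.

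With heredity in hand, I would split the target into its two requirements: to exhibit a failure of $u_\G$ I need a triple that (a) lies in $\I_\G$ and (b) is not represented by $u_\G$. For (b) the mechanism is a restriction principle for the represented models: for triples inside $W$, $\langle A,B\mid C\rangle\in\I_{u_\G}$ should imply $\langle A,B\mid C\rangle\in\I_{u_{\G_W}}$, obtained by taking a combinatorial decomposition $k\,u_\G-u_{\langle A,B\mid C\rangle}=\sum_j u_{\langle a_j,b_j\mid c_j\rangle}$ and projecting it onto $W$, using heredity of $c_\G$ to control the $W$-part; contrapositively the failing triple is then unrepresentable by $u_\G$ as well.

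The main obstacle — and where I expect the real work — is requirement (a): taking an induced subgraph can only create independences, so $\I_\G\cap\mathcal{T}(W)\subseteq\I_{\G_W}$ with the inclusion typically \emph{strict} (already $a\to c\to b$ restricted to $\{a,b\}$ shows this). Thus the witnessing triple of $\G_W$ may live in $\I_{\G_W}\setminus\I_\G$, so it is not itself an m-separation of $\G$ and $u_\G$ is under no obligation to represent it; naively lifting it fails. Overcoming this is the crux: I would use heredity together with Proposition \ref{prop:parametrizing set and independence} to re-associate the witness set $A\cup B\cup C\notin\Sset(\G)$ with a genuine independence $I'\in\I_\G$ (whose conditioning set may now draw on $\mathcal{V}\setminus W$), and then argue, via the same projection of combinatorial certificates, that $u_\G$ still fails to represent $I'$. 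Reconciling the enlarged conditioning set with the certificate — i.e.\ showing the defect located inside $\G_W$ cannot be ``repaired'' by the extra vertices — is the step I anticipate being the most technical, and it is where the interaction between the structural-imset membership condition and the combinatorics of heads and tails must be pinned down.
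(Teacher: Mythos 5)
Your overall reduction (use faithfulness of $u_\G$ to reduce the claim to exhibiting one triple in $\I_\G\setminus\I_{u_\G}$) is reasonable, and you correctly spot a subtlety that the paper's own two-line argument glosses over, namely that $\I_{\G_W}$ can strictly exceed $\I_\G\cap\mathcal{T}(W)$, so a witness for $\G_W$ need not itself be an independence of $\G$. However, the lemma on which your whole construction rests is false. You claim heredity of parametrizing sets, $\Sset(\G_W)=\Sset(\G)\cap\mathcal{P}(W)$ for all $S\subseteq W$. Only the inclusion $\Sset(\G_W)\subseteq\Sset(\G)\cap\mathcal{P}(W)$ holds (this is exactly what the paper proves as a corollary of Proposition \ref{prop:msep_subgraph}); the reverse inclusion fails. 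Concretely, take $\G$ on $\{1,2,3,u\}$ with edges $1\leftrightarrow u$, $u\leftrightarrow 2$, $2\leftrightarrow 3$ and $u\to 3$, and $W=\{1,2,3\}$. One checks $\G$ is a MAG, and $H=\{1,2,3\}$ is a head of $\G$ (it is barren, and lies in the single district of $\G_{\an(H)}=\G$), so $\{1,2,3\}\in\Sset(\G)$; but $\G_W$ consists of the lone edge $2\leftrightarrow 3$ plus the isolated vertex $1$, so $\{1,2,3\}\notin\Sset(\G_W)$ (indeed $1\perp_m 2\mid 3$ in $\G_W$ while no separation of $1,2$ given a superset of $\{3\}$ exists in $\G$). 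The delicate point you flagged --- that districts of $H$ in $\G_{\an(H)}$ may be routed through $\mathcal{V}\setminus W$ --- is not prevented by ancestrality; it genuinely happens, and it is precisely what breaks your claim.

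This failure propagates to both halves of your plan. Your step (b), the ``restriction principle'' obtained by projecting a combinatorial certificate $k\,u_\G-u_{\langle A,B\mid C\rangle}=\sum_j u_{I_j}$ onto $W$, has no defined meaning for non-ancestral $W$: the restriction of $u_\G$ to $\mathcal{P}(W)$ is not $u_{\G_W}$ (the paper's remark notes this identification only for ancestral $W$, via the marginalization result of \citet{studeny2006probabilistic}), and without the two characteristic imsets agreeing on $\mathcal{P}(W)$ there is nothing to ``control the $W$-part'' with. Your step (a), re-associating the witness set with an independence of $\G$ via Proposition \ref{prop:parametrizing set and independence}, likewise needs $A\cup B\cup C\notin\Sset(\G)$, which you can only deduce from $A\cup B\cup C\notin\Sset(\G_W)$ by the false direction of heredity. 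The paper's proof deliberately avoids individual represented triples altogether: it works at the level of non-overlapping covers of constrained sets (Proposition \ref{how characteristic imset help}), using only the true inclusion $\Sset(\G_W)\subseteq\Sset(\G)\cap\mathcal{P}(W)$ together with the fact that m-separations of $\G$ restrict to m-separations of $\G_W$ with smaller conditioning sets (Proposition \ref{prop:msep_subgraph}), so that any non-overlapping decomposition for $\G$ would induce one for $\G_W$ --- contradicting the hypothesis. If you want to salvage your route, you would need to replace the equality by this one-sided inclusion and rebuild the argument contrapositively, which essentially lands you back on the paper's argument.
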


Before we proceed, we would like to clarify some of the terms and notations used in this paper. Originally, the standard imset of a DAG refers to the fact that it is the simplest imset that is perfectly Markovian w.r.t.~the graph \citep{studeny2006probabilistic}. 
However for MAGs, the `standard' imset we defined is not necessarily perfectly Markovian, and we use the quotes to refer to the imset obtained by applying the M\"obius inversion formula to the 0-1 imset defined by the parametrizing set. 

If we use $c$ with some subscript to denote an imset then it is in the characteristic form, i.e.~obtained by applying the M\"obius transform to some imset $u$. 
If $u$ is structural then it induces a model, and we associate the same model with both $c$ and $u$.

\subsection{Choice of the characteristic imset}\label{sec:choice of the characteristic imset}

Obviously, there are many imset representations for an independence model. The 
reason we choose to firstly work with a 0-1 characteristic imset is because if 
its corresponding `standard' imset \emph{is} perfectly Markovian w.r.t.\ the graph, 
then it is also the minimal representation. We proceed to 
prove this fact by studying the characteristic imset form $c$ of any structural imset $u$ that represent the model.

Example \ref{exm:6-cycle} gives a graph for which the `standard' imset is
not structural.  Fortunately, such graphs seem to be comparatively unusual.

We begin with a simple observation on the linear relationship between structural imset and its characteristic imset.

\begin{lemma} \label{lemma:linear relation between u and c}
For a structural imset $u$ and its characteristic imset $c$, we have that
$c(S) = 1$ if and only if $u$ contains no independence with $S$ as a constrained set, and otherwise $c(S) \leq 0$.
%
\end{lemma}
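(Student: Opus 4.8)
The plan is to analyse the defining transformation
$$
c(S) = 1 - \sum_{T: S \subseteq T \subseteq \mathcal{V}} u(T)
$$
together with its M\"obius-inversion inverse $u(B) = \sum_{A: B \subseteq A} (-1)^{\abs{A \setminus B}}(1 - c(A))$, and to connect the quantity $1 - c(S) = \sum_{T \supseteq S} u(T)$ to the representability of independences whose constrained set is $S$. First I would fix a set $S$ and unpack what $\sum_{T \supseteq S} u(T)$ means in terms of the elementary imsets whose non-negative combination witnesses that $u$ is structural. Recall that a triple $\langle A, B \cmid C \rangle$ is represented in $u$ exactly when $k \cdot u - u_{\langle A,B\cmid C\rangle}$ is combinatorial for some $k$, and that the constrained sets $\bar{\Sset}(I)$ are precisely the sets appearing with the pattern of a semi-elementary imset. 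The key observation to establish is that $S$ is a constrained set of some independence represented by $u$ if and only if the coefficient-sum $\sum_{T \supseteq S} u(T)$ is strictly positive, i.e.\ $1 - c(S) \geq 1$, which is equivalent to $c(S) \leq 0$.

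The main steps, in order, are as follows. I would first show the easy direction: if $u$ contains no independence having $S$ as a constrained set, then no semi-elementary contribution `hits' $S$ from above, so the signed sum $\sum_{T \supseteq S} u(T)$ collapses to the value forced by the top and bottom terms alone, giving $c(S) = 1$. Concretely, for the DAG-style and MAG-style imsets the only sets $T \supseteq S$ carrying nonzero $u(T)$ come from the semi-elementary summands, and the absence of any independence constraining $S$ removes all of them. Second, for the converse I would argue contrapositively: if some independence $I$ represented by $u$ has $S \in \bar{\Sset}(I)$, then the semi-elementary imset $u_I$ contributes a strictly positive net amount to $\sum_{T \supseteq S} u_I(T)$, and since $u$ is a non-negative structural combination of such pieces (after scaling by $k$), this positivity cannot be cancelled, forcing $1 - c(S) \geq 1$ and hence $c(S) \leq 0$. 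The integrality of $u$ then rules out values strictly between $0$ and $1$, pinning $c(S)$ to either exactly $1$ or a non-positive integer.

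The main obstacle I expect is the cancellation bookkeeping in the converse direction: when $u$ is a sum of several semi-elementary imsets, different pieces contribute to $\sum_{T \supseteq S} u(T)$ with competing signs, so I must verify that the contributions cannot conspire to cancel to zero when at least one genuine constraint on $S$ is present. The cleanest way to handle this is to evaluate $\sum_{T \supseteq S} u_{\langle A,B\cmid C\rangle}(T)$ directly for a single semi-elementary imset and show it equals $\mathbbm{1}[S \subseteq C] - \mathbbm{1}[S \subseteq A\cup C] - \mathbbm{1}[S \subseteq B \cup C] + \mathbbm{1}[S \subseteq A \cup B \cup C]$, which is non-negative and equals $1$ exactly when $S$ is a constrained set of that triple. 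Summing over the (non-negatively weighted) pieces then makes the total a sum of non-negative integers, so it is positive iff some piece constrains $S$; this monotone-by-construction structure is precisely what defeats the feared cancellation and yields both the characterisation and the sign claim $c(S) \leq 0$.
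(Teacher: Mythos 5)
Your proposal is correct and follows essentially the same route as the paper: both scale $u$ to a non-negative integer combination of elementary imsets, evaluate $\sum_{T \supseteq S} u_{\langle a,b \mid C\rangle}(T)$ as the indicator that $S$ is a constrained set of that triple, and use the resulting non-negativity to rule out cancellation and force $c(S)$ to be either exactly $1$ or a non-positive integer. The explicit indicator identity in your final paragraph is precisely the computation the paper performs (written there as $\sum_{C' \subseteq C} \delta_{abC'}$).
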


The following result from \citet{hu2020faster} will be useful.

\begin{proposition}\label{prop:parametrizing set and independence}
Let $\G$ be a MAG with vertex set $\mathcal{V}$. For
a set $S \subseteq \mathcal{V}$, we have $S \notin \Sset(\G)$ if and only if there are two
vertices $a, b$ in $S$ such that we can m-separate them by
a set $C$ such that $a, b \notin C$ with $S \subseteq C \cup \{a, b\}$.
\end{proposition}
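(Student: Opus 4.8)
My plan is to read the right‑hand condition as saying that $S$ is a \emph{constrained set} of an elementary separation: there exist $a,b\in S$ (the two vertices being separated must lie in $S$, since $S\subseteq C\cup\{a,b\}$ must contain them) and a set $C$ with $a,b\notin C$, $S\setminus\{a,b\}\subseteq C$, and $a\perp_m b\mid C\,[\G]$. I would also use throughout the standard fact that an m‑separation can be checked in the subgraph induced by $\an_\G(\{a,b\}\cup C)$, so that only paths lying inside an ancestral set need be considered, and that a path all of whose interior vertices lie in the conditioning set is active only if it is a \emph{collider path} (every interior vertex a collider), since interior non‑colliders in $C$ block it while interior colliders in $C$ are automatically ancestors of $C$.

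For the direction $S\notin\Sset(\G)\Rightarrow$ separation, I would work with $B=\barren_\G(S)$, using that $\an(S)=\an(B)$ and that each vertex of $B$ is a sink of $\G_{\an(S)}$ (a child inside $\an(S)$ would give it a further descendant in $S$ or a directed cycle). I then split into cases. If $B$ is not contained in a single district of $\G_{\an(B)}$, pick $a,b\in B$ in distinct districts; otherwise $B$ is a head, and since $S\notin\Sset(\G)$ we cannot have $S\subseteq B\cup\tail(B)$, so pick $a\in S\setminus(B\cup\tail(B))$ and any $b\in B$. In both cases set $C=\an(S)\setminus\{a,b\}\supseteq S\setminus\{a,b\}$. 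Any active path then has all interior vertices as colliders, so its interior is a bidirected path lying in one district of $\G_{\an(S)}$ and its two end edges point into that district. In the first case childlessness of $a,b$ forbids directed end edges, forcing $a,b$ into a common district and contradicting the choice; in the second, the end edge at $a$ places $a$ in $\dis_{\an(B)}(B)$ or in $\pa(\dis_{\an(B)}(B))$, contradicting $a\notin B\cup\tail(B)$. Hence $a\perp_m b\mid C$, giving the required separation.

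The reverse implication—$S\in\Sset(\G)\Rightarrow$ no such separation, the contrapositive of the ``if'' part—is the real obstacle. Writing $S=H\cup A$ with $H\in\mathcal H(\G)$ and $A\subseteq\tail(H)$, I must exhibit, for \emph{every} $a,b\in S$ and \emph{every} admissible $C\supseteq S\setminus\{a,b\}$, an m‑connecting path. The plan is to assemble one from the bidirected (district) connectivity of $H$ inside $\G_{\an(H)}$ together with directed tail edges, arranging that every collider used is an ancestor of a head vertex of $S\setminus\{a,b\}$ (hence in $\an(C)$) while every non‑collider used stays outside $C$. The difficulty is that two requirements pull against each other: an all‑collider path survives enlargement of $C$ but needs its colliders to be ancestors of $C$, which can fail when $H=\{a,b\}$ and $A$ is small; whereas a path containing a directed descent avoids that but its non‑colliders can be swallowed by a large $C$.

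I expect reconciling these for every admissible $C$ simultaneously to be the main technical content, and it is exactly here that MAG ancestrality, $\sib_\G(v)\cap\an_\G(v)=\emptyset$, is indispensable: it forbids a sibling encountered on a district path from also being an ancestor of an endpoint, which is what lets me always reroute the connecting path into the correct collider/non‑collider pattern. Concretely I would drive the path through the district of $H$ in $\G_{\an(H)}$ and, when an endpoint is a tail vertex, descend along directed tail edges into that district, checking in each configuration that the colliders introduced are ancestors of head vertices lying in $C$; the case $H=\{a,b\}$ with $A$ nearly empty is the one I would treat most carefully, since there ancestrality is precisely what converts a blocked bidirected route into an admissible path with a directed segment.
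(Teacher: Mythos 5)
First, a point of comparison: the paper does not actually prove this proposition --- it is imported from \citet{hu2020faster} and stated without proof --- so there is no in-paper argument to measure you against. On its own terms, your forward direction ($S \notin \Sset(\G)$ implies the existence of such a separation) is essentially correct and is the standard argument: reducing to the ancestral margin $\G_{\an(S)}$ with $C = \an(S)\setminus\{a,b\}$ forces any active path to be a collider path, and your two cases ($\barren_\G(S)$ split across districts of $\G_{\an(S)}$, versus $\barren_\G(S)$ a head $B$ with some $a \in S \setminus (B \cup \tail_\G(B))$) both close correctly, including the implicit nonadjacency of the chosen $a,b$.

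The gap is the converse, and you say so yourself: you stop exactly where the work begins. For $S = H \cup A \in \Sset(\G)$ you must exhibit, for every pair $a,b \in S$ and \emph{every} $C$ with $S\setminus\{a,b\} \subseteq C \subseteq \mathcal{V}\setminus\{a,b\}$, an m-connecting path; describing the tension between the collider and non-collider requirements is not a resolution of it. The missing ingredients are (i) the lemma of \citet{hu2020faster} (building on \citealp{richardson2009factorization}) that any two vertices of $H \cup \tail_\G(H)$ are joined by a collider path whose interior vertices all lie in $\dis_{\an(H)}(H)$, hence are ancestors of $H$; and (ii) a rerouting argument for those interior colliders $v$ that are ancestors only of $a$ or $b$ within $H$ (so that $H\setminus\{a,b\}\subseteq C$ does not license them): take a directed path from $v$ down to, say, $a$; if it meets $C$ then $v \in \an(C)$ and the collider is fine as it stands, and otherwise splice the reversed directed path into the walk, turning $v$ and the intermediate vertices into non-colliders guaranteed to avoid $C$, then prune to a genuine path. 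Your appeal to ancestrality alone does not substitute for this case analysis, so the ``if'' direction remains unproven. Note also that the degenerate case $S=\{a,b\}$, where $C$ may be empty and the connecting path must be a single edge, is precisely the maximality assumption on $\G$ and should be isolated as the base case rather than folded into the general discussion.
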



\begin{corollary}\label{cor:imset must agree on S(G)}
Consider a MAG $\G$. For any structural imset $u$ such that $\I_{u}=\I_{\G}$, its characteristic imset $c$ must be an integer valued vector and satisfy the following:
\begin{itemize}
    \item[(i)] if $S \in \Sset(\G)$, then $c_\G(S) = 1$;
    \item[(ii)] if $S \notin \Sset(\G)$, then $c_\G(S) \leq 0$. 
\end{itemize}
\end{corollary}

We now argue for the choice of the 0-1 characteristic imset, which Corollary 
\ref{cor:imset must agree on S(G)} shows is unique for any given graph.  There is no way to totally order combinatorial imsets, though the degree does provide a partial ordering.  It is clear that, if our imset is combinatorial and perfectly Markovian, then it is also an \emph{imset of the smallest  degree} \citep{studeny2006probabilistic}, because there is exactly one independence for each missing edge.  We will see in Example \ref{example:5-cycle} that some MAG models cannot be represented in this way.



We can define a partial order on imsets defining the same model based on the coefficients of their corresponding characteristic imsets.  That is, given two structural imsets $u$ and $u'$ with corresponding characteristic imsets $c$ and $c'$, we say that $u$ \emph{is smaller than} $u'$ if $c(S) \geq c'(S)$ for every $S \subseteq \mathcal{P}(V)$.  Lemma \ref{lemma:linear relation between u and c} makes clear that this is a useful definition.
By Corollary \ref{cor:imset must agree on S(G)}, if the 0-1 imset's standard imset induces the same model as the graph, it is both an imset of smallest degree and the unique minimal such imset according to this partial order.  
%

%

\subsection{Standard imsets of simple MAGs}\label{sec: std imset of simple MAGs}

Unlike the standard imset of a DAG, it is often very hard to 
tell how to decompose this `standard' imset as a sum of semi-elementary imsets, 
because the size and number of heads seems arbitrary. However, if we 
restrict the size of heads to two, we can show that for any such MAG $\G$, its standard imset is 
always both combinatorial and perfectly Markovian with respect to $\G$.

\begin{definition}\label{simple MAG}
A MAG $\G$ is said to be \emph{simple} if $\G$ contains no head with size 
more than two.
\end{definition}

In Appendix \ref{sec: density of simple MAGs}, we consider how dense simple 
MAGs are as a subset of all MAGs by: (i) listing how many Markov equivalence 
classes contain at least one simple MAG; and (ii) simulating MAGs and counting 
proportions of them being Markov equivalent to some simple MAGs. Note that 
simple MAGs can have large districts and not be Markov equivalent 
to a graph with smaller districts; see Figure \ref{fig:simple_MAG_large_districts}.

It is not hard to show that for a MAG, $\{i,j\}$ is a head if and only if $i \leftrightarrow j$. 
Moreover, consider any  $j \leftrightarrow i \leftrightarrow k$.  If the MAG is simple, then there must be ancestral relations between $j$ and $k$ so that $\{i,j,k\}$ does not form a head (this is necessary and sufficient as shown by Lemma 7.3 of \citealp{MLL}). Hence for each vertex $i$, there is a total ordering on heads that contain $i$, and we now show that their tails are nested within one another. 

\begin{lemma}
Suppose $\G$ is a simple MAG with a given topological ordering. For every vertex $i$ and all heads of size two, $\{i,j_s\}$, with $j_1 < \dots < j_k < i$, we have $\pa_\G(i) = \tail_\G(i) \subseteq \tail_\G(i,j_1) \subseteq  \dots \subseteq \tail_\G(i,j_k)$.
\end{lemma}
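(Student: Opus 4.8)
The plan is to prove the chain one link at a time, reducing everything to two facts: enlarging the ancestral set can only enlarge the relevant district, and the bidirected neighbours of $i$ form an ancestral chain compatible with the topological order. First I would settle the leftmost equality $\tail_\G(i) = \pa_\G(i)$. By ancestrality no sibling of $i$ lies in $\an_\G(i)$, so $i$ has no bidirected neighbour inside $\G_{\an(i)}$ and hence $\dis_{\an(i)}(i) = \{i\}$. Substituting into the definition of the tail gives $\tail_\G(i) = (\{i\} \setminus \{i\}) \cup \pa_\G(\{i\}) = \pa_\G(i)$.

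For the remaining inclusions I would write $D_s := \dis_{\an(\{i,j_s\})}(\{i,j_s\}) = \dis_{\an(\{i,j_s\})}(i)$ (the two agree since $i \leftrightarrow j_s$ places them in one district), and set $D_0 := \dis_{\an(i)}(i) = \{i\}$ so the base case folds into the general step. The key structural input to establish next is that $j_s \in \an_\G(j_t)$ whenever $s < t$. Since $\{i,j_s\}$ and $\{i,j_t\}$ are size-two heads we have $j_s \leftrightarrow i \leftrightarrow j_t$; simplicity forbids $\{i,j_s,j_t\}$ from being a head, and as noted just before the lemma this forces an ancestral relation between $j_s$ and $j_t$. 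The topological ordering fixes its direction: if instead $j_t \in \an_\G(j_s)$ we would need $j_t < j_s$, contradicting $j_s < j_t$. Transitivity of ancestry together with $\an(\{i,j_s\}) = \an(i) \cup \an(j_s)$ then yields $\an(\{i,j_s\}) \subseteq \an(\{i,j_{s+1}\})$.

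With this the district nesting is immediate: because $\G_{\an(\{i,j_s\})}$ is an induced subgraph of $\G_{\an(\{i,j_{s+1}\})}$, every bidirected path in the former survives in the latter, so $D_s \subseteq D_{s+1}$. I would also record that $j_{s+1} \notin D_s$: by ancestrality $j_{s+1} \in \sib(i)$ gives $j_{s+1} \notin \an(i)$, while $j_s \in \an(j_{s+1})$ rules out $j_{s+1} \in \an(j_s)$, so $j_{s+1} \notin \an(\{i,j_s\}) \supseteq D_s$. Finally I split $\tail(i,j_s) = (D_s \setminus \{i,j_s\}) \cup \pa_\G(D_s)$ into its two pieces. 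Any $x \in D_s \setminus \{i,j_s\}$ lies in $D_{s+1}$, differs from $i$, and differs from $j_{s+1}$ (as $j_{s+1} \notin D_s$), so $x \in D_{s+1} \setminus \{i,j_{s+1}\}$; and $\pa_\G(D_s) \subseteq \pa_\G(D_{s+1})$ by monotonicity of $\pa_\G$. Both pieces sit inside $\tail(i,j_{s+1}) = (D_{s+1}\setminus\{i,j_{s+1}\}) \cup \pa_\G(D_{s+1})$, giving $\tail(i,j_s) \subseteq \tail(i,j_{s+1})$; running $s = 0,\dots,k-1$ produces the full chain (the $s=0$ link being $\tail_\G(i) \subseteq \tail_\G(i,j_1)$).

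The main obstacle is pinning down the ancestral chain among the $j_s$ and aligning its direction with the topological order; the rest is monotonicity of $\dis$ and $\pa_\G$. I would take care to verify that simplicity genuinely excludes $\{i,j_s,j_t\}$ as a head — the only alternative to an ancestral relation would be failure of the single-district condition, which cannot occur here since $i$ is bidirected to both $j_s$ and $j_t$ — and to confirm that $\pa_\G(D_{s+1}) \subseteq \tail(i,j_{s+1})$ holds directly from the definition of the tail, so that no stray $i$ or $j_{s+1}$ needs to be excluded from $\pa_\G(D_s)$.
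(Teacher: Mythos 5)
Your proof is correct. The paper states this lemma without giving a proof, and your argument is precisely the one its preceding discussion gestures at: ancestrality gives $\dis_{\an(i)}(i)=\{i\}$ and hence $\tail_\G(i)=\pa_\G(i)$; simplicity plus the single-district property (guaranteed by $j_s \leftrightarrow i \leftrightarrow j_t$) forces an ancestral relation between $j_s$ and $j_t$, which the topological order directs as $j_s \in \an_\G(j_t)$; and the resulting nesting $\an(\{i,j_s\}) \subseteq \an(\{i,j_{s+1}\})$ makes the districts, and therefore the tails, nested by monotonicity of $\dis$ and $\pa_\G$. Your care in checking $j_{s+1}\notin D_s$ and that the tail definition imposes no exclusions on $\pa_\G(\dis)$ closes the only points where a gap could arise.
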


\begin{example}\label{exm:non-trivial simple MAG}
Consider the simple MAG $\G$ in Figure \ref{A simple MAG}. The sets $\{7,8\}$ and $\{6,8\}$ are the heads of size two associated with the vertex 8. Their tails are $\{1,2,3,4,5,6\}$ and $\{1,3,4,5\}$ respectively. Note that $\tail_\G(\{6,8\}) \subset \tail_\G(\{7,8\})$. Also, we have $\pa_\G(8) = \{1,5\}$, which is a subset of both the other tails.

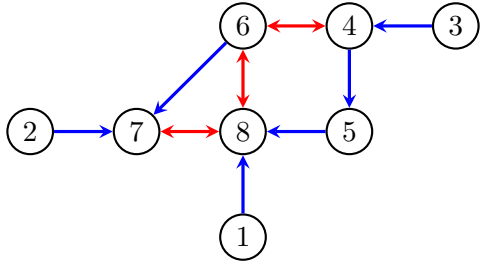
\begin{figure}
\centering
  \begin{tikzpicture}
  [rv/.style={circle, draw, thick, minimum size=6mm, inner sep=0.8mm}, node distance=14mm, >=stealth]
  \pgfsetarrows{latex-latex}
\begin{scope}
  \node[rv]  (1)            {$1$};
  \node[rv, above of=1] (8) {$8$};
  \node[rv, left of=8] (7) {$7$};
  \node[rv, left of=7] (2) {$2$};
  \node[rv, right of=8] (5) {$5$};
  \node[rv, above of=8] (6) {$6$};
  \node[rv, right of=6] (4) {$4$};
  \node[rv, right of=4] (3) {$3$};
  \draw[<->, very thick, red] (7) -- (8);
  \draw[<->, very thick, red] (6) -- (8);
  \draw[<->, very thick, red] (6) -- (4);
  \draw[->, very thick, blue] (1) -- (8);
  \draw[deg] (2) -- (7);
  \draw[deg] (5) -- (8);
  \draw[deg] (4) -- (5);
  \draw[deg] (3) -- (4);
  \draw[deg] (6) -- (7);
  \end{scope}

\end{tikzpicture}
\caption{A simple MAG used in Example \ref{exm:non-trivial simple MAG}}
\label{A simple MAG}
\end{figure}

One can check that its standard imset may be written as:
\begin{align*}
    u_{\G} &= u_{\langle 8,2 | 13456\rangle}+u_{\langle 8,34 | 15\rangle}+u_{\langle 7,1345 | 26\rangle}+u_{\langle 6,125 | 34\rangle}\\
    &\phantom{=}+u_{\langle 6,3\rangle}+u_{\langle 5,123 | 4\rangle}
    +u_{\langle 4,12 | 3\rangle}+u_{\langle 3,12\rangle}+u_{\langle 1,2\rangle}.
\end{align*}

Let us focus on the semi-elementary imsets that contain 8. The conditional independence between vertex 8 and vertex 2 can be seen in the following way: after marginalizing $\{7\}$ (the remaining seven vertices still form an ancestral set), $\{2\}$ would be outside of the Markov blanket of vertex 8 which is $\tail_\G(\{6,8\}) = \{1,3,4,5,6\}$, and this independence is implied by the ordered local Markov property. Similarly, if one further marginalizes $\{6\}$, then $\{3,4\}$ are not in the Markov blanket of vertex 8, which is just $\pa_\G(8) = \{1,5\}$, and this corresponds to $8 \indep 3,4 \mid 1,5$. 
\end{example}

Thus for simple MAGs with a given topological ordering, we can obtain conditional independences for each vertex $v$ by sequentially marginalizing the lesser member of each head of size two that contains $v$, and using the ordered local Markov property. See details in the proof of the following theorem.

\begin{theorem}\label{thm: standard imset with head size less 3}
For a MAG $\G$ and its standard imset $u_{\G}$, if $\G$ is simple then $\I_{u_{\G}} = \I_{\G}$. 
\end{theorem}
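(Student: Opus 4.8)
The plan is to prove the two inclusions of $\I_{u_\G}=\I_\G$ separately, using the general result (item~(iii) of the introduction) that $\I_{u_\G}\subseteq\I_\G$ whenever $u_\G$ is well defined. It therefore suffices to prove the Markovian direction $\I_\G\subseteq\I_{u_\G}$, and I would do this by exhibiting an explicit list $\mathcal{L}$ of conditional independences that are all valid in $\G$ and satisfy (a) $u_\G=\sum_{I\in\mathcal{L}}u_I$ as an identity of imsets, and (b) $\langle\mathcal{L}\rangle=\I_\G$, where $\langle\cdot\rangle$ denotes semi-graphoid closure. Granting (a), each $u_I$ is a semi-elementary imset and hence combinatorial, so $u_\G-u_I=\sum_{I'\in\mathcal{L}\setminus\{I\}}u_{I'}$ is again combinatorial; thus every $I\in\mathcal{L}$ is represented in $u_\G$, giving $\mathcal{L}\subseteq\I_{u_\G}$. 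Since $\I_{u_\G}$ is a semi-graphoid, (b) then yields $\I_\G=\langle\mathcal{L}\rangle\subseteq\I_{u_\G}$, as required.

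To build $\mathcal{L}$ I would fix a topological ordering and, for each vertex $i$, let $\{i,j_1\},\dots,\{i,j_{k_i}\}$ be the size-two heads whose larger element is $i$, with $j_1<\dots<j_{k_i}<i$. The preceding Lemma supplies the nested tails $\pa_\G(i)=T_0\subseteq T_1\subseteq\dots\subseteq T_{k_i}$ with $T_s=\tail_\G(\{i,j_s\})$; setting $M_0=\pa_\G(i)$ and $M_s=\{j_s\}\cup T_s$, simplicity together with the Lemma makes these Markov blankets nest, $M_0\subseteq M_1\subseteq\dots\subseteq M_{k_i}$. Beginning from the whole vertex set and successively deleting the siblings $j_{k_i},\dots,j_1$ keeps $i$ childless in a shrinking ancestral set whose induced Markov blanket passes through exactly $M_{k_i},\dots,M_0$; applying the ordered local Markov property at each stage and retaining only the newly acquired independence gives the increments $I_s^{(i)}\colon i\indep \bigl(M_s\setminus(M_{s-1}\cup\{j_s\})\bigr)\mid M_{s-1}$ for $s=1,\dots,k_i$, together with the top statement $i\indep \bigl(\mathcal{V}\setminus(\{i\}\cup\mb_\G(i,\mathcal{V}))\bigr)\mid \mb_\G(i,\mathcal{V})$ from the ordered local Markov property on the full set. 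Each of these is a genuine m-separation in $\G$, so $\mathcal{L}\subseteq\I_\G$.

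For identity (a) I would induct on $\abs{\mathcal{V}}$, peeling off the last vertex $n$. Since $n$ is childless and barren it only ever occurs as a collider, so marginalizing it adds no edges and $\G'=\G_{\mathcal{V}\setminus\{n\}}$ is the latent projection; moreover $\an_\G(H)=\an_{\G'}(H)$ for every $H\not\ni n$, so $\G'$ is again a simple MAG and the heads of $\G$ avoiding $n$ are precisely those of $\G'$, with identical tails. Consequently $u_\G-u_{\G'}=(\delta_{\mathcal{V}}-\delta_{\mathcal{V}\setminus\{n\}})-(\delta_{\{n\}\cup T_0}-\delta_{T_0})-\sum_{s=1}^{k_n}u_{\langle n,j_s\mid T_s\rangle}$, and the nested-blanket structure makes these $\delta$-terms telescope exactly into $\sum_{I\in\mathcal{L}_n}u_I$, where $\mathcal{L}_n$ is the list of increments attached to $n$. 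Combined with the inductive expression for $u_{\G'}$ this gives $u_\G=\sum_{I\in\mathcal{L}}u_I$.

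Completeness (b) I would also treat inductively: the inductive hypothesis gives $\langle\mathcal{L}(\G')\rangle=\I_{\G'}$, which is the restriction of $\I_\G$ to $\mathcal{V}\setminus\{n\}$ because $\G'$ is the marginal model, and the remaining independences of $\G$ are exactly those involving $n$, generated from the ordered local Markov statements for $n$. The heart of the argument is to show that these statements are recovered, under contraction and weak union, from the short telescoping list $\mathcal{L}_n$; I expect this to be the main obstacle, because it is precisely where the head-size-two hypothesis is essential --- for larger heads the analogous increments overlap, the characteristic imset leaves the $\{0,1\}$ range, and the list no longer generates the full model. A subsidiary fact that must be pinned down along the way is that in a simple MAG the global Markov blanket $\mb_\G(n,\mathcal{V})$ coincides with $M_{k_n}$ (equivalently, the tail of the largest head containing $n$ already absorbs the extended district of $n$), which is what makes the top increment mesh with the head terms in the telescoping of the previous paragraph.
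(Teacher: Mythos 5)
Your overall strategy coincides with the paper's: the same telescoping list of independences obtained from the nested tails $\pa_\G(i) \subseteq \tail_\G(i,j_1) \subseteq \dots \subseteq \tail_\G(i,j_k)$, the same identity $u_\G = \sum_{I} u_I$ over that list, the same observation that this makes each listed independence represented in $u_\G$, and essentially the same route to $\I_{u_\G} \subseteq \I_\G$ (the paper does it self-containedly via faithfulness of MAG models and Studen\'y's Lemma 6.1 rather than citing the later general result, but this is immaterial). Your reorganisation as an induction peeling off the last vertex is cosmetic.

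However, there is a genuine gap at exactly the point you flag yourself: the completeness step, i.e.\ that the short list $\mathcal{L}_n$ of $k_n+1$ statements generates, under the semi-graphoid axioms, the independence $n \indep A \setminus (\mb_\G(n,A)\cup\{n\}) \mid \mb_\G(n,A)$ for \emph{every} ancestral set $A$ in which $n$ is childless --- of which there are exponentially many, most with Markov blankets that are not among your $M_0,\dots,M_{k_n}$. Writing ``I expect this to be the main obstacle'' is an accurate diagnosis but not a proof; without it you have only shown $\mathcal{L} \subseteq \I_{u_\G}$, not $\I_\G \subseteq \I_{u_\G}$. In the paper this step is the content of Theorem \ref{thm:head and tail Markov property}, proved by a double induction (over vertices in topological order and over the sizes of ancestral sets with a fixed maximal element), whose key move is: given an ancestral set $A$ with $A\cup\{v\}$ also ancestral, combine the independence for $A\cup\{v\}$ with the local Markov statement for $v$ in $(A\cup\{v\})\setminus\{i\}$ (available by the induction on earlier vertices), contract, marginalise $v$, and then weak-union using one of the listed telescoping increments to shrink the conditioning set from $\mb(i,A\cup\{v\})\setminus\{v\}$ down to $\mb(i,A)$. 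That argument is nontrivial and is where the simplicity hypothesis does its work (for simple MAGs the power DAG of heads containing $i$ is a chain, so one increment per head suffices); your proposal needs this lemma, or an equivalent argument, spelled out before the theorem is established. A secondary loose end is your ``subsidiary fact'' that $\mb_\G(n,[n]) = \{j_{k_n}\}\cup\tail_\G(n,j_{k_n})$, which also requires a short argument (that $\{n,j_{k_n}\}$ is the barren of $\dis_{\G_{[n]}}(n)$), but that is minor by comparison.
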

\begin{proof}
See Appendix \ref{Proof of simple mag Theorem}.
\end{proof}

\begin{corollary}\label{extendstandardimset}
For a MAG $\G$, if there exists a simple MAG $\mathcal{H}$ which is Markov equivalent to $\G$, then $\I_{u_{\G}} = \I_{\G}$.
\end{corollary}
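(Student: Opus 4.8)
The plan is to reduce the claim to Theorem \ref{thm: standard imset with head size less 3} by transporting that result across the Markov equivalence. First I would record the two immediate consequences of $\G$ and $\mathcal{H}$ being Markov equivalent. On the graph side, Markov equivalence is by definition the statement that the two graphs entail exactly the same m-separations, so the induced independence models coincide, $\I_\G = \I_\mathcal{H}$. On the imset side, the corollary following Theorem \ref{thm: standard imset} (which itself rests on Theorem \ref{thm: ME of MAGs}: equal parametrizing sets force equal characteristic imsets, and the characteristic-to-standard transformation is a bijection) gives $u_\G = u_\mathcal{H}$, and hence $\I_{u_\G} = \I_{u_\mathcal{H}}$, since the model induced by an imset depends only on the imset itself.

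The second and final step is to invoke simplicity of $\mathcal{H}$. Since by hypothesis $\mathcal{H}$ has no head of size greater than two, Theorem \ref{thm: standard imset with head size less 3} applies directly to $\mathcal{H}$ and yields $\I_{u_\mathcal{H}} = \I_\mathcal{H}$. Chaining the three identities then closes the argument:
$$\I_{u_\G} = \I_{u_\mathcal{H}} = \I_\mathcal{H} = \I_\G,$$
which is exactly the assertion.

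I do not anticipate a genuine obstacle here, as all of the real work has already been carried out in Theorem \ref{thm: standard imset with head size less 3} and in the fact that Markov equivalent MAGs share a common standard imset. The only point that deserves a moment's care is the middle equality $\I_{u_\G} = \I_{u_\mathcal{H}}$: this is not a statement about $\G$ or $\mathcal{H}$ separately, but simply reflects that $u_\G$ and $u_\mathcal{H}$ are literally the same integer-valued function on $\mathcal{P}(\mathcal{V})$, so their induced independence models agree by the definition of $\I_u$. It is worth remarking that the simple MAG $\mathcal{H}$ is only an auxiliary object whose existence we assume; it need never be constructed, since the conclusion $\I_{u_\G} = \I_\G$ refers only to $\G$.
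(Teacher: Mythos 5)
Your argument is correct and is exactly the intended one: the paper leaves this corollary without an explicit proof precisely because it follows by chaining the corollary to Theorem \ref{thm: standard imset} (Markov equivalent MAGs have identical standard imsets) with Theorem \ref{thm: standard imset with head size less 3} applied to the simple representative $\mathcal{H}$. Nothing is missing.
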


\subsection{Examples where the imset is not perfectly Markovian}\label{sec: non Markovian example} 

In general---as the following examples show---the `standard' imset we have defined is not structural, and even if it is structural, $u_{\G}$ may not be perfectly Markovian with respect to the model induced by the graph.

\begin{figure}
    \centering
     \begin{tikzpicture}
  [rv/.style={circle, draw, thick, minimum size=6mm, inner sep=0.8mm}, node distance=14mm, >=stealth]
  \pgfsetarrows{latex-latex}
\begin{scope}
  \node (0) {};
    \node[rv]  (1) at (18:12.5mm)        {$1$};
  \node[rv]  (2) at (90:12.5mm)            {$2$};
\node[rv]  (3) at (162:12.5mm)            {$3$};
\node[rv]  (4) at (234:12.5mm)            {$4$};
\node[rv]  (5) at (306:12.5mm)            {$5$};

  \draw[beg] (1) -- (2);
  \draw[beg] (2) -- (3);
  \draw[beg] (3) -- (4);
  \draw[beg] (4) -- (5);
  \draw[beg] (1) -- (5);
  \node[below of=0, yshift=-5mm] {(i)};
  \end{scope}
\begin{scope}[xshift=4cm, yshift=0.5mm]
  \node (0) {};
  \node[rv]  (1) at (0:12.5mm)            {$1$};
  \node[rv]  (2) at (60:12.5mm)            {$2$};
\node[rv]  (3) at (120:12.5mm)            {$3$};
\node[rv]  (4) at (180:12.5mm)            {$4$};
\node[rv]  (5) at (240:12.5mm)            {$5$};
\node[rv]  (6) at (300:12.5mm)            {$6$};

  \draw[beg] (1) -- (2);
  \draw[beg] (2) -- (3);
  \draw[beg] (3) -- (4);
  \draw[beg] (4) -- (5);
  \draw[beg] (6) -- (5);
  \draw[beg] (6) -- (1);
  \node[below of=0, yshift=-5.5mm] {(ii)};
  \end{scope}

 \begin{scope}[xshift=7cm, yshift=6.6mm]
  \node[rv]  (1)            {$6$};
  \node[rv, right of=1] (2) {$1$};
  \node[rv, right of=2] (3) {$2$};
  \node[rv, below of=3](4){$3$};
  \node[rv, below of=2] (5) {$4$};
  \node[rv, below of=1] (6) {$5$}; 

  \draw[beg] (1) -- (2);
  \draw[beg] (2) -- (3);
  \draw[beg] (3) -- (4);
  \draw[beg] (2) -- (5);
  \draw[beg] (4) -- (5);
  \draw[beg] (6) -- (5);
  \draw[beg] (6) -- (1);
  \node[below of=5, yshift=2.5mm] {(iii)};
  \end{scope}

\end{tikzpicture}
    \caption{(i) a bidirected 5-cycle; (ii) a bidirected 6-cycle; (iii) a bidirected 6-cycle with an additional edge.}
    \label{fig:counterexample graph}
\end{figure}
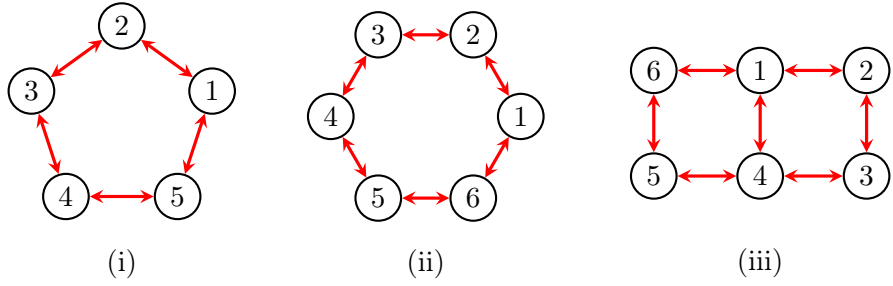

\begin{example}\label{example:5-cycle}

Consider the 5-cycle $\G$ with bidirected edges in Figure \ref{fig:counterexample graph}. Its `standard' imset is given by 
\begin{align*}
    u_{\G} &= u_{\langle 1,3 \mid 4\rangle}+u_{\langle 2,4 \mid 5 \rangle}+u_{\langle 3,5 \mid 1 \rangle}+u_{\langle 4,1 \mid 2 \rangle}+u_{\langle 5,2 \mid 3\rangle}\\ 
    &= u_{\langle 1,3 \mid 5\rangle}+u_{\langle 2,4 \mid 1\rangle}+u_{\langle 3,5 \mid 2 \rangle}+u_{\langle 4,1 \mid 3\rangle}+u_{\langle 5,2 \mid 4\rangle}.
\end{align*}
Any (strictly) conditional independence that holds in the graph is contained in $\I_{u_{\G}}$; however, marginal independences such as $1 \indep 3$ are not in the imset.

Note, however, that if we assume the underlying distribution is strictly positive  then we are able to deduce that any marginal independence in the graph is in $\I_{u_{\G}}$; by noting that (for example) $1 \indep 3 \mid 4$ and $1 \indep 4 \mid 3$ are both in $\I_{u_{\G}}$, and can be combined by using the intersection graphoid provided that the density is positive (See Appendix \ref{sec:more definition}), to obtain $1 \indep 3,4$.
It turns out that for MAGs on at most 5 vertices, this graph is the only one such that $u_{\G}$ is not perfectly Markovian with respect to $\G$.

As no marginal independences are represented in $\I_{u_\G}$ for the bidirected 5-cycle, this shows that for some
MAGs, if we want a structural imset $u_{\G}$ that is perfectly Markovian with respect 
to the graph, it is inevitable that there is some overlap between the sets associated 
with the independence decomposition of $u_{\G}$.  This also shows that the minimum 
degree of a model defining the 5-cycle is 6, even though there are only 5 pairs of 
vertices that are not adjacent; this in turn demonstrates the existence of MAG 
models that can only be defined by having a list of independences that involves 
a repeated pair of vertices.

Note also that this model does not have a unique minimal representation using the partial order from Section \ref{sec:choice of the characteristic imset}.  Indeed, there are at least five different such representations, obtained by adding $u_{\langle i,i+2 \rangle}$ to our `standard' imset for some $i \in \{1,\ldots,5\}$ (and using arithmetic modulo 5).
\end{example}

\begin{example} \label{exm:6-cycle}
Consider the bidirected 6-cycle $\G$ shown in Figure \ref{fig:counterexample graph}(ii), and $u_{\G}$, which is too long to list here.  Write $u_{\G}=\sum_{A \in \mathcal{P}(\mathcal{V})} \alpha_{A} \delta_{A}$, where $\alpha_A$ are coefficients for each identifier imset.  The largest disconnected sets are of size 4, and $\sum_{\abs{A}=4} \alpha_{A}=9$ and $\sum_{\abs{A}=3} \alpha_{A}=-22$; It is easy to show that when we add semi-elementary imset to an imset, if the semi-elementary imset contributes to the term with largest set size, we will subtract at most two to the coefficient of the terms with set size exactly one smaller than the largest one. But 22 is larger than $18 = 2 \cdot 9$. Hence this imset cannot be neither combinatorial or structural. 
Indeed, we have used a linear program to show that it is neither of these things.
\end{example}

\begin{example}

Consider Figure \ref{fig:counterexample graph} (iii). This graph has a `standard' imset that is not perfectly Markovian with respect to $\G$, and it is also not combinatorial.
However, it \emph{is} structural, because:
\begin{align*}
    2u_\G &= \ei{1,3} + \ei{1,3|56} +\ei{1,5} + \ei{1,5|23} +\ei{2,4} +\ei{2,4|56}+\\ 
    &\quad+ \ei{2,5|13} +\ei{2,5|46} +\ei{2,6} +\ei{2,6|35}+\ei{3,5} +\ei{3,5|26} + \\
    &\quad+\ei{3,6|24} +\ei{3,6|15} +\ei{4,6}+\ei{4,6|23}.
\end{align*}
This constitutes another example of an imset that is structural but not combinatorial, and it arises in a much more natural way that the one given by \citet{hemmecke2008three}.
\end{example}

\subsection{Relating to scoring criteria} \label{sec:scoring_main}

Here we explain how to use imsets to provide a consistent scoring criteria. For DAGs, the maximum likelihood part of the BIC (defined in Appendix \ref{sec:scoring_app}) can be shown to be the empirical entropy of $X_{\mathcal{V}}$, minus the inner product between the standard imset and the empirical entropy vector (we demonstrate this in Appendix \ref{sec:scoring_app} for discrete variables). This inner product is explained in the following. 

For random variables $X_{\mathcal{V}}$ with density function $p$, define the entropy ${\sf H} (X_{\mathcal{V}})$ as the expectation of $-\log p(X_{\mathcal{V}})$, i.e.~$\mathbb{E}[-\log p(X_{\mathcal{V}})]$. For three random variables $X_A$, $X_B$ and $X_C$, the inner product between the semi-elementary imset $u_{\langle A, B \mid C \rangle}$ and the entropy vector, whose entries correspond to the entropy $\sf{H}$ of every subset of $\mathcal{V}$, is the mutual information between $X_A, X_B$ given $X_C$; that is, ${\sf H}(X_{ABC})-{\sf H}(X_{AC})-{\sf H}(X_{BC})+{\sf H}(X_C)$. This quantity is always non-negative, and is zero if and only if the independence $A \indep B \mid C$ holds under $p$. 

Hence for DAGs, BIC scores can be interpreted as the discrepancy for a list of independences from the ordered local Markov property plus penalty terms for model complexity. It follows that we can do something similar for simple MAGs, since if $u_\G$ is perfectly Markovian w.r.t.\ the graph, this inner product (suitably penalized) provides a valid score. In fact, we prove in Appendix \ref{sec:scoring_app} that if $\I_{u_\G} = \I_\G$ then, this score is consistent as it approximates the BIC. However, for simple MAGs this score can be obtained much faster as the imsets can be constructed in quadratic time in the number of vertices; in contrast, there is no guarantee on computation time for BIC. Note that consistency of the BIC score only requires that the data are generated from one of the models being scored (and do not coincidentally lie 
on any other models with at most the same number of parameters). For this score, we also require that the graph that generates the data has a perfectly Markovian standard imset.

Moreover, as we have shown the decomposition (with positive integer coefficients) of simple MAGs, if the CI relations in a distribution can be described by a simple MAG, we always have that this inner product is zero. Empirically 
we observed that this is true even for general MAGs, in spite of previous examples where our imsets are not perfectly Markovian w.r.t.\ MAGs or not even structural. This suggests that the standard imsets we defined \emph{can} be expressed as the sum of semi-elementary imsets corresponding to conditional independences advertised by the graph, but some of the coefficients are negative, this will indeed turn out to be the case (Theorem \ref{thm:imset as sum of elementary imset for general MAGs}). 

\subsection{Motivations to simplify Markov property}

Next we discuss different choices of imsets. Obviously there are many imsets that represent the same model induced by graphs, but they are different in terms of computational complexity and statistical performance. Both the pairwise Markov property \citep{sadeghi2014markov} and the (reduced) ordered local Markov property \citep{richardlocalmarkov} can be used to construct imsets by summing 
semi-elementary imsets corresponding to list of independences\footnote{For the 
pairwise Markov property to be equivalent to the global Markov property it 
requires that the independence model is a `compositional graphoid' (see Appendix \ref{sec:graphoids}).  One can show the inner product with entropy is zero if and 
only if the distribution obeys the pairwise Markov property.}. 

However, the pairwise Markov property in general have more conditioning variables if graphs have complicated ancestral relations, and thus require to estimate entropy of more variables, which is hard problem. Even if it has the same degree as the standard imsets for simple MAGs, in practice we found it having worse performance in model search algorithms. For imsets corresponding to the (reduced) ordered local Markov property, since it in general contains redundant conditional independences (even for simple MAGs, see the example on next section), it has higher degree compared to the standard imsets, hence are less useful. Therefore for simple MAGs, our standard imsets are faster to compute (polynomial time) and more reliable.

Now for general MAGs, we have seen that the `standard' imsets obtained from the 0-1 characteristic imsets in general are invalid. To construct valid imsets one can of course use the imsets from the pairwise/reduced ordered local Markov property. However, we aim imsets that have fewer degree and fewer conditioning variables, hence simplifying the Markov property is essential to find the standard (simplest) imsets for general MAGs. In the next section we use a graphical tool called the Power DAGs to achieve this, and we show that our \emph{refined Markov property} is strictly simpler than the ordered local Markov property. The result is optimal for simple MAGs as it gives the list of independences that decomposes the standard imsets, but not optimal in general. 

Our refined Markov property results in an imset that is perfectly Markovian to a given MAG and have fewer degree/conditioning variables compared to pairwise/ordered local Markov property. We also show that fixing the maximal head size, the computational time for this imset is polynomial. 

\section{Power DAGs}\label{power DAGs}

We aim to give a simpler representation of the conditional independence model induced by MAGs. To describe these relations, we use \emph{power DAGs} of these graphs, which are DAGs over the set of heads. 

Our power DAG is completely analogous to the \emph{intrinsic power DAG} in \citep{richardson23nested}, as there is a one-to-one correspondence between the collections of head and \emph{intrinsic sets} used in that paper; indeed, for most MAGs, the two graphs are isomorphic.  The approach we give later to simplifying the power DAG is not suitable for intrinsic power DAGs for the \emph{nested Markov} model studied in that paper, because the order in which vertices are \emph{fixed} (marginalized in our case) is important for deriving nested constraints.

\subsection{Motivations and examples}
\begin{example}
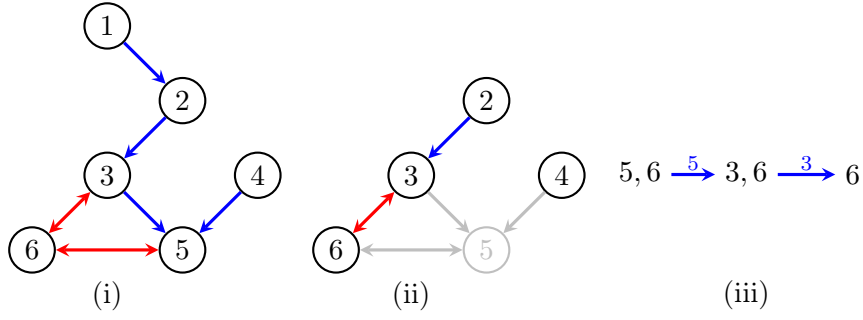
\begin{figure}
\centering
  \begin{tikzpicture}
  [rv/.style={circle, draw, thick, minimum size=6mm, inner sep=0.8mm}, node distance=14mm, >=stealth]
  \pgfsetarrows{latex-latex}
\begin{scope}
  \node[rv]  (6)            {$6$};
  \node[rv, above right of=6] (3) {$3$};
  \node[rv, below right of=3] (5) {$5$};
  \node[rv, above right of=5](4){$4$};
  \node[rv, above right of=3] (2) {$2$};
  \node[rv, above left of=2](1){$1$};
  \draw[->, very thick, blue] (1) to  (2);
  \draw[->, very thick, blue] (2) to  (3);
  \draw[->, very thick, blue] (3)  to (5);
  \draw[->, very thick, blue] (4)  to (5);
  \draw[beg] (5) -- (6);
  \draw[beg] (6) -- (3);
  \node[below of=3, xshift=0cm,yshift=-2mm] {(i)};
  \end{scope}
  \begin{scope}[xshift=4cm]
  \node[rv]  (6)            {$6$};
  \node[rv, above right of=6] (3) {$3$};
  \node[rv, lightgray, below right of=3] (5) {$5$};
  \node[rv, above right of=5](4){$4$};
  \node[rv, above right of=3] (2) {$2$};
  \draw[->, very thick, blue] (2) to  (3);
  \draw[->, very thick, lightgray] (3)  to (5);
  \draw[->, very thick, lightgray] (4)  to (5);
  \draw[<->,very thick, lightgray] (5) -- (6);
  \draw[beg] (6) -- (3);
  \node[below of=3, xshift=0cm,yshift=-2mm] {(ii)};
  \end{scope}
  \begin{scope}[xshift = 8cm, yshift=1cm]
  \node[]  (1)            {$5,6$};
  \node[, right of=1] (2) {$3,6$};
  \node[, right of=2] (3) {$6$};
  \path[->,draw,very thick,blue]
    (1) edge node [yshift=1.5mm]{\scriptsize{$5$}} (2)
    (2) edge node [yshift=1.5mm]{\scriptsize{$3$}} (3);

  \node[below of=2, yshift=-2mm] {(iii)};
  \end{scope}
\end{tikzpicture}
 \caption{(i) A simple MAG $\G$; (ii) the graph after removing $1$ and marginalizing $5$; (iii) a DAG on heads in $\G$ that contain the vertex 6.}
 \label{fig: local Markov is redundant}
\end{figure}
Consider the simple MAG in Figure \ref{fig: local Markov is redundant}(i), given the numerical ordering, the reduced ordered local Markov property for the vertex 6 would require: 
\begin{align*}
    & 6 \indep 1 \mid 2,3,4,5 
    & & 6 \indep 1,4 \mid 2,3
    & & 6 \indep 1,2,4.
\end{align*}
However, we only actually need:
\begin{align*}
    & 6 \indep 1 \mid 2,3,4,5
    &&6 \indep 4 \mid 2,3
    &&6 \indep 2.
\end{align*}
This list is the simplest as it comes from decomposition of standard imset $u_{\G}$ of the simple MAG $\G$.

We start with a topological ordering, in this case the numeric 
ordering of the vertices.  Then, for each vertex we consider the 
graph of its predecessors; for the vertex 6 this is just the 
graph itself.  
We see from this that 1 is not in the Markov blanket of 6, deduce that $6 \indep 1 \cmid 2,3,4$ holds, so we remove $1$ from the graph.  Then we must marginalize something other than 6 in the head $\{5,6\}$; for simple MAGs there is only ever one choice, 
so we obtain the graph in Figure \ref{fig: local Markov is redundant}(ii).  The maximal head in this graph is now $\{3,6\}$, 
and we now see that $\{4\}$ is no longer in the Markov blanket of 6; hence we obtain $6 \indep 4 \mid 2,3$ and remove it from further consideration. 
%
Finally we marginalize 3 and see that $6 \indep 2$. 

Alongside these operations, in Figure 
\ref{fig: local Markov is redundant}(iii) we construct the 
corresponding component of the power DAG for heads that 
contain 6.  We (potentially) associate a single independence
with the initial node (so $6 \indep 1 \mid 2,3,4,5$ in our case)
and another with each of the transitions in the graph ($6 \indep 4 \mid 2,3$ and $6 \indep 2$).

Specifically, in this example, we reach the head $\{3,6\}$ from the head $\{5,6\}$ by marginalizing $\{5\}$, as illustrated in 
. Now $\{4\}$ is no longer in the Markov blanket of 6, so we obtain $6 \indep 4 \mid 2,3$ for it and remove it from consideration. Next by marginalizing $\{3\}$, we reach the head $\{6\}$ and now $\{2\}$ is not in the Markov blanket of $\{6\}$ anymore, so we add the independence $6 \indep 2$. For better illustration, we can draw a DAG on heads if one head can reach another by marginalizing vertices in the head, see Figure \ref{fig: local Markov is redundant} (iii). Then the second and third independences in our list are represented by this DAG.
\end{example}

Now we give formal definitions of power DAGs and a more complicated example on non-simple MAGs will be given.
\subsection{Definition of power DAGs}

We assume throughout that the numerical ordering of vertices is also topological. 
To properly formulate the definition of  power DAGs, we need a few more notations and definitions.
For a vertex set $A$ and a vertex $i$, we write $A \leq i$ if $i$ is the \emph{maximal} vertex in $A$ w.r.t.\ a given topological ordering.  

\begin{definition}\label{one head to another head}
For a MAG $\G$ and two heads $H,H' \leq i$, we write $H \to^{K} H'$ ($\emptyset \subset K \subseteq H \setminus \{i\}$) if $\barren_{\G'}(\dis_{\G'}(i)) = H'$, where $\G' = \G_{\an(H)\setminus K}$. We will refer to $K$ as a marginalization set.
\end{definition}


Graphically, $H \to^{K} H'$ means that in the subgraph $\G_{\an(H)}$, the maximal head (i.e.~the barren subset of the district) that contains $i$ after marginalizing $K$ (subset of the barren subset) is $H'$.  Moreover, to save space we eschew set notation and union signs and write (e.g.) $k$ for $\{k\}$ and $HT$ for $H \cup T$. 
\begin{figure}
\centering
  \begin{tikzpicture}
  [rv/.style={circle, draw, thick, minimum size=6mm, inner sep=0.8mm}, node distance=14mm, >=stealth]
  \pgfsetarrows{latex-latex}
\begin{scope}[xshift=-4cm]
  \node[rv]  (1)            {$1$};
  \node[rv, below of=1, xshift=-6mm,yshift=2mm] (2) {$2$};
  \node[rv, below of=1, xshift=6mm,yshift=2mm] (3) {$3$};
  \node[rv, above of=1,yshift=-2mm] (4) {$4$};
  \node[rv, below of=2, xshift=-10mm,yshift=5mm](5){$5$};
  \node[rv, below of=3, xshift=10mm,yshift=5mm](6){$6$};
  \draw[->, very thick, blue] (1) to [bend right=30] (5);
  \draw[->, very thick, blue] (2)  to [bend right=30] (6);
  \draw[->, very thick, blue] (3)  to [bend right=30] (4);
  \draw[beg] (1) -- (3);
  \draw[beg] (2) -- (3);
  \draw[beg] (1) -- (2);
  \draw[beg] (2) -- (5);
  \draw[beg] (3) -- (6);
  \draw[beg] (1) -- (4);
  \node[below of=1, yshift=-2.1cm] {(i):$\G = \G_{\an(\{4,5,6\})}$};
  \end{scope}
\begin{scope}[xshift = 0.9cm]
  \node[rv]  (1)            {$1$};
  \node[rv, below of=1, xshift=-6mm,yshift=2mm] (2) {$2$};
  \node[rv, below of=1, xshift=6mm,yshift=2mm] (3) {$3$};
  \node[rv, below of=2, xshift=-10mm,yshift=5mm](5){$5$};
  \node[rv, below of=3, xshift=10mm,yshift=5mm](6){$6$};
  \draw[->, very thick, blue] (1) to [bend right=30] (5);
  \draw[->, very thick, blue] (2)  to [bend right=30] (6);
  \draw[beg] (1) -- (3);
  \draw[beg] (2) -- (3);
  \draw[beg] (1) -- (2);
  \draw[beg] (2) -- (5);
  \draw[beg] (3) -- (6);
  \node[below of=1, yshift=-2.1cm] {$\substack{\text{(ii)}:\text{ Marginalize 4} \\ \G' = \G_{\an(\{4,5,6\}) \setminus \{4\}} = \G_{\an(\{3,5,6\})}} $ };
  
  \end{scope}
\begin{scope}[xshift = 5.5cm]
  \node[rv]  (1)            {$1$};
  \node[rv, below of=1, xshift=-6mm,yshift=2mm] (2) {$2$};

  \node[rv, below of=2, xshift=-10mm,yshift=5mm](5){$5$};
  \node[rv, below of=2, xshift=2cm,yshift=5mm](6){$6$};
  \draw[->, very thick, blue] (1) to [bend right=30] (5);
  \draw[->, very thick, blue] (2)  to [bend right=30] (6);
  \draw[beg] (1) -- (2);
  \draw[beg] (2) -- (5);
  \node[below of=1, yshift=-2.1cm] {$\substack{\text{(iii)}:\text{ Marginalize 3} \\ \G'' = \G_{\an(\{3,5,6\}) \setminus \{3\}}}$};
  
  \end{scope}
\end{tikzpicture}
 \caption{An example for Definition \ref{one head to another head}.}
 \label{exp: marginalizing example}
\end{figure}
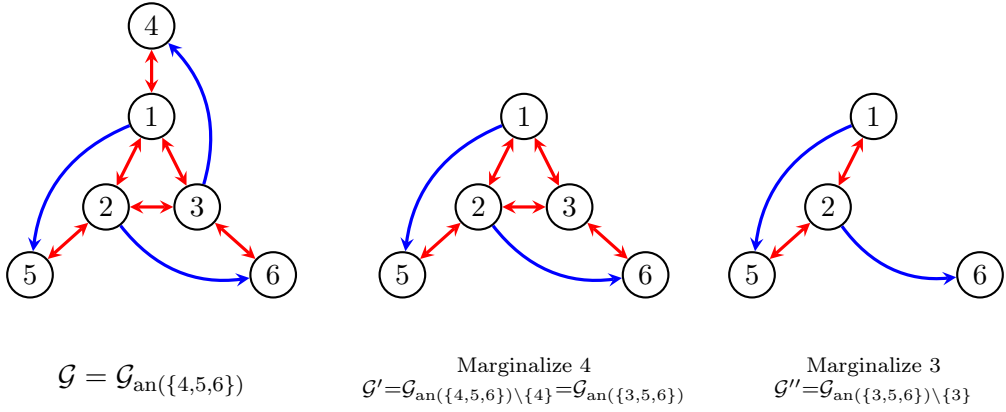

Take the graph in Figure \ref{exp: marginalizing example}(i) (see Figure \ref{example: heads and tails}), under the numerical topological ordering. Consider the final vertex $6$; other barren vertices in its Markov blanket are 4 and 5, which together with 6 form a head. By marginalizing $4$, we reach the head $\{3,5,6\}$, hence by definition, $\{4,5,6\} \to^{4} \{3,5,6\}$. Then if we marginalize $3$, we have $\{3,5,6\} \to^{3} \{6\}$. The above two marginalization steps are shown in Figure \ref{exp: marginalizing example} (ii) and (iii).

With each $H \to^{K} H'$ we associate a conditional independence. By Definition \ref{one head to another head}, if $H \to^{K} H' $ then $\an_\G(H)\setminus K = B$, where $B$ is an ancestral set, $T' = \tail_\G(H')$ and $\{i\} \cup \mb_\G(i,B) = H' \cup T'$. Hence by the ordered local Markov property and marginalizing vertices that lie outside of the Markov blanket of $i$ in $\G_{B}$, we have
$$
i \indep (H \cup T) \setminus (H' \cup T'\cup  K) \mid (H'\cup T') \setminus \{i\}.
$$

For instance, in Figure \ref{exp: marginalizing example}, the conditional independence associated with the edge $\{3,5,6\} \to^{3} \{6\}$ is obtained as the following: we have $H = \{3,5,6\}, T = \tail(H) = \{1,2\}, H' = \{6\}, T' = \tail(H') = \{2\}$  and $K = \{3\}$, hence $$6 \indep (\{3,5,6\} \cup \{1,2\}) \setminus (\{6\} \cup \{2\} \cup \{3\}) \mid \{6\} \cup \{2\} \setminus \{6\} = 6 \indep 1,5 \mid 2.$$

The following definitions will help us to characterize the marginalization sets.

\begin{definition}\label{ceiling}
For a MAG $\G$ and a set of vertices $W$, define the \emph{ceiling} of $W$ as 
$$
\ceil_\G(W) = \{w \in W: W \cap \an_\G(w) = w\}.
$$
Given a head $H$ we define its \emph{Hamlet}\footnote{This nomenclature makes sense on understanding that the \emph{Claudius} of $H$, within a set such that $H$ is barren, is the subset of vertices after strict siblings of $H$ and their descendants are removed.  Note that this set that has been removed is precisely the Hamlet of $H$.} as 
$$
\ham_\G(H) = \sib_\G(\dis_{\an(H)}(H))\setminus \dis_{\an(H)}(H).
$$
\end{definition}

\begin{lemma}\label{characterizing marginalization vertex}
For a MAG $\G$ with heads $H,H' \leq i$, if $H \to^{k} H'$, then $k \in \ceil_\G(\ham_\G(H'))$.
\end{lemma}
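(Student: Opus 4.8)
The plan is to reduce both requirements to a single auxiliary district. Throughout, write $D := \dis_{\an(H)}(i)$ for the district of $i$ in $\G_{\an(H)}$ and $D' := \dis_{\G'}(i)$ for its district in $\G' = \G_{\an(H)\setminus\{k\}}$, so that by hypothesis $H' = \barren_{\G'}(D')$. First I would record the elementary fact that a head equals the barren subset of its district, i.e.\ $H = \barren_\G(D)$; this follows from acyclicity together with $\barren_\G(H)=H$ (any $w\in D\setminus H$ is a strict ancestor of $H$, hence not barren in $D$, while any $h\in H$ with a strict descendant in $D$ would force a second element of $H$ below it). In particular $k\in\barren_\G(D)$, so $\de_\G(k)\cap D=\{k\}$. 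From this I extract the two structural facts that drive the proof: $D'\subseteq D\setminus\{k\}$ (deleting a vertex only shrinks districts), and $k\notin\an_\G(H')$, since any $h'\in H'$ lies in $D'\subseteq D$, so $k\in\an_\G(h')$ would give $h'\in\de_\G(k)\cap D=\{k\}$, forcing $h'=k\notin H'$. As $H'\subseteq\an(H)$ and $\an(H)$ is ancestral, this yields $\an_\G(H')\subseteq\an(H)\setminus\{k\}$.

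The key step is the identity $\dis_{\an(H')}(H')=D'$, which lets me compute $\ham_\G(H')$ against $D'$. For the inclusion $\dis_{\an(H')}(H')\subseteq D'$, note that $\an(H')\subseteq\an(H)\setminus\{k\}$ makes $\G_{\an(H')}$ an induced subgraph of $\G'$, and districts can only shrink. For $\supseteq$ I would show $D'\subseteq\an_\G(H')$: for every $w\in D'$, a maximal element $u$ of $\de_{\G'}(w)\cap D'$ is barren in $D'$ and hence lies in $H'$, so $w\in\an_\G(u)\subseteq\an_\G(H')$; the bidirected paths witnessing $D'=\dis_{\G'}(i)$ then lie entirely inside $\an(H')$, so all of $D'$ is bidirected-connected to $i$ already in $\G_{\an(H')}$.

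With $\dis_{\an(H')}(H')=D'$ established, $\ham_\G(H')=\sib_\G(D')\setminus D'$. Membership $k\in\ham_\G(H')$ is then immediate: $k\notin D'$ since it is deleted, and taking any bidirected path from $k$ to $i$ inside $D$ and letting $w_1$ be the neighbour of $k$ on it, the remaining segment avoids $k$, so $w_1\in D'$ with $k\leftrightarrow w_1$, giving $k\in\sib_\G(D')$. For the ceiling condition I must verify $\ham_\G(H')\cap\an_\G(k)=\{k\}$. Suppose $w\in\ham_\G(H')\cap\an_\G(k)$ with $w\neq k$. Then $w\in\an_\G(k)\subseteq\an(H)$ (as $k\in H$), so $w\in\an(H)\setminus\{k\}$, the vertex set of $\G'$; but $w\in\sib_\G(D')$ furnishes an edge $w\leftrightarrow u$ with $u\in D'$, and since both endpoints lie in $\G'$ this edge is present in $\G'$, forcing $w\in\dis_{\G'}(i)=D'$ and contradicting $w\notin D'$. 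Hence $w=k$, and $k\in\ceil_\G(\ham_\G(H'))$.

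The main obstacle is the identity $\dis_{\an(H')}(H')=D'$: the Hamlet is defined relative to the district of $H'$ in its own ancestral closure $\G_{\an(H')}$, whereas the relation $H\to^{k}H'$ is phrased in the larger graph $\G'=\G_{\an(H)\setminus\{k\}}$, so I must argue that passing to the smaller ancestral set neither drops nor adds district members; this is exactly where $k\notin\an_\G(H')$ and the barren-maximality argument are needed. By contrast, the ceiling claim collapses to the observation that any Hamlet vertex lying in $\an(H)$ would already sit in $D'$, so the only candidate ancestor of $k$ inside the Hamlet is $k$ itself.
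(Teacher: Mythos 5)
Your proof is correct, and every step checks out: the identification $H=\barren_\G(D)$ for $D=\dis_{\an(H)}(i)$, the consequence $k\notin\an_\G(H')$, the identity $\dis_{\an(H')}(H')=D'$, and the final sibling/ceiling arguments are all sound. You take a genuinely different route from the paper, which disposes of this lemma in one line by citing Lemma \ref{prop: what is minimal marginalization set} (that the minimal marginalization set from a parent head $H$ to $H'$ equals $H\cap\ceil_\G(\ham_\G(H'))$) — a statement the paper itself leaves unproven — together with the fact that a singleton marginalization set is automatically minimal. Your argument instead bypasses the minimal-marginalization machinery entirely and works directly with the two districts $D$ and $D'$; its flavour is closest to the paper's proof of Lemma \ref{characterization of minimal set}, which also tracks the first vertex off a bidirected path into the district, but you isolate the one genuinely delicate point that the paper glosses over: the Hamlet is defined relative to $\G_{\an(H')}$ while the relation $H\to^{k}H'$ lives in the larger graph $\G_{\an(H)\setminus\{k\}}$, and your identity $\dis_{\an(H')}(H')=D'$ is exactly the bridge needed to make the cited lemma (and hence the paper's one-line proof) rigorous. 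The trade-off is length: the paper's route, once Lemma \ref{prop: what is minimal marginalization set} is established, also yields the stronger characterization of the full minimal set $K^m$, whereas your argument is self-contained but delivers only the membership statement actually asserted here.
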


\begin{proof}
This is a direct consequence of Lemma \ref{prop: what is minimal marginalization set}.
\end{proof}

Intuitively, $\ham_\G(H)$ serves as the bidirected boundary of $H$ and so must be contained in the marginalization set to reach a graph in which $H$ is the maximal head. Also clearly the last marginalization vertex must be at ceiling of the hamlet, otherwise the barren subset of the district will contain some vertices not in $H$.



For every $i$, we can draw a DAG on all the heads $H \leq i$ with edges that are precisely of the form $H \to^k H'$.  The list of independences associated with the resulting graph, which we refer to as the \emph{complete} power DAG $\mathfrak{I}^\G$ is sufficient to deduce the ordered local Markov property (see Appendix \ref{sec:complete power DAG} and Theorem \ref{thm:head and tail Markov property}). However, this leads to many redundant independences, and we will show that it is sufficient to only include a single independence for each head.  Consequently we will call these simpler power DAGs \emph{refined}. An example of the power DAGs of the graph in Figure \ref{exp: marginalizing example} is given in Figure \ref{exp: complete power DAG and refined power DAG}.

Our definition for a refined power DAG will require a partial ordering on the heads.

\begin{definition}\label{partial ordering}
Define a partial ordering on heads by setting $H < H'$ if and only if $H$ and $H'$ share the same maximal vertex, and $\an_\G(H) \subseteq \an_\G(H')$. 
\end{definition}

\begin{lemma}\label{justify the partial ordering}
For any MAG $\G$ and topological ordering, the relation $<$ is a partial ordering on $\mathcal{H}(\G)$.
\end{lemma}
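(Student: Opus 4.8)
The plan is to verify the three defining properties of a partial order---reflexivity, transitivity, and antisymmetry---with essentially all of the work concentrated in antisymmetry.

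Reflexivity is immediate: a head $H$ trivially shares its maximal vertex with itself and satisfies $\an_\G(H) \subseteq \an_\G(H)$, so $H < H$. Transitivity is also routine. Suppose $H < H'$ and $H' < H''$. Since each (nonempty, finite) head has a unique maximal vertex under the total numerical ordering, the common maximal vertex of $H,H'$ and that of $H',H''$ must both coincide with the maximal vertex of $H'$; hence $H$ and $H''$ share a maximal vertex. Chaining the two inclusions $\an_\G(H) \subseteq \an_\G(H') \subseteq \an_\G(H'')$ gives $\an_\G(H) \subseteq \an_\G(H'')$, so $H < H''$.

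The heart of the argument is antisymmetry. Suppose $H < H'$ and $H' < H$; then the two inclusions force $\an_\G(H) = \an_\G(H')$, and it suffices to prove that a head is completely determined by its ancestral set. I would isolate this as the key claim: for any head $H$, writing $A = \an_\G(H)$, one has $H = \barren_\G(A)$. Granting the claim, $\an_\G(H) = \an_\G(H')$ yields $H = \barren_\G(\an_\G(H)) = \barren_\G(\an_\G(H')) = H'$, which completes antisymmetry (note the shared-maximal-vertex condition is not even needed at this step).

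To establish the claim I would prove both inclusions using acyclicity. For $H \subseteq \barren_\G(A)$: take $h \in H$ and suppose some $u \in \de_\G(h) \cap A$ with $u \neq h$. If $u \in H$ this contradicts $\barren_\G(H) = H$; otherwise $u \in A = \an_\G(H)$ is a strict ancestor of some $h' \in H$, so the concatenated directed path $h \to \cdots \to u \to \cdots \to h'$ exhibits $h'$ as a strict descendant of $h$, where $h' \neq h$ by acyclicity---again contradicting barrenness of $H$. For $\barren_\G(A) \subseteq H$: any $w \in \barren_\G(A)$ lies in $A = \an_\G(H)$, hence $w \to \cdots \to h$ for some $h \in H$; if $w \neq h$ then $h \in \de_\G(w) \cap A$ with $h \neq w$, contradicting $w \in \barren_\G(A)$, so $w = h \in H$. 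The main thing to get right is the bookkeeping with acyclicity to rule out the degenerate cases, but no deeper structural property of heads (in particular, not the single-district condition) is required for this lemma.
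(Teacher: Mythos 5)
Your proof is correct, but it takes a genuinely different route from the paper. The paper disposes of this lemma in one line by citing Lemma~4.8 of \citet{Evans2014}, observing that the relation here is a sub-order (fewer comparable pairs) of a partial order established there. You instead give a self-contained verification, and your key claim---that a head is recovered from its ancestral set via $H = \barren_\G(\an_\G(A))$ with $A = \an_\G(H)$, proved using only acyclicity and the barren condition $\barren_\G(H)=H$---is exactly the right lever for antisymmetry, since the two inclusions force $\an_\G(H)=\an_\G(H')$ and the claim then gives $H=H'$. Your bookkeeping with acyclicity (ruling out $h'=h$ by noting the concatenated directed path would be a cycle) is sound, and your observation that the single-district condition is never needed is accurate. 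What each approach buys: the citation is shorter and defers to established machinery; your argument is independent of external results and, as a bonus, explicitly establishes the identity $H=\barren_\G(\an_\G(H))$, which the paper asserts without proof elsewhere (in Section~\ref{power DAGs}, where it underlies the claimed bijection between heads and a subset of the ancestral sets). The only cosmetic caveat is that the paper writes the relation as $<$ while defining it with $\subseteq$, so it is reflexive as stated; you sensibly verify the axioms of a non-strict partial order, which matches the definition as written.
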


\begin{proof}
This is a direct consequence of Lemma 4.8 in \citet{Evans2014} as the ordering is a sub-order (in that heads are only comparable if they have the same maximal vertex) of the partial orders defined in that paper.
\end{proof}

\begin{definition}\label{def: refined power DAG}
 For a MAG $\G$ and a topological order $<$, the \emph{refined power DAG} 
 $\widetilde{\mathfrak{I}}^\G_<$
 for $\G,<$ consists of a component for each vertex $i$. Denote this by $\widetilde{\mathfrak{I}}^\G_i$; it has vertices $\{H : H\leq i\}$, and
 an edge $H' \to^{k} H''$ where
 \begin{align*}
        k &= \min \ceil_\G(\ham_\G(H')), \text{ and} \\
        H'' &= \max \{H' : H' \in \pa_{\mathfrak{I}_i (\G)}(H'') \text{ and } H' \to^{k} H''\}.
 \end{align*}
That is, for each head, we only take at most one edge and therefore at most one independence into it. 
\end{definition}
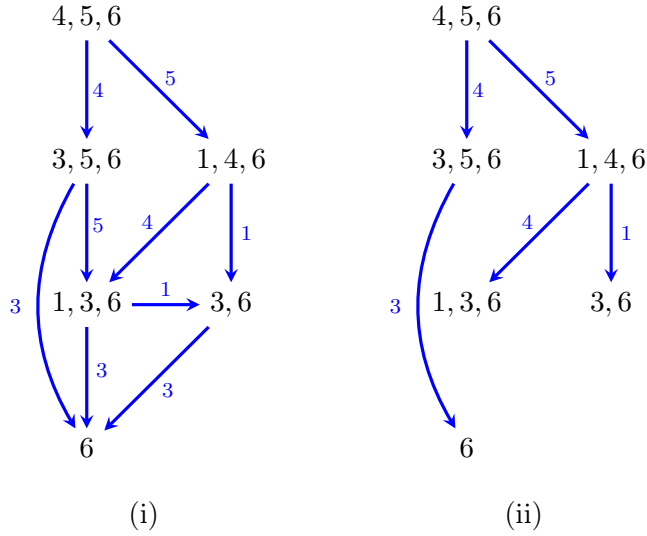
\begin{figure}
\centering
  \begin{tikzpicture}
  [rv/.style={circle, draw, thick, minimum size=6mm, inner sep=0.8mm}, node distance=14mm, >=stealth]
  \pgfsetarrows{latex-latex}
\begin{scope}[yshift=3cm]
  \node[]  (1)            {$4,5,6$};
  \node[, below of=1,yshift=-0.5cm] (2) {$3,5,6$};
  \node[, right of=2,xshift=0.5cm] (3) {$1,4,6$};
  \node[, below of=2,yshift=-0.5cm] (4) {$1,3,6$};
  \node[, below of=3,yshift=-0.5cm] (5) {$3,6$};
  \node[, below of=4,yshift=-0.5cm] (6) {$6$};
  \path[->,draw,very thick,blue]
    (1) edge node [xshift=1.5mm]{\scriptsize{$4$}} (2)
    (1) edge node [xshift=1.5mm,yshift=1.5mm]{\scriptsize{$5$}} (3)
    (2) edge node [xshift=1.5mm,yshift=1mm]{\scriptsize{$5$}} (4)
    (3) edge node [xshift=-1.5mm,yshift=1.5mm]{\scriptsize{$4$}} (4)
    (3) edge node [xshift=2mm]{\scriptsize{$1$}} (5)
    (4) edge node [yshift=2mm]{\scriptsize{$1$}} (5)
    (4) edge node [xshift=2mm,yshift=1mm]{\scriptsize{$3$}} (6)
    (5) edge node [xshift=1.5mm,yshift=-1.5mm]{\scriptsize{$3$}} (6)
    (2) edge [bend right=30] node [xshift=-3mm]{\scriptsize{$3$}} (6);

  \node[below of=6, xshift=7.5mm, yshift=0.5cm] {(i)};
  
  \end{scope}
\begin{scope}[xshift = 5cm,yshift=3cm]
  \node[]  (1)            {$4,5,6$};
  \node[, below of=1,yshift=-0.5cm] (2) {$3,5,6$};
  \node[, right of=2,xshift=0.5cm] (3) {$1,4,6$};
  \node[, below of=2,yshift=-0.5cm] (4) {$1,3,6$};
  \node[, below of=3,yshift=-0.5cm] (5) {$3,6$};
  \node[, below of=4,yshift=-0.5cm] (6) {$6$};
  \path[->,draw,very thick,blue]
    (1) edge node [xshift=1.5mm]{\scriptsize{$4$}} (2)
    (1) edge node [xshift=1.5mm,yshift=1.5mm]{\scriptsize{$5$}} (3)
    (3) edge node [xshift=-1.5mm,yshift=1.5mm]{\scriptsize{$4$}} (4)
    (3) edge node [xshift=2mm]{\scriptsize{$1$}} (5)
    (2) edge [bend right=30] node [xshift=-3mm]{\scriptsize{$3$}} (6);

  \node[below of=6, xshift=7.5mm, yshift=0.5cm] {(ii)};
  
  \end{scope}
\end{tikzpicture}
 \caption{(i) The component of the power DAG for the graph in Figure \ref{exp: marginalizing example}(i); and (ii) the refined version of the same component. Both are on the heads of $\G$ with maximal vertex 6.}
 \label{exp: complete power DAG and refined power DAG}
\end{figure}
Taking the maximal set among parent heads of $H$ requires some 
justification. First we note that it is well defined, since a direct consequence of 
Proposition \ref{prop:maximal independence from maximal parent head} is that if 
$H_1 \to^{k} H'$  and $ H_2 \to^{k} H'$, then $H_3 := \barren_\G(H_1 \cup H_2)$ satisfies $H_3 \to^{k} H'$. Thus there always exists a maximal parent head in any non-empty 
$\{H : H \in \pa_{\mathfrak{I}_i (\G)}(H') \text{ and } H \to^{k} H'\}$.

For simple MAGs, the complete and refined power DAGs are identical and are chains, in the sense that any node has at most one parent head and at most one child. An important motivation inspired by simple MAGs is that the list of conditional independences associated with every edge in the power DAG of a simple MAG decomposes  is $u_\G$, and $u_\G$ is the simplest imset that defines the same model as the graph. 

\subsection{The refined Markov property from the refined power DAGs}\label{sec:very local Markov property}

Next we define the list of independences associated with the refined power DAGs $\widetilde{\mathfrak{I}}^\G_i$. Recall that $[n]$ denotes the set $\{1,\dots, n\}$.

\begin{definition}\label{def: simplified head and tail Markov property}
 For a MAG $\G$ and each $i$, let $\widetilde{\mathbb{L}}^\G_i$ be a list of independences, such that:
 \begin{itemize}
     \item[($a$)]  it contains $i \indep [i-1] \setminus \mb_\G(i,[i]) \mid \mb_\G(i,[i])$, and 
     \item[($b$)] for every head $H'$ other than the maximal one, it contains the independence associated with the unique edge into it in $\widetilde{\mathfrak{I}}^\G_i$
\end{itemize}
We will refer to the collection $\widetilde{\mathbb{L}}^\G = \bigcup_i \widetilde{\mathbb{L}}^\G_i$ as the \emph{refined (ordered) Markov property}.
\end{definition}

\begin{proposition}\label{prop:equiv between refined and local}
For a MAG $\G$, the refined Markov property is equivalent to the ordered local Markov property. Further, it contains fewer and smaller independences compared to the reduced ordered local Markov property.
\end{proposition}

 By `smaller', we mean any independence in the refined Markov property can be deduced from the reduced ordered local Markov property by simply marginalizing some vertices.

Now, since the number of independences is bounded by the number of heads, this will greatly reduce the number of independences arising from the reduced ordered local Markov property.


\begin{remark}
For simple MAGs, the refined Markov property is the simplest possible description of the model and cannot be further reduced; however, for some graphs even less complicated descriptions exist. For example, for the bidirected 5-cycle, adding the semi-elementary imsets corresponding to these independences would give an imset of degree 7. However, in fact one can build an imset that represents the model of 5-cycle of only degree 6, simply by adding the elementary imset corresponding to any valid marginal independence to our `standard' imset. 

Another example is the 5-chain with bidirected edges (see Figure \ref{fig:5-chain and its dual}(i)); its standard imset \emph{is} perfectly Markovian for the graph, but the list of independences that defines the imset model is smaller than our refined Markov property (see Example \ref{exm:5chain}). In Appendix \ref{sec:redundant indep for refined power DAGs}, we give one more interesting example where there are still redundant independences arising in refined power DAGs.
\end{remark}

\begin{remark}
We use the concept of the `Hamlet' to characterize minimal marginalization sets and maximal parent heads. This concept originates in \citet{richardlocalmarkov}, though he did not use this term.  
He uses the Hamlet to reduce the ordered local Markov property by only visiting maximal ancestral sets. This is similar to our approach in that one visits each head only once (one can easily show there is a one-to-one correspondence between heads and maximal ancestral sets), however, \citeauthor{richardlocalmarkov} only reduces the number of ancestral sets visited, while we also marginalize some unnecessary vertices in the independence statements. In Example \ref{exm:non-trivial simple MAG}, for instance, $A = \{1,2,3,4,5,8\}$ is a maximal ancestral set with $\mb_\G(8;A) = \pa_\G(8) = \{1,5\}$ for the head $\{8\}$. The reduced ordered local Markov property gives $8 \indep 2,3,4 \mid 1,5$, but we only need $8 \indep 3,4 \mid 1,5$; this is because $8 \indep 2 \mid 1,3,4,5,6$ has already been obtained for the head $\{6,8\}$, and we already know that $6 \indep 2 \mid 1,3,4,5$ from
variables earlier in the graph. Reducing the number of variables in a independence statement is crucial in practice when trying to obtain a combinatorial imset by adding the semi-elementary imsets corresponding to the list of independences.
\end{remark}

\subsection{Computing refined power DAGs}

In this section, we show that  $\widetilde{\mathfrak{I}}^\G_i$ can be obtained without computing the complete power DAG.  We use Algorithm \ref{algo1} to achieve this and it contains two important ideas:
\begin{itemize}
    \item[(i)] only marginalizing vertices that are smaller (w.r.t.\ a topological ordering) than those already marginalized, and 
    \item[(ii)] for each head, only keep one parent of it, which is on the shortest path from the maximal head.
\end{itemize}

\begin{proposition}\label{prop: algorithm 1 give the refined power dag}
Given a MAG $\G$, Algorithm \ref{algo1} computes $\widetilde{\mathfrak{I}}^\G_i$ for each $i$.
\end{proposition}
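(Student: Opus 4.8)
The plan is to prove that Algorithm \ref{algo1} builds exactly the refined power DAG $\Tilde{\mathfrak{I}}^\G_i$ of Definition \ref{def: refined power DAG} and reads off its independences, so that its output is the list $\Tilde{\mathbb{L}}^\G_i$ of Definition \ref{def: simplified head and tail Markov property}; throughout I use the standing assumption that the numerical order is topological. The argument splits into showing that the algorithm (1) visits exactly the heads $H' \leq i$, and that for each non-maximal $H'$ it recovers the two data that specify its unique incoming edge, namely (2) the marginalised vertex $k = \min \ceil_\G(\ham_\G(H'))$ and (3) the parent $H^* = \max\{H : H \to^k H'\}$. For (1) I would use a downward-traversal invariant. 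The maximal head is $\barren_{\G_{[i]}}(\dis_{[i]}(i))$, with ancestral set $[i]$, and every other head $H' \leq i$ arises as $\barren_{\G'}(\dis_{\G'}(i))$ for $\G' = \G_{\an(H) \setminus K}$ and some predecessor head $H$. Since $H \mapsto \an_\G(H)$ is injective (as $H = \barren_\G(\an_\G(H))$) and ancestral sets strictly shrink along each $\to^k$ edge, the traversal terminates and, by the completeness direction already invoked for Theorem \ref{thm:head and tail Markov property}, reaches every such head exactly once, matching the vertex set of $\Tilde{\mathfrak{I}}^\G_i$.

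For (2) I would combine the observation preceding Lemma \ref{characterizing marginalization vertex}---that reaching $H'$ forces marginalising every vertex of $\ceil_\G(\ham_\G(H'))$---with Lemma \ref{characterizing marginalization vertex} itself, which says each step $H \to^k H'$ has $k \in \ceil_\G(\ham_\G(H'))$. Because idea (i) of the algorithm only marginalises vertices smaller (topologically) than those already removed, the marginalised vertices strictly decrease along the path from the maximal head to $H'$; hence the final step entering $H'$ removes the smallest member of $\ceil_\G(\ham_\G(H'))$, which is exactly $k = \min \ceil_\G(\ham_\G(H'))$. I would confirm, using the head-to-ancestral-set bijection again, that this smallest vertex really is available to be marginalised last, so the algorithm's choice is consistent with the definition.

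For (3) I would identify idea (ii)---keeping, for each head, the parent on the shortest path from the maximal head---with the choice of $H^*$. Under the partial order of Definition \ref{partial ordering}, the maximal parent $H$ with $H \to^k H'$ has the largest ancestral set, so it is reached from the maximal head by the fewest marginalisations; this is precisely the shortest-path parent that the algorithm retains. Existence and uniqueness of this maximal parent is the well-definedness already noted after Definition \ref{def: refined power DAG}, via Proposition \ref{prop:maximal independence from maximal parent head}. Once $k$ and $H^*$ agree with the definition at every non-maximal head, each retained edge---together with its associated independence, displayed after Definition \ref{one head to another head }---coincides with the refined power DAG's, while clause ($a$) supplies the independence attached to the maximal head; together these reconstruct $\Tilde{\mathbb{L}}^\G_i$.

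The main obstacle I anticipate is piece (3): turning the algorithm's operational notion of ``shortest path from the maximal head'' into the order-theoretic ``maximal parent'' of Definition \ref{def: refined power DAG}, and showing that the shortest route is unambiguous so that the kept edge is well defined. This requires a joint induction on the marginalisation order proving that no shorter path into $H'$ can arrive through a non-maximal parent. Here Proposition \ref{prop:maximal independence from maximal parent head}, which allows two parents marginalising the same $k$ to be merged into a larger one (so that $\barren_\G(H_1 \cup H_2) \to^k H'$), is what forces the shortest path to land on the maximal parent, and it carries the real weight of the argument.
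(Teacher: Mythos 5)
Your overall strategy is the same as the paper's: reduce the claim to showing that, for each non-maximal head $H'$, the single retained incoming edge carries marginalization vertex $k=\min\ceil_\G(\ham_\G(H'))$ and has the maximal parent $H^*$ as its source, using Lemma \ref{characterizing marginalization vertex}, the fact that every vertex of $\ceil_\G(\ham_\G(H'))$ must be marginalized on any route to $H'$, and Proposition \ref{prop:maximal independence from maximal parent head} for well-definedness of $H^*$. Your item (2) is sound and is essentially the paper's argument recast as an invariant: the bookkeeping $M(\cdot)$ together with line \ref{algl:smaller} forces the labels along the retained path to decrease strictly, so the final label, which lies in $\ceil_\G(\ham_\G(H'))$, must be its minimum. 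The paper phrases the same fact as a contradiction: if $k'\neq k$ then $k'>k$, and since $k\in\ceil_\G(\ham_\G(H))$ for the competing parent $H$ we get $M(H)\leq k<k'$, so line \ref{algl:smaller} never marginalizes $k'$ from $H$.

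The genuine gap is in item (3). The inference ``the maximal parent has the largest ancestral set, so it is reached from the maximal head by the fewest marginalisations'' does not follow: a single marginalization step can shrink the ancestral set by many vertices, so distance from the maximal head in the power DAG is not monotone in $|\an_\G(H)|$. The fact that is actually needed is Lemma \ref{lemma: partial orders admits ancestral relations}, namely that $H^*>H$ makes $H^*$ an \emph{ancestor of $H$ in the power DAG}; this is what yields $SD(H^*)<SD(H)$ and lets line \ref{algl:shorter} discard the edge from a non-maximal parent in the case $k'=k$. You correctly identify (3) as the load-bearing step but leave it unresolved, whereas the paper closes it by an induction along the reverse partial ordering of the outer loop (line \ref{algl:poheads}): when $H'$ is processed, all candidate parents have already been handled with their correct $SD$ values, and the two cases $k'\neq k$ and $k'=k$ are eliminated by lines \ref{algl:smaller} and \ref{algl:shorter} respectively. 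As it stands, your proposal is a correct plan for (1) and (2) but does not yet prove the part where the work actually happens.
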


\begin{proposition} \label{prop: complexity of algorithm 1}
Given a MAG $\G$ with $n$ vertices, $e$ edges, and maximal head size $k$, then the complexity of Algorithm \ref{algo1} and \ref{alg:algo3} are $O(kn^{k}(n+e))$ and $O(n^{k}(n+e))$, respectively.
\end{proposition}

\begin{algorithm} 
\SetAlgoLined
\SetKw{KwDef}{define}
\KwIn{A MAG $\mathcal{G}([n],\mathcal{E})$}
\KwResult{The refined power DAGs $\widetilde{\mathfrak{I}}^\G_i$ for each $i$}
\KwDef{$M(H)$ the minimum vertex marginalized to reach $H \in \mathcal{H}_i(\G)$}\;
  \KwDef{$SD(H)$ the shortest path to $H \in \mathcal{H}_i(\G)$}\;
\For{$i \in [n]$ \label{line: for i in [n]}}{
  \KwDef{$VS_i$ the set of visited heads for $H \in \mathcal{H}_i(\G)$}\;
  \KwDef{$UVS_i$ the set of not visited heads for $H \in \mathcal{H}_i(\G)$}\;
  Compute the maximal head $H^* = \barren(\mb(i,[i]))$ and set $VS_i = \{\{i\}\}, UVS_i = \{H^*\}$\;
  \If{$H^* = \{i\}$}{Next}
  Set $M(H^*) = i$, $SD(H^*) = 0, M(\{i\}) = SD(\{i\}) = \infty$\;
  Start with a graph $\widetilde{\mathfrak{I}}^\G_i$ with vertex $H^*$\;
  \While{$UVS_i$ is not empty \label{algl:poheads}}{
    Take any $H \in UVS_i$ and \textbf{move} $H$ to $VS_i$\;
    \For{$k \in H \setminus \{i\}$ and $k < M(H)$ \label{algl:smaller}}{ 
    Let $H': H \to^{k} H'$\ \label{line: check heads};\\
     \eIf{$H' \notin  VS_i \cup UVS_i$ \label{line: check if visited}}{ \label{algl:shorter}
     \textbf{add} $H'$ to $\widetilde{\mathfrak{I}}^\G_i$ and $H \to^{k} H'$; $UVS_i = UVS_i \cup \{H'\}$;
     $SD(H') = SD(H)+1$;
     $M(H') = k$;
     }{
     \If{$SD(H') > SD(H)+1$}{
       \textbf{delete} any edges into $H'$\;
     \textbf{add} $H \to^{k} H'$\;
     $SD(H') = SD(H)+1$\;
     $M(H') = k$\;
     }
     }
    }
  }
}

\Return{$( \widetilde{\mathfrak{I}}^\G_1, \widetilde{\mathfrak{I}}^\G_2, \ldots, \widetilde{\mathfrak{I}}^\G_n)$}
\caption{Obtain the refined power DAGs for a general MAG}
\label{algo1}
\end{algorithm}

\begin{algorithm}
    \SetAlgoLined
\SetKw{KwDef}{define}
\KwIn{The refined power DAG $\widetilde{\mathfrak{I}}^\G_i$ of a MAG $\G$ for each $i$}
\KwResult{The imset $u_\G^r$ that is perfectly Markovian w.r.t.~$\G$}
Let $u_\G^r$ be an empty imset;\\
\For{$i \in [n]$ }{
  \For{$H$ in $\widetilde{\mathfrak{I}}^\G_i$ along the partial ordering}{
  \textbf{Compute} $T = \tail(H)$;\\
  \eIf{$\pa(H) = \emptyset$}{
  $u_\G^r := u_\G^r+u_{\langle i, \mid [i]\setminus HT \mid HT \rangle}$
  }{
  Let $H'$ be the \emph{only} parent of $H$ and $T'=\tail(H')$;\\
  $u_\G^r := u_\G^r+u_{\langle i, H'T' \setminus HT \mid HT\rangle}$
  }
  }
}

\Return{$u_\G^r$}
\caption{Obtain the imset $u_\G^r$ from refined Markov property}
\label{alg:algo3}
\end{algorithm}

\begin{remark}
If the input MAG is given to be simple, then the refined power DAG, which is the same as the complete power DAG, can be obtained very fast as shown in Algorithm \ref{alg:algo2} in Appendix. In fact, the complexity of Algorithm \ref{alg:algo2} is linear in the number of edges and vertices, as the structure of the power DAG is inherited completely given the topological ordering.  
\end{remark}

\subsection{Decomposition of the `standard' imset for general MAGs}\label{sec:decomp of u_G}

In this section, we give a decomposition of $u_\G$, using all independences arising from $H \to^{K} H'$ for every possible pair of heads $H, H'$ and every possible marginalization set $K$. Note that this theorem does not have practical use but it proves why empirically the inner product between the standard imset and entropy is zero for any MAGs. Moreover, it shows that if the standard imset is structural then $\I_{u_\G} \subseteq \I_\G$. See discussion after the sketch proof.

\begin{theorem}\label{thm:imset as sum of elementary imset for general MAGs}
For a MAG $\G$, with vertices $[n]$ (topologically ordered), 
we have  
\begin{align*}
u_{\G} &= \sum_{i=1}^n \biggl\{
u_{\langle i,[i-1] \setminus \mb(i,[i]) \mid \mb(i,[i]) \rangle } \\
&\phantomrel{=}+ 
\sum_{\substack{H \in \mathcal{H}(\G) \setminus \{i\}: \\ H \leq i}} 
  \sum_{ \substack{\emptyset \subset K \subseteq H \setminus \{i\}:  \\ H \to^{K} H'}} (-1)^{\abs{K}+1} 
    u_{\langle i,H  T \setminus H' T' K  \mid H'T' \setminus i \rangle } \biggr\}.
\end{align*}
\end{theorem}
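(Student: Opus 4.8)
The plan is to verify the identity as an equality of integer combinations of the identifier imsets $\delta_A$, organised so that the telescoping term $\delta_{\mathcal V}-\delta_\emptyset$ of Theorem~\ref{thm: standard imset} is produced automatically. First I would split the head sum in Theorem~\ref{thm: standard imset} by the maximal vertex of each head: since every head $H$ has a unique maximal vertex $i$ and $\{i\}\le i$ is the smallest such head, $\sum_{H\in\mathcal H(\G)}=\sum_{i=1}^n\sum_{H\le i}$. Inserting the pair $\delta_{[i]}-\delta_{[i-1]}$ into the $i$-th block (with $[0]=\emptyset$, $[n]=\mathcal V$, so $\sum_i(\delta_{[i]}-\delta_{[i-1]})=\delta_{\mathcal V}-\delta_\emptyset$), it suffices to prove, for each $i$, the per-$i$ identity
\begin{align*}
(\delta_{[i]}-\delta_{[i-1]})-\sum_{H\le i}\sum_{W\subseteq H}(-1)^{\abs{H\setminus W}}\delta_{W\cup\tail(H)}
&= u_{\langle i,[i-1]\setminus\mb(i,[i])\mid\mb(i,[i])\rangle}\\
&\quad+\sum_{\substack{H\in\mathcal H(\G)\setminus\{i\}\\ H\le i}}\ \sum_{\substack{\emptyset\subset K\subseteq H\setminus\{i\}:\\ H\to^{K}H'}}(-1)^{\abs{K}+1}u_{\langle i,HT\setminus H'T'K\mid H'T'\setminus i\rangle};
\end{align*}
summing over $i$ recovers the theorem.

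The structural observation I would establish next is that, within the $i$-th block, every term on both sides lies in the span of the \emph{difference pairs} $\delta_S-\delta_{S\setminus\{i\}}$ with $i\in S$. On the right, expanding the semi-elementary imset gives
$$u_{\langle i,HT\setminus H'T'K\mid H'T'\setminus i\rangle}=\bigl(\delta_{P_1}-\delta_{P_1\setminus\{i\}}\bigr)-\bigl(\delta_{H'T'}-\delta_{H'T'\setminus\{i\}}\bigr),\qquad P_1:=(HT\cup H'T')\setminus(K\setminus H'T'),$$
while, using $\{i\}\cup\mb(i,[i])=H_{\max}\cup\tail(H_{\max})$ for the maximal head $H_{\max}\le i$ with $T_{\max}:=\tail(H_{\max})$ (valid because the district of $i$ in $\G_{[i]}$ has barren $H_{\max}$), the singleton term equals $(\delta_{[i]}-\delta_{[i-1]})-(\delta_{H_{\max}T_{\max}}-\delta_{H_{\max}T_{\max}\setminus\{i\}})$. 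On the left, pairing each $W\subseteq H$ with $W\cup\{i\}$ shows that the contribution of a head $H\le i$ is likewise a combination of such pairs with pivot $i=\max H$.

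Because the pairs $\{\delta_S-\delta_{S\setminus\{i\}}:i\in S\}$ are linearly independent for fixed $i$, it remains to equate, for each $S\ni i$, the coefficient $\beta(S)$ of its difference pair. A short computation yields
$$\beta_{\mathrm{LHS}}(S)=\mathbbm{1}[S=[i]]-\sum_{H\le i:\ \tail(H)\subseteq S\subseteq H\cup\tail(H)}(-1)^{\abs{H\setminus S}},$$
$$\beta_{\mathrm{RHS}}(S)=\mathbbm{1}[S=[i]]-\mathbbm{1}[S=H_{\max}T_{\max}]+\sum_{\substack{(H,K):\,H\to^{K}H'\\ P_1(H,K)=S}}(-1)^{\abs{K}+1}-\sum_{\substack{(H,K):\,H\to^{K}H'\\ H'T'=S}}(-1)^{\abs{K}+1}.$$
Cancelling $\mathbbm{1}[S=[i]]$, the theorem reduces to the purely combinatorial identity $\beta_{\mathrm{LHS}}(S)=\beta_{\mathrm{RHS}}(S)$ for all $S\ni i$.

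Proving this identity is the main obstacle, and is where the power-DAG machinery enters. I would fix $S$ and analyse the admissible triples $(H,K,H')$ that contribute. When $S$ is itself a tail-closure, the head $H'$ with $H'\cup\tail(H')=S$ is unique by Lemma~4.3 of \citet{MLL}, so the second sum becomes $\sum_{(H,K):H\to^{K}H'}(-1)^{\abs K+1}$; for the first sum I must identify, for each source $H$, precisely which $K$ satisfy both $H\to^{K}H'$ and $P_1(H,K)=S$. Here Lemma~\ref{characterizing marginalization vertex} (a single marginalised vertex lies in $\ceil_\G(\ham_\G(H'))$) and the maximal-parent-head property of Proposition~\ref{prop:maximal independence from maximal parent head} control which targets are reachable and which $K$ are admissible. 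I expect the signed sums to collapse by an inclusion--exclusion cancellation $\sum_K(-1)^{\abs K}=0$ over all but an extremal choice of $K$, leaving exactly the single-head value $-(-1)^{\abs{H\setminus S}}$ demanded by $\beta_{\mathrm{LHS}}(S)$. The delicate point is to show this cancellation is exact, i.e.\ that as $K$ ranges over the admissible marginalisation sets the sets $P_1(H,K)$ sweep out precisely the family of subsets needed to match the left-hand alternating sum; verifying this bijective/telescoping behaviour, rather than any single expansion, is the crux of the argument.
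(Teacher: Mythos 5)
Your setup is sound and matches the paper's own strategy: splitting the head sum by maximal vertex, absorbing $\delta_{\mathcal V}-\delta_\emptyset$ into the telescoping pairs $\delta_{[i]}-\delta_{[i-1]}$, pairing each $\delta_{W\cup\tail(H)}$ with its partner containing $i$, and observing that the terms $(-1)^{\abs{K}+1}(\delta_{HT\setminus K}-\delta_{HT\setminus Ki})$ with $K\neq\emptyset$ appear identically on both sides. But the argument stops exactly where the real work begins. After that cancellation, what must be shown is that for every non-maximal head $H'\leq i$ the signed count
\[
\sum_{(H,K)\,:\,H\to^{K}H'}(-1)^{\abs{K}+1}=1,
\]
and you only say that you ``expect'' an inclusion--exclusion collapse onto an extremal choice of $K$. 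That expectation is not self-evident and is precisely the content the paper establishes with two dedicated lemmas. First, Lemma \ref{construct K given minimal K}: for a fixed source head $H$ with minimal marginalization set $K^m$ into $H'$, the admissible sets are exactly $K=K^m\,\dot\cup\,B$ with $B\subseteq H\setminus(H'\cup K^m)$, so every source admitting more than one $K$ contributes $\sum_{B}(-1)^{\abs{K^m}+\abs{B}+1}=0$. Second, Lemma \ref{characterization of minimal set}: the sources with a \emph{unique} admissible set are exactly $H=\barren_\G(K\cup H')$ for $\emptyset\subset K\subseteq\ceil_\G(\ham_\G(H'))$, so their total contribution is $\sum_{\emptyset\subset K\subseteq\ceil_\G(\ham_\G(H'))}(-1)^{\abs{K}+1}=1$. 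Neither fact follows from Lemma \ref{characterizing marginalization vertex} or Proposition \ref{prop:maximal independence from maximal parent head} alone, which is all you invoke; without them the crux of the theorem is asserted rather than proved.

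A smaller issue: by matching coefficients of the difference pairs $\delta_S-\delta_{S\setminus\{i\}}$ indexed by the \emph{set} $S$, you force yourself to handle coincidences in which one $S$ arises both as $HT\setminus K$ for several pairs $(H,K)$ and as $H''\cup\tail(H'')$ for some head $H''$; your $\beta_{\mathrm{RHS}}(S)$ then mixes the two sums and obscures the clean reduction above. The paper sidesteps this by cancelling the $K\neq\emptyset$ terms as formal sums over the common index set $(H,K)$ rather than over sets, leaving only the coefficient of $-\delta_{H'T'}+\delta_{H'T'\setminus i}$ to compute. I would adopt that bookkeeping and then supply proofs of the two lemmas above.
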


 We provide a sketch; the full proof is in Appendix \ref{full proof of decomp of u_G}.

\begin{proof}[Sketch Proof]
We can rewrite $u_{\G}$ as the following:
\begin{align*}
 u_{\G} &=  \delta_{\mathcal{V}}-\delta_{\emptyset}-\sum_{H \in \mathcal{H}(\G)} \sum_{W \subseteq H} (-1)^{\abs{H \setminus W}} \delta_{W \cup \tail(H)};\\
 &=\delta_{\mathcal{V}}-\delta_{\emptyset}-\sum_{H \in \mathcal{H}(\G)}\sum_{K \subseteq H} (-1)^{\abs{K}} \delta_{H T \setminus K};\\
 &=\delta_{\mathcal{V}}-\delta_{\emptyset}- \sum_{i=1}^n \sum_{\substack{H \in \mathcal{H}(\G):\\ H \leq i}}\sum_{K \subseteq H} (-1)^{\abs{K}} \delta_{H T \setminus K};\\
 &=\delta_{\mathcal{V}}-\delta_{\emptyset}- \sum_{i=1}^n \sum_{\substack{H \in \mathcal{H}(\G):\\ H \leq i}}\sum_{K \subseteq H \setminus i} (-1)^{\abs{K}}( \delta_{H T \setminus K}-\delta_{H T \setminus Ki});\\
 &=\sum_{i=1}^n \delta_{[i]}-\delta_{[i-1]}- \sum_{\substack{H \in \mathcal{H}(\G):\\ H \leq i}} \biggl\{\sum_{\emptyset \subset K \subseteq H \setminus i} (-1)^{\abs{K}}( \delta_{H T \setminus K}-\delta_{H T \setminus Ki})
 +\delta_{HT}-\delta_{HT\setminus i} \biggr\}.
\end{align*}
For any $i$ and $H \leq i$, a marginalization set can be any set $K$ such that $\emptyset \subset K \subseteq H \setminus i$. Suppose $H \to^K H'$ where $H,H' \leq i$. We know that $i \indep H  T \setminus H'  T' K  \mid H'T' \setminus i$, and this corresponds to the semi-elementary imset $u_{\langle i,H  T \setminus H'  T' K  \mid H'T' \setminus i \rangle} =\delta_{H T \setminus K}-\delta_{H T \setminus Ki}-\delta_{H' T'\setminus i}+\delta_{H'T'}.$

The last equality already has two terms in this semi-elementary imset. It remains to show that for each $i$, where $1 \leq i \leq n$ and each head $H' \leq i$, $$\sum_{\substack{H,K: \\ H \to^K H', H \leq i}} (-1)^{\abs{K}} = 1.$$ This corresponds to the extra terms $\delta_{HT}-\delta_{HT \setminus i}$. Informally this is saying that for each head, if we count the number of pairs of heads and marginalization sets $K$ that could reach it, and assign a $\pm 1$ to each such head depending on parity of $K$'s size, the sum of these is 1.

Note that given a vertex $i$, there is no head that will marginalize to the maximal head whose maximal vertex is $i$; that is $\barren_\G(\dis_{[i]}(i))$. Instead for this maximal head $H$,  $\delta_{HT}-\delta_{H T \setminus i}$ appears in $u_{\langle i,[i-1] \setminus \mb(i,[i]) \mid \mb(i,[i]) \rangle }$ which contains $\delta_{[i]}-\delta_{[i-1]}$.
%
\end{proof}

\begin{corollary}\label{cor: submodel}
For a MAG $\G$, if $u_{\G}$ is structural, then $\I_{u_{\G}} \subseteq \I_{\G}$.  
\end{corollary}

\begin{remark}
Note also that we can separate the independences out by district, by replacing the first sum in Theorem \ref{thm:imset as sum of elementary imset for general MAGs} by a sum over districts, and then pushing in the summations over vertices in that district.
\end{remark}

\section{Experimental results} \label{sec:exp}

We went through graphs with $\abs{\mathcal{V}} \leq 7$ nodes, checking their standard imset from Theorem \ref{thm: standard imset}.
For $\abs{\mathcal{V}} \leq 4$, all MAGs have combinatorial standard imsets that are 
perfectly Markovian with respect to the graph. We summarize the information in Table \ref{tab: number of graph works} for sizes of graph between 4 and 7. 
For $\abs{\mathcal{V}} = 5$, the only failure is the bidirected 5-cycle mentioned in Section \ref{sec: non Markovian example}.
For $\abs{\mathcal{V}} = 6$, all the imsets which are 
perfectly Markovian with respect to the graph are also combinatorial. For those models where the imset is not perfectly Markovian, only two of 
them are not combinatorial---one of these imsets \emph{is} structural, and both graphs are shown in Figures \ref{fig:counterexample graph}(ii--iii) in Section \ref{sec:main_res}.

\begin{table}[h]
    \centering
    \begin{tabular}{c|r|rrr}
    \toprule
     $|V|$ & 
     {equiv.~classes} & PM & SNPM & NS\\
    \midrule
4 & 19 & 19 & 0 & 0\\
5 & 285 & 284 & 1 & 0\\
6 & 13,303 & 13,248 & 54 & 1\\
7$^*$ & 1,161,461 & 1,146,501 & 14,562 & 8\\
\bottomrule
    \end{tabular}
    \caption{Number of equivalence classes of connected maximal ancestral graphs for various numbers of nodes (for 7 nodes we only include graphs having at most 13 or at least 18 edges.)  PM represents models that are Perfectly Markovian, SNPM those where the imset is Structural but represents a strict subset of the independences (so is Not Perfectly Markovian), and NS where the imset is Not Structural.}
    \label{tab: number of graph works}
\end{table}

\section{Discussion}\label{sec:Discussion}

\subsection{Relation to the work in Andrews}



The main result of \citet{andrews22} is an algorithm that computes an imset that is
guaranteed to define the model induced by an ADMG $\G$, but its complexity 
is exponential in the number of vertices even restricting the maximum head size.  In comparison, our imset $u_\G^r$ in Algorithm \ref{alg:algo3} can be constructed in polynomial time if the maximum head size is bounded.  In 
addition, the imset that \citeauthor{andrews22} gives has much higher degree than $u_\G^r$ for most graphs. 


A focus of both works is on consistent scoring and searching for an optimal model, 
and they are the first to introduce $\BIC_{\MF}$, the inner product between empirical entropy and the standard imset from the 0-1 characteristic imset, to approximate the actual BIC. They conduct a brute force search for Gaussian MAG models with five variables by scoring all MECs and selecting the best one. The result is better than FCI \citep{spirtes2000causation} and GFCI \citep{ogarrio2016hybrid}, and comparable to scoring by the true BIC \citep{drton2009computing}. 

Since, as we have seen, for any MAG with at most five vertices, $u_{\G}$ is perfectly Markovian w.r.t.~the graph, this application of $\BIC_{\MF}$ is valid. (Note that \citet{andrews22} finesses this by assuming the model is a graphoid, and that therefore the imset for the bidirected 5-cycle is also perfectly Markovian.) They also express this slightly differently by saying the imset is valid if every set in the parametrizing set has cardinality at most five (see their Theorem 3.1).  

For general MAGs, our algorithmic advantage in constructing $u_\G^r$ comes from the refined power DAGs, which consider only one parent of each head and marginalize one vertex at each time.  The approach of \citeauthor{andrews22} is to take intersections between all subsets of every Markov blanket which gives the exponential complexity.
As well as looking all graphs up to a certain size, we present theoretical proofs that for simple MAGs and a class of bidirected graphs, the standard imset $u_{\G}$ is valid. 


Some extra assumptions are made by \citeauthor{andrews:phd} in his Theorem 3.1. In particular he makes an assumption that one can use the \emph{intersection axiom}  (see Section \ref{sec:graphoids}), which does not generally hold for conditional independence.  This means that our `standard' imsets work for every MAG with size less than or equal to five. We emphasize that we make no such assumption, and use only properties that hold for all distributions, i.e.~the semi-graphoid axioms. 
Stronger rules can change the simplest representation (standard imset) of an independence models. 
For example, if one assumes positivity of the distribution, then we have already seen that the simplest imset representation for the 5-cycles has degree five, whereas the minimum degree needed without this condition is six.  
We have verified that using compositional graphoids is sufficient for 51 of the 54 graphs with five or six vertices such that $\I_{u_\G} \neq \I_\G$, to define the model.  However, there are three other graphs that require additional axioms.  Specifically, the `standard' imsets for these graphs will define the model if we can assume \emph{ordered downward stability} (see Section \ref{sec:graphoids}).  For further details, see Example \ref{exp: downward stability example} in Appendix \ref{sec:graphoids}.

\subsection{Future work}

This paper leaves open several interesting questions.  First, if we are to use an algorithm to search for the optimal MAG using the $\BIC_{\MF}$-score, what moves should it propose and make?  The 0-1 imset formulation
makes it easy to search, as---provided that Meek's conjecture \citep{meek1997graphical} for MAGs is true---we only need to search by first adding, and later deleting sets from the parametrizing set.


 Another still open problem is to provide a complete solution to obtaining standard imsets of MAGs. This is analogous to undirected graphs, where \citet{kashimura2015standard} provide a solution for the standard imset of non-decomposable undirected graphs; previously graphs with chordless cycles were not covered by the theory.  It seems clear from the example of the bidirected 5-cycle that we will have to choose between having the lowest possible degree and symmetry of the standard imset.  We hope that this can expand the class of graphs that can be scored further, and that it may lead to algorithms that are more accurate or more efficient than the current state-of-the-art.

\subsection*{Acknowledgements}

We thank the Associate Editor and two anonymous reviewers for their very helpful comments and suggestions, including an error in an earlier version of Proposition \ref{prop:induced_subgraph}.  We also thank Bryan Andrews, James Cussens, and Milan Studen\'y for helpful discussions, and Joseph Ramsey for supplying the Tetrad code used to query independences for Example \ref{exp: downward stability example}.


\bibliographystyle{abbrvnat} 
\bibliography{refs.bib}       

\begin{thebibliography}{33}
\providecommand{\natexlab}[1]{#1}
\providecommand{\url}[1]{\texttt{#1}}
\expandafter\ifx\csname urlstyle\endcsname\relax
  \providecommand{\doi}[1]{doi: #1}\else
  \providecommand{\doi}{doi: \begingroup \urlstyle{rm}\Url}\fi

\bibitem[Andrews(2021)]{andrews:phd}
B.~J. Andrews.
\newblock \emph{Inducing Sets: A New Perspective for Ancestral Graph {M}arkov
  Models}.
\newblock PhD thesis, University of Pittburgh, 2021.
\newblock URL \url{http://d-scholarship.pitt.edu/42158/}.

\bibitem[Andrews et~al.(2022)Andrews, Cooper, Richardson, and
  Spirtes]{andrews22}
B.~J. Andrews, G.~F. Cooper, T.~S. Richardson, and P.~Spirtes.
\newblock The $m$-connecting imset and factorization for {ADMG} models.
\newblock \emph{arXiv preprint:2207.08963}, 2022.

\bibitem[Chen et~al.(2021)Chen, Dash, and Gao]{chen2021integer}
R.~Chen, S.~Dash, and T.~Gao.
\newblock Integer programming for causal structure learning in the presence of
  latent variables.
\newblock In \emph{Proc. 38th Int. Conf. Machine Learning}, pages 1550--1560.
  PMLR, 2021.

\bibitem[Chickering(2002)]{chickering2002optimal}
D.~M. Chickering.
\newblock Optimal structure identification with greedy search.
\newblock \emph{J. Mach. Learn. Res.}, 3\penalty0 (Nov):\penalty0 507--554,
  2002.

\bibitem[Claassen and Bucur(2022)]{claassen2022greedy}
T.~Claassen and I.~G. Bucur.
\newblock Greedy equivalence search in the presence of latent confounders.
\newblock In \emph{Proc. 38th Conf. Uncertainty in Artificial Intelligence}.
  PMLR, 2022.

\bibitem[Drton and Richardson(2008)]{drton2008binary}
M.~Drton and T.~S. Richardson.
\newblock Binary models for marginal independence.
\newblock \emph{J. R. Stat. Soc., Ser. B}, 70\penalty0 (2):\penalty0 287--309,
  2008.

\bibitem[Drton et~al.(2009)Drton, Eichler, and Richardson]{drton2009computing}
M.~Drton, M.~Eichler, and T.~S. Richardson.
\newblock Computing maximum likelihood estimates in recursive linear models
  with correlated errors.
\newblock \emph{J. Mach. Learn. Res.}, 10\penalty0 (10), 2009.

\bibitem[Evans(2020)]{evans20model}
R.~J. Evans.
\newblock Model selection and local geometry.
\newblock \emph{Ann. Statist.}, 48\penalty0 (6):\penalty0 3513--3544, 2020.

\bibitem[Evans and Richardson(2010)]{evans2010maximum}
R.~J. Evans and T.~S. Richardson.
\newblock Maximum likelihood fitting of acyclic directed mixed graphs to binary
  data.
\newblock In \emph{Proc. 26th Conf. Uncertainty in Artificial Intelligence}.
  PMLR, 2010.

\bibitem[Evans and Richardson(2013)]{MLL}
R.~J. Evans and T.~S. Richardson.
\newblock Marginal log-linear parameters for graphical {Markov} models.
\newblock \emph{J. R. Stat. Soc., Ser. B}, 75\penalty0 (4):\penalty0 743--768,
  Sep 2013.

\bibitem[Evans and Richardson(2014)]{Evans2014}
R.~J. Evans and T.~S. Richardson.
\newblock Markovian acyclic directed mixed graphs for discrete data.
\newblock \emph{Ann. Statist.}, 42\penalty0 (4):\penalty0 1452--1482, 2014.

\bibitem[Hemmecke et~al.(2008)Hemmecke, Morton, Shiu, Sturmfels, and
  Wienand]{hemmecke2008three}
R.~Hemmecke, J.~Morton, A.~Shiu, B.~Sturmfels, and O.~Wienand.
\newblock Three counter-examples on semi-graphoids.
\newblock \emph{Comb. Probab. Comput.}, 17\penalty0 (2):\penalty0 239--257,
  2008.

\bibitem[Hu and Evans(2020)]{hu2020faster}
Z.~Hu and R.~Evans.
\newblock Faster algorithms for {Markov} equivalence.
\newblock In \emph{Proc. 36th Conf. Uncertainty in Artificial Intelligence}.
  PMLR, 2020.

\bibitem[Jaakkola et~al.(2010)Jaakkola, Sontag, Globerson, and
  Meila]{jaakkola2010learning}
T.~Jaakkola, D.~Sontag, A.~Globerson, and M.~Meila.
\newblock Learning \text{Bayesian} network structure using \text{LP}
  relaxations.
\newblock In \emph{Proc. 13th Int. Conf. Artificial Intelligence and
  Statistics}, pages 358--365. JMLR Workshop and Conference Proceedings, 2010.

\bibitem[Kashimura and Takemura(2015)]{kashimura2015standard}
T.~Kashimura and A.~Takemura.
\newblock Standard imsets for undirected and chain graphical models.
\newblock \emph{Bernoulli}, 21\penalty0 (3):\penalty0 1467--1493, 2015.

\bibitem[Lauritzen(1996)]{lauritzen1996graphical}
S.~L. Lauritzen.
\newblock \emph{Graphical models}, volume~17.
\newblock Clarendon Press, 1996.

\bibitem[Meek(1997)]{meek1997graphical}
C.~Meek.
\newblock \emph{Graphical Models: Selecting causal and statistical models}.
\newblock PhD thesis, Carnegie Mellon University, 1997.

\bibitem[Ogarrio et~al.(2016)Ogarrio, Spirtes, and Ramsey]{ogarrio2016hybrid}
J.~M. Ogarrio, P.~Spirtes, and J.~Ramsey.
\newblock A hybrid causal search algorithm for latent variable models.
\newblock In \emph{Proc. 8th Int. Conf. Probabilistic Graph. Models}, pages
  368--379. PMLR, 2016.

\bibitem[Ramsey et~al.(2006)Ramsey, Spirtes, and Zhang]{ramsey06adjacency}
J.~Ramsey, P.~Spirtes, and J.~Zhang.
\newblock Adjacency-faithfulness and conservative causal inference.
\newblock In \emph{Proc. 22nd Conf. Uncertainty in Artificial Intelligence}.
  PMLR, 2006.

\bibitem[Rantanen et~al.(2021)Rantanen, Hyttinen, and
  J{\"a}rvisalo]{rantanen2021maximal}
K.~Rantanen, A.~Hyttinen, and M.~J{\"a}rvisalo.
\newblock Maximal ancestral graph structure learning via exact search.
\newblock In \emph{Proc. 37th Conf. Uncertainty in Artificial Intelligence}.
  PMLR, 2021.

\bibitem[Richardson(2003)]{richardlocalmarkov}
T.~S. Richardson.
\newblock {M}arkov properties for acyclic directed mixed graphs.
\newblock \emph{Scand. J. Stat.}, 30\penalty0 (1):\penalty0 145--157, 2003.

\bibitem[Richardson(2009)]{richardson2009factorization}
T.~S. Richardson.
\newblock A factorization criterion for acyclic directed mixed graphs.
\newblock In \emph{Proc. 25th Conf. Uncertainty in Artificial Intelligence},
  2009.

\bibitem[Richardson and Spirtes(2002)]{richardson2002}
T.~S. Richardson and P.~Spirtes.
\newblock Ancestral graph {Markov} models.
\newblock \emph{Ann. Statist.}, 30\penalty0 (4):\penalty0 962--1030, 08 2002.

\bibitem[Richardson et~al.(2023)Richardson, Evans, Robins, and
  Shpitser]{richardson23nested}
T.~S. Richardson, R.~J. Evans, J.~M. Robins, and I.~Shpitser.
\newblock Nested {Markov} properties for acyclic directed mixed graphs.
\newblock \emph{Ann. Statist.}, 51\penalty0 (1):\penalty0 334--361, 2023.

\bibitem[Sadeghi(2017)]{sadeghi2017faithfulness}
K.~Sadeghi.
\newblock Faithfulness of probability distributions and graphs.
\newblock \emph{J. Mach. Learn. Res.}, 18\penalty0 (148):\penalty0 1--29, 2017.

\bibitem[Sadeghi and Lauritzen(2014)]{sadeghi2014markov}
K.~Sadeghi and S.~Lauritzen.
\newblock Markov properties for mixed graphs.
\newblock \emph{Bernoulli}, 20\penalty0 (2):\penalty0 676--696, 2014.

\bibitem[Schwarz(1978)]{schwarz1978estimating}
G.~Schwarz.
\newblock Estimating the dimension of a model.
\newblock \emph{Ann. Statist.}, 6\penalty0 (2):\penalty0 461--464, 1978.

\bibitem[Spirtes et~al.(2000)Spirtes, Glymour, Scheines, and
  Heckerman]{spirtes2000causation}
P.~Spirtes, C.~N. Glymour, R.~Scheines, and D.~Heckerman.
\newblock \emph{Causation, Prediction, and Search}.
\newblock MIT Press, 2000.

\bibitem[Studeny(1992)]{studeny1992conditional}
M.~Studeny.
\newblock Conditional independence relations have no finite complete
  characterization.
\newblock \emph{Inf. Theory Statist. Decis. Funct. Random Process. Trans. 11th
  Prague Conf.}, 1992.

\bibitem[Studen\'y(2005)]{studeny2006probabilistic}
M.~Studen\'y.
\newblock \emph{On Probabilistic Conditional Independence Structures}.
\newblock Springer Science \& Business Media, 2005.

\bibitem[Studen{\'y} et~al.(2010)Studen{\'y}, Hemmecke, and
  Lindner]{studeny2010characteristic}
M.~Studen{\'y}, R.~Hemmecke, and S.~Lindner.
\newblock Characteristic imset: a simple algebraic representative of a
  {Bayesian} network structure.
\newblock In \emph{Proc. 5th Eur. WS. Probabilistic Graphical Models}, pages
  257--264. HIIT Publications, 2010.

\bibitem[Triantafillou and Tsamardinos(2016)]{triantafillou2016score}
S.~Triantafillou and I.~Tsamardinos.
\newblock Score-based vs constraint-based causal learning in the presence of
  confounders.
\newblock In \emph{Proc. 32rd Conf. Uncertainty in Artificial Intelligence}.
  PMLR, 2016.

\bibitem[Zhang(2007)]{zhang2007characterization}
J.~Zhang.
\newblock A characterization of {M}arkov equivalence classes for directed
  acyclic graphs with latent variables.
\newblock In \emph{Proc. 23rd Conf. Uncertainty in Artificial Intelligence},
  pages 450--457, 2007.

\end{thebibliography}

\newpage

\begin{appendix}



\section{More definitions}\label{sec:more definition}

\subsection{Separation criterion}\label{sec:msep}
For a path $\pi$ with vertices $v_{i}$, $0 \leq i \leq k$ we call $v_{0}$ and $v_{k}$ the \emph{endpoints} of $\pi$ and any other vertices the \emph{nonendpoints} of $\pi$. For a nonendpoint $w$ in $\pi$, it is a \emph{collider} if $\rqarrow$ $w$ $\lqarrow$ on $\pi$ and a \emph{noncollider} otherwise (an edge $\rqarrow$ is either $\rightarrow$ or $\leftrightarrow$). For two vertices $a,b$ and a disjoint set of vertices $C$ in $\G$ ($C$ might be empty), a path $\pi$ is \emph{m-connecting} $a,b$ given $C$ if (i) $a,b$ are endpoints of $\pi$, (ii) every noncollider is not in $C$ and (iii) every collider is in $\an_{\G}(C)$. A \emph{collider path} is a path where all the nonendpoints are colliders.

\begin{definition}
For three disjoint sets $A,B$ and $C$ ($A,B$ are non-empty), $A$ and $B$ are \emph{m-separated} by $C$ in $\G$ if there is no m-connecting path between any $a \in A$ and any $b \in B$ given $C$. We denote m-separation by $A \perp_m B \mid C$. 
\end{definition}

\begin{definition}\label{ordinary}
A distribution $P(X_{\mathcal{V}})$ is said to satisfy the \emph{global Markov property} with respect to an ADMG $\G$ if whenever $A \perp_m B\mid C$ in $\G$, we have $X_A \indep X_B\mid X_C$ under $P$.
\end{definition}

\subsection{Graphoids} \label{sec:graphoids}

Graphoids are rules for logical implications between conditional independences. There is no finite axiomatization of conditional independences \citep{studeny1992conditional}, but there are some rules that hold for any distribution, for example, the \emph{semi-graphoids}:
\begin{itemize}
    \item[(1)] \emph{symmetry}: $X \indep Y \mid Z \iff Y \indep X \mid Z$; 
    \item[(2)] \emph{decomposition}: $X \indep Y,W \mid Z \implies X \indep Y \mid Z$;
    \item[(3)] \emph{weak union}: $X \indep Y,W \mid Z \implies X \indep W \mid Y,Z$;
    \item[(4)] \emph{contraction}: $(X \indep Y \mid Z) \text{ and } (X \indep W \mid Y, Z) \implies X \indep Y,W \mid Z$.
\end{itemize}

Under the assumption that the distribution is positive, there is an additional property that is guaranteed to hold:
\begin{itemize}
    \item[(5)] \emph{intersection}: $(X \indep Y \mid W,Z) \text{ and } (X \indep W \mid Y,Z) \implies X \indep Y,W \mid Z.$ 
\end{itemize}

The semi-graphoids with (5) are called \emph{graphoids}. Also if one assumes that the distribution is Markov w.r.t.\ a graph, then the following axiom also holds:
\begin{itemize}
    \item[(6)]  \emph{composition}: $(X \indep Y \mid Z) \text{ and } (X \indep W \mid Z) \implies X \indep Y,W \mid Z.$
\end{itemize}

Graphoids with (6) are called \emph{compositional graphoids}, 

\citet{sadeghi2017faithfulness} defines three mode graphoid-like rules; two of these require a `pre-order' $\prec$, which for directed mixed graphs is essentially the partial order implied by the directed edges.
\begin{itemize}
    \item[(7)] \emph{singleton transitivity}: 
    $(X_i \indep X_j \mid Y) \text{ and } (X_i \indep X_j \mid Y, X_k) \implies (X_i \indep X_k \mid Y)\text{ or }(X_j \indep X_k \mid Y).$
    \item[(8)] \emph{ordered upward stability}: 
    $(X_i \indep X_j \mid Y)  \implies (X_i \indep X_j \mid Y, X_k)$ for any $k \prec i$ or $k \prec j$.
    \item[(9)]  \emph{ordered downward stability}: 
    $(X_i \indep X_j \mid Y, X_k)  \implies (X_i \indep X_j \mid Y)$ for any $k$ that is either larger than or incomparable to $i$, $j$ and every element of $Y$.
\end{itemize}
Here is an example where ordered downward stability is needed to deduce all the independences in the graph.

\begin{example}\label{exp: downward stability example}

\begin{figure}
    \centering
    \begin{tikzpicture}
     [rv/.style={circle, draw, thick, minimum size=6mm, inner sep=0.8mm}, node distance=14mm, >=stealth]
  \pgfsetarrows{latex-latex}
\begin{scope}
  \node (0) {};
  \node[rv]  (1) at (0:12.5mm)            {$1$};
  \node[rv]  (2) at (60:12.5mm)            {$2$};
\node[rv]  (3) at (120:12.5mm)            {$3$};
\node[rv]  (4) at (180:12.5mm)            {$4$};
\node[rv]  (5) at (240:12.5mm)            {$5$};
\node[rv]  (6) at (300:12.5mm)            {$6$};
  \draw[beg] (1) -- (2);
  \draw[beg] (2) -- (3);
  \draw[beg] (3) -- (4);
  \draw[beg] (4) -- (5);
  \draw[beg] (6) -- (5);
  \draw[beg] (6) -- (1);
  \draw[->, very thick, blue] (1) -- (3);
  \node[below of=0, yshift=-7mm] {(i)};
  \end{scope}
\begin{scope}[xshift=4.5cm]
  \node (0) {};
  \node[rv]  (1) at (0:12.5mm)            {$1$};
  \node[rv]  (2) at (60:12.5mm)            {$2$};
\node[rv]  (3) at (120:12.5mm)            {$3$};
\node[rv]  (4) at (180:12.5mm)            {$4$};
\node[rv]  (5) at (240:12.5mm)            {$5$};
\node[rv]  (6) at (300:12.5mm)            {$6$};
  \draw[beg] (1) -- (2);
  \draw[beg] (2) -- (3);
  \draw[beg] (3) -- (4);
  \draw[beg] (4) -- (5);
  \draw[beg] (6) -- (5);
  \draw[beg] (6) -- (1);
  \draw[->, very thick, blue] (1) -- (4);
  \node[below of=0, yshift=-7mm] {(ii)};
  \end{scope}
  \begin{scope}[xshift=9cm]
  \node (0) {};
  \node[rv]  (1) at (0:12.5mm)            {$1$};
  \node[rv]  (2) at (60:12.5mm)            {$2$};
\node[rv]  (3) at (120:12.5mm)            {$3$};
\node[rv]  (4) at (180:12.5mm)            {$4$};
\node[rv]  (5) at (240:12.5mm)            {$5$};
\node[rv]  (6) at (300:12.5mm)            {$6$};
  \draw[beg] (1) -- (2);
  \draw[beg] (2) -- (3);
  \draw[beg] (3) -- (4);
  \draw[beg] (4) -- (5);
  \draw[beg] (6) -- (5);
  \draw[beg] (6) -- (1);
  \draw[->, very thick, blue] (1) -- (3);
  \draw[->, very thick, blue] (1) -- (4);
  \node[below of=0, yshift=-7mm] {(iii)};
  \end{scope}
  \end{tikzpicture}
    \caption{Three graphs that do not define the model without assuming rules beyond composition and intersection.}
    \label{fig:ds_graphs}
\end{figure}
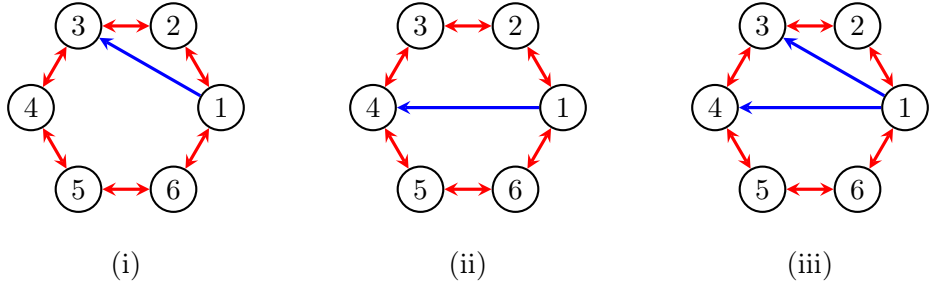
Consider the bidirected 6-cycle with one additional directed edge in Figure \ref{fig:ds_graphs}(i). The imset $u_\G$ of this graph $\G$ is structural and $\I_{u_\G}$ contains the following independences:
\begin{align*}
    6 &\indep 2 & 6 &\indep 4 \mid 1,3 &
    2 &\indep 4 \mid 1,6 & 2 &\indep 5 \mid 4,6\\
    1 &\indep 4 &
    1 &\indep 5 \mid 2,4 &
    6 &\indep 3 \mid 1,5 &
    3 &\indep 5 \mid 1,2.
\end{align*}
In particular, even with the intersection and compositional graphoids, we do not get any joint independences, when the graph says we should have (e.g.) $2 \indep 4, 5, 6$.  In the list of the local Markov property, it does not satisfy $4 \indep 1,2$ or $5 \indep 1,2$ or $4 \indep 2,6$. However, ordered downward stability does help.  The only pair that is ordered is $1 \to 3$, with all other pairs being incomparable.  Then we can deduce $2 \indep 5$ by removing 4 and 6 from $2 \indep 5 \mid 4,6$, and use composition to obtain $2 \indep 5,6$; we can then obtain $2 \indep  4,5,6$.  Similarly we can remove first 3 and then 1 from $6 \indep 4 \mid 1,3$ to obtain the marginal independence, and get $4 \indep 2,6$ using composition (or just contraction). Also, we can remove $2,4$ from $5 \indep 1 \mid 2,4$ to get $5 \indep 1$, then combined with $5 \indep 2$, we have $5 \indep 1,2$. On the other hand, we can obtain $4 \indep 2$ by removing 6 and 1 from $4 \indep 2 \mid 1,6$, then use composition to get $4 \indep 1,2$, as $4 \indep 1$ is already in $\I_{u_\G}$.
\end{example}

\section{Missing materials from section \ref{sec:main_res}} \label{sec:main_res_res}

A useful fact about MAGs is that any induced subgraph of 
a MAG is itself a MAG.

\begin{lemma} \label{lem:subMAGs_are_MAGs}
Let $\G$ be a graph that is maximal and ancestral.  Then for
any subset of the vertices $W \subseteq V$, so is the induced
subgraph $\G_W$.
\end{lemma}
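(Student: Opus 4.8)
The plan is to verify, one at a time, the three properties that make $\G_W$ a MAG: that it is an acyclic directed mixed graph, that it is ancestral, and that it is maximal. The first two are essentially immediate, since $\G_W$ keeps only those edges of $\G$ whose endpoints both lie in $W$, with orientations unchanged. In particular $\G_W$ contains only directed and bidirected edges, and any directed cycle in $\G_W$ would be a directed cycle in $\G$, so $\G_W$ is acyclic and hence an ADMG. For ancestrality, I would use that a bidirected edge of $\G_W$ is a bidirected edge of $\G$ and a directed path in $\G_W$ is a directed path in $\G$; hence $\sib_{\G_W}(v) \subseteq \sib_\G(v)$ and $\an_{\G_W}(v) \subseteq \an_\G(v)$ for every $v \in W$, so $\sib_{\G_W}(v) \cap \an_{\G_W}(v) \subseteq \sib_\G(v) \cap \an_\G(v) = \emptyset$ by ancestrality of $\G$.

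The substance of the lemma, and the step I expect to be the main obstacle, is maximality, because m-separation behaves subtly both under passing to a subgraph and under shrinking the conditioning set. First I would record that nonadjacency is inherited: if $a,b \in W$ were adjacent in $\G$, the connecting edge would lie in $\G_W$, so nonadjacency in $\G_W$ forces nonadjacency in $\G$. Thus, given $a,b \in W$ nonadjacent in $\G_W$, maximality of $\G$ supplies a set $C^* \subseteq V \setminus \{a,b\}$ with $a \perp_m b \mid C^*$ in $\G$. My candidate separating set inside the subgraph is $C := C^* \cap W$, which clearly satisfies $C \subseteq W \setminus \{a,b\}$.

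It then remains to show $a \perp_m b \mid C$ in $\G_W$, which I would argue by contradiction. Suppose $\pi$ were an m-connecting path between $a$ and $b$ given $C$ in $\G_W$; the claim is that $\pi$, read as a path in $\G$, is m-connecting given $C^*$, contradicting the choice of $C^*$. The collider/non-collider status of each non-endpoint of $\pi$ depends only on the orientations of its two incident edges on $\pi$, which are identical in $\G_W$ and in $\G$, so the two readings of $\pi$ agree on which vertices are colliders. Each non-collider of $\pi$ lies in $W$ and avoids $C = C^* \cap W$, hence avoids $C^*$. Each collider of $\pi$ is activated in $\G_W$, that is, lies in $\an_{\G_W}(C)$, and since $\an_{\G_W}(C) \subseteq \an_\G(C) \subseteq \an_\G(C^*)$ (using $C \subseteq C^*$ and monotonicity of the ancestor operator), it also lies in $\an_\G(C^*)$. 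Therefore $\pi$ satisfies the m-connection criteria in $\G$ relative to $C^*$, the desired contradiction; so no such $\pi$ exists and $\G_W$ is maximal.

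The crux is exactly this final step: one must check that both the non-collider condition and the collider (ancestor) condition survive the passage from $\G_W$ with conditioning set $C$ to $\G$ with the larger set $C^*$, and the argument succeeds only because both the edge set and the ancestor sets of $\G_W$ are contained in those of $\G$, so that activating a collider or exposing a non-collider is strictly \emph{harder} in the subgraph. An alternative and equally clean route would replace this separating-set bookkeeping by the inducing-path characterization of maximality in \citet{richardson2002}: an inducing path in $\G_W$ between nonadjacent $a,b$ would again be an inducing path in $\G$, since colliders and ancestry are inherited, contradicting maximality of $\G$.
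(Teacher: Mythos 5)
Your proof is correct, and for the substantive part (maximality) it takes a genuinely different route from the paper. The paper disposes of ancestrality by citing Proposition 3.5 of \citet{richardson2002} and of maximality in one line via the inducing-path characterization: a path in $\G_W$ is a path in $\G$ with the same colliders and no larger ancestor sets, so an inducing path between nonadjacent vertices in $\G_W$ would already be one in $\G$. You instead argue maximality directly from the definition, exhibiting the candidate separator $C = C^* \cap W$ and checking that any m-connecting path in $\G_W$ given $C$ lifts to an m-connecting path in $\G$ given $C^*$ (non-colliders lie in $W$ and so avoid $C^*$, colliders satisfy $\an_{\G_W}(C) \subseteq \an_\G(C) \subseteq \an_\G(C^*)$). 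This is essentially a self-contained proof of the contrapositive of the paper's Proposition \ref{prop:msep_subgraph}, which the paper proves separately and more tersely. What your route buys is independence from the inducing-path theorem and an explicit witness for the separating set; what the paper's route buys is brevity, at the cost of invoking the equivalence between maximality and the absence of inducing paths. You also note the inducing-path alternative at the end, which is exactly the paper's argument, so both approaches are covered.
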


\begin{proof}
The ancestrality follows from Proposition 3.5 of \citet{richardson2002}.  
For maximality, note that there are no more paths in an induced 
subgraph than in the original graph, so in particular there cannot be 
any more inducing paths.  Since all bidirected edges between vertices 
are preserved, this implies that the graph remains maximal.
\end{proof}

Another useful fact will concern conditional independences in induced subgraphs.

\begin{proposition} \label{prop:msep_subgraph}
Let $\G$ be a MAG and $\G_W$ an induced subgraph.  Then for $a,b \in W$, any m-separation $a \perp_m b \mid C$ holding in $\G$ implies that $a \perp_m b \mid E$ holds in $\G_W$, where $E = C \cap W$.
Additionally, if $W$ is an ancestral set, then $a \perp_m b \mid E$ in $\G_W$ if and only if it also holds in $\G$.
\end{proposition}

\begin{proof}
Removing vertices that are not on the path from the conditioning set can only block a path, so the removal of vertices in $C \setminus E$ will not affect the status of any of the paths through $W$.  Hence the result holds.  The result for ancestral subgraphs follows, for example, from the results of \citet{richardlocalmarkov}.
\end{proof}

The above result is also useful for considering the parametrizing set.  
Let $\mathcal{P}(W)$ denote the \emph{power set} of $W$, i.e.~the collection of subsets of $W$. 

\begin{corollary}
We have $\mathcal{S}(\G_W) \subseteq \mathcal{S}(\G) \cap \mathcal{P}(W)$.
\end{corollary}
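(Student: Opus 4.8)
The plan is to prove the two containments separately. The inclusion $\mathcal{S}(\G_W) \subseteq \mathcal{P}(W)$ is immediate, since every head of $\G_W$ and its tail consist only of vertices of $W$, so every element of $\mathcal{S}(\G_W)$ is a subset of $W$. The real content is $\mathcal{S}(\G_W) \subseteq \mathcal{S}(\G)$, and for this I would argue by contraposition, using the characterization of non-membership in the parametrizing set supplied by Proposition \ref{prop:parametrizing set and independence}.

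So I would fix $S \subseteq W$, assume $S \notin \mathcal{S}(\G)$, and aim to conclude $S \notin \mathcal{S}(\G_W)$. By Proposition \ref{prop:parametrizing set and independence} applied to $\G$, there exist vertices $a,b \in S$ and a set $D$ with $a,b \notin D$, with $S \subseteq D \cup \{a,b\}$, and with $a \perp_m b \mid D$ in $\G$. Writing $D_W = D \cap W$ and $D_{\bar W} = D \setminus W$, I would then invoke Proposition \ref{prop:msep_subgraph} (taking its $C$ to be $D_W$ and its $E$ to be $D_{\bar W}$) to obtain $a \perp_m b \mid D_W$ in $\G_W$; intuitively, deleting the vertices outside $W$ from both the graph and the conditioning set can only block connecting paths, never open them.

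It then remains to verify that $(a,b,D_W)$ is a valid witness of non-membership for $S$ in $\G_W$. Since $a,b \in S \subseteq W$ and $a,b \notin D$, we have $a,b \in W$ and $a,b \notin D_W$; intersecting $S \subseteq D \cup \{a,b\}$ with $W$ and using $S,\{a,b\} \subseteq W$ gives $S \subseteq D_W \cup \{a,b\}$. By Lemma \ref{lem:subMAGs_are_MAGs} the induced subgraph $\G_W$ is itself a MAG, so Proposition \ref{prop:parametrizing set and independence} applies to it, and the triple $(a,b,D_W)$ certifies $S \notin \mathcal{S}(\G_W)$. Taking the contrapositive yields $S \in \mathcal{S}(\G_W) \implies S \in \mathcal{S}(\G)$ for every $S \subseteq W$, which combined with the first paragraph establishes the claim.

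The only delicate point I anticipate is the bookkeeping in restricting the m-separation to $\G_W$: one must be careful that the conditioning set used in $\G_W$ is exactly $D \cap W$ and that the separating triple still ``covers'' $S$ in the precise sense demanded by Proposition \ref{prop:parametrizing set and independence}. This is routine, but it is the step where a containment error would most easily slip in; everything else is a direct application of the cited results.
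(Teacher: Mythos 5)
Your proof is correct and follows exactly the route the paper intends: the paper's own proof simply states that the corollary ``follows immediately from Propositions \ref{prop:parametrizing set and independence} and \ref{prop:msep_subgraph}'', and you have supplied precisely that argument with the witness bookkeeping made explicit. The only addition is your (correct) appeal to Lemma \ref{lem:subMAGs_are_MAGs} to justify applying Proposition \ref{prop:parametrizing set and independence} to the induced subgraph, which the paper leaves implicit.
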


\begin{proof}
This follows immediately from Propositions \ref{prop:parametrizing set and independence} and \ref{prop:msep_subgraph}.
\end{proof}

Note that, of course, the sets of size at most two are always identical in the original MAG and any induced subgraph, since these are just the adjacencies.

The next result shows that the certain graphs must not appear as induced subgraphs within a MAG, if we want the corresponding 
`standard' imset to be perfectly Markovian with respect to it.

\medskip

\begin{proa}{\ref{prop:induced_subgraph}}
Let $\G$ be a MAG, and suppose that for some ancestral subset $A \subset V$ we have that $u_{\G_A}$ is not Markovian with respect to $\G_A$.  Then the model $u_\G$ is not perfectly Markovian with respect to $\G$.
\end{proa}

\begin{proof}
Since the subgraph is ancestral, the independences it encodes 
are just those from the larger graph, with potentially smaller 
conditioning sets.  Now, suppose that it is not possible
to write the independences implied by the smaller graph in
such a way as to avoid repeating a set.  Then the same problem
will clearly arise in the larger graph, since (by Proposition 
\ref{prop:msep_subgraph}) we must specify isomorphic independences 
with potentially more restrictions on the conditioning set.
\end{proof}

\begin{remark}
We know from Theorem \ref{thm:imset as sum of elementary imset for general MAGs} that if we take an ancestral subset $W=A$, the structure of the imset is preserved.  Hence, if we marginalize the graph then the imset will just be the induced subimset over the entries that are subsets of $A$, and so by Proposition 9.3 of \citet{studeny2006probabilistic}, the imset will match the model for the ancestral subgraph.  Hence, in this case the result is clear.
\end{remark}

The following example shows that Proposition \ref{prop:induced_subgraph} does not work if a subset $W$ is not ancestral. This was pointed out by one of the anonymous reviewers.
\begin{example}
 \begin{figure}
    \centering
     \begin{tikzpicture}
  [rv/.style={circle, draw, thick, minimum size=6mm, inner sep=0.8mm}, node distance=14mm, >=stealth]
  \pgfsetarrows{latex-latex}
\begin{scope}
  \node (0) {};
    \node[rv]  (1) at (18:12.5mm)        {$1$};
  \node[rv]  (2) at (90:12.5mm)            {$2$};
\node[rv]  (3) at (162:12.5mm)            {$3$};
\node[rv]  (4) at (234:12.5mm)            {$4$};
\node[rv]  (5) at (306:12.5mm)            {$5$};
\node[rv] (6)        {$6$};
\draw[deg] (1) -- (6);
\draw[deg] (2) -- (6);
\draw[deg] (6) -- (4);
  \draw[beg] (1) -- (2);
  \draw[beg] (2) -- (3);
  \draw[beg] (3) -- (4);
  \draw[beg] (4) -- (5);
  \draw[beg] (1) -- (5);
  \end{scope}

\end{tikzpicture}
    \caption{A counter example for Prop \ref{prop:induced_subgraph} when $W$ is not ancestral}
    \label{fig: counterexample graph for prop:induced}
\end{figure}   

One can check that $u_{\langle 4,12\mid 6 \rangle}+u_{\langle 6,35 \mid 12 \rangle}+u_{\langle 1,3 \mid 5 \rangle}+u_{\langle 2,5 \mid 3\rangle}+u_{\langle 3,5 \rangle}$ is the standard imset of the MAG in Figure \ref{fig: counterexample graph for prop:induced} and is perfectly Markovian with respect to it. The subgraph induced by $\{1,2,3,4,5\}$ is the bidirected 5-cycle, however, and hence the `standard' imset does not work.

For the graph, if we were to interpret it using the nested property then, 
after fixing 6, we would have two additional constraints: $4 \indep 1 \mid 3$ and
$4 \indep 2 \mid 5$.  This model \emph{cannot} be rewritten in a manner that avoids 
overlap, and therefore if the imset represented the nested model, this graph would not
have a perfectly Markovian `standard' imset.  Of course, the imset \emph{does not} 
represent the nested model, and so this is merely an academic point. 
\end{example}

\begin{lema}{\ref{lemma:linear relation between u and c} }
For a structural imset $u$ and its characteristic imset $c$, we have that
$c(S) = 1$ if and only if $u$ contains no independence with $S$ as a constrained set, and otherwise $c(S) \leq 0$.
\end{lema}

\begin{proof}
As $u$ is structural, there exists an positive integer $k$ and some conditional independences $I$ such that
$$
k \cdot u = \sum_I u_I.
$$

The transformation from standard imsets to characteristic imsets is bijective and linear. 
Then 
\begin{align*}
    c(S) &= 1-\sum_{T: S \subseteq T \subseteq \mathcal{V}} u(T)\\
    &= 1-\sum_{T: S \subseteq T \subseteq \mathcal{V}} k^{-1} \sum_{I} u_{\langle a,b \cmid C\rangle}(T)\\
    &= 1 - k^{-1} \sum_{I} \sum_{C' \subseteq C} \delta_{abC'}(T).
\end{align*}
Hence we see that for any elementary imset that is represented in
$u$ the corresponding constrained sets are all strictly less than one, and hence at most zero; 
conversely any set not so represented will have value 1.
Notice that for an empty standard imset, its characteristic imset is just a vector of 1s. 
\end{proof}



\begin{cora}{\ref{cor:imset must agree on S(G)}}
Consider a MAG $\G$. For any structural imset $u$ such that $\I_{u}=\I_{\G}$, its characteristic imset $c$ must be an integer valued vector and satisfy the following:
\begin{itemize}
    \item[(i)] if $S \in \Sset(\G)$, then $c_\G(S) = 1$;
    \item[(ii)] if $S \notin \Sset(\G)$, then $c_\G(S) \leq 0$. 
\end{itemize}
\end{cora}

\begin{proof}
 This is a direct consequence of Lemma \ref{lemma:linear relation between u and c} and Proposition \ref{prop:parametrizing set and independence}.
 If $S \notin \Sset(\G)$ then there exists $k \in \mathbb{N}$ such 
 that $k \cdot u - u_{\langle a b |C \rangle}$ is combinatorial and 
 $\{a,b\} \subseteq S \subseteq \{a,b\} \cup C$, hence by the same 
 argument used in Lemma \ref{lemma:linear relation between u and c} we
 have $c(S) \leq 0$.

  If $S \in \Sset(\G)$ then there is no independence 
 represented by $\G$ that has $S$ as a constrained set. By Lemma \ref{lemma:linear relation between u and c}, $c(S) = 1$.
%
%
\end{proof}

\subsection{Simple MAGs}\label{sec: density of simple MAGs}

\begin{example} \label{exm:simple_MAG_large_district}
We first start by providing an example of a simple MAG with an arbitrarily 
large district.  This illustrates that a search algorithm over simple MAGs
is potentially very useful, since it includes many more causal models than
one would obtain by restricting the maximum district size to two or three.  
This is in contrast with other methods for score-based learning, which 
generally make this kind of restriction \citep[e.g.][]{chen2021integer}.  

\begin{figure}
\centering
  \begin{tikzpicture}
  [ev/.style={circle, minimum size=7mm, inner sep=0.5mm}, node distance=14mm, >=stealth]
  \pgfsetarrows{latex-latex}
\begin{scope}
  \node[ev]  (k)             {$v_k$};
  \node[ev, above right of=k] (kp)  {$p_{k}$};
  \node[ev, left of=k, yshift=6mm] (k1)  {$v_{k-1}$};
  \node[ev, above left of=k1] (k1p)  {$p_{k-1}$};
  \node[ev, right of=k1, yshift=6mm] (k2)  {$v_{k-2}$};
  \node[ev, above right of=k2] (k2p)  {$p_{k-3}$};
    \node[left of=k2, yshift=6mm] (k3)  {\phantom{$v_{k-3}$}};
    \node[xshift=7mm, yshift=6mm]  (dots) at (k3)             {$\vdots$};
    \node[xshift=7mm, yshift=2mm]  (dotsr) at (dots)             {$\vdots$};
    \node[xshift=-7mm, yshift=-2mm]  (dotsl) at (dots)             {$\vdots$};
      \node[ev, above of=k3, yshift=2mm]  (3)             {$v_3$};
        \node[right of=3, yshift=-6mm] (4)  {\phantom{$p_{k-1}$}};
  \node[ev, above left of=3] (3p)  {$p_{3}$};
  \node[ev, right of=3, yshift=6mm] (2)  {$v_{2}$};
  \node[ev, above right of=2] (2p)  {$p_{2}$};
  \node[ev, left of=2, yshift=6mm] (1)  {$v_{1}$};
  \node[ev, above left of=1] (1p)  {$p_{1}$};
  %
  \draw[deg] (kp) -- (k);
  \draw[beg] (k) -- (k1);
  \draw[deg] (k1p) -- (k1);
  \draw[beg] (k2) -- (k1);
  \draw[deg] (k2p) -- (k2);
  \draw[deg] (k2) -- (k);
  \draw[beg] (k2) -- (k3);
  \draw[deg] (k3) -- (k1);
    \draw[beg] (4) -- (3);
    \draw[deg] (2) -- (4);
    \draw[deg] (3p) -- (3);
  \draw[beg] (2) -- (3);
  \draw[deg] (1) -- (3);
  \draw[deg] (2p) -- (2);
  \draw[beg] (2) -- (1);
  \draw[deg] (1p) -- (1);
  \draw[deg] (3) -- (dotsl);
  \draw[deg] (dotsr) -- (k2);
  \end{scope}

\end{tikzpicture}
 \caption{Simple MAGs with arbitrarily large districts}
 \label{fig:simple_MAG_large_districts}
\end{figure}
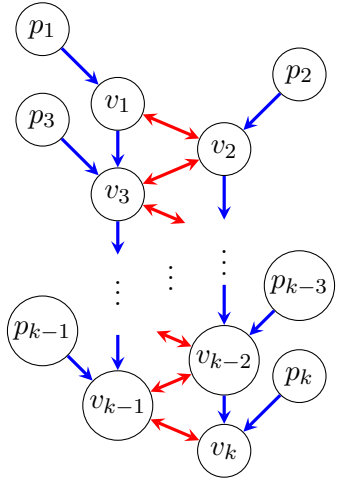

The graph in Figure \ref{fig:simple_MAG_large_districts} has a 
district of size $k$ and that district also has $k$ parents.  However,
note that the only heads are just the bidirected edges, so this is 
a simple MAG.  In addition, the independences for each of the $V_j$ are 
of the form
\begin{align*}
    V_j \indep P_1,V_1,\ldots,V_{j-3},P_{j-2},P_{k-1} \mid P_j,V_{j-2},
\end{align*}
and (unsurprisingly) there is no way to order the variables so that 
the independences are nested within one another as would be required
by a DAG.

\end{example}

We demonstrate how common simple MAGs are by using the following results. The first column in Table \ref{table: number of equiv.class} is the number vertices. Then the second column list the number of equivalence classes of MAGs with the corresponding number of vertices. The next two columns further count how many equivalence classes that contain at least one simple MAG and one DAG, respectively. In particular, the proportion of equivalence classes that contain simple MAGs decreases but not as sharply as that for DAGs.

Then Figure \ref{Figure: random MAGs being simple}, we simulate 1000 random MAGs for number of vertices ranging from five to forty, and plot empirical probabilities that the simulated graphs are Markov equivalent to some simple MAGs. The method we use to simulate MAGs is the same as \citet{claassen2022greedy}. We fix the average and maximal degree of each vertex to three and ten respectively. For each edge, the probability of being bidirected is 0.2. We simulate an ADMG first and project it into a Markov equivalent MAG. Then by converting the MAG to a \emph{partially ancestral graph} (PAG), we finally select a representative MAG from the PAG. The last two steps are from \citet{zhang2007characterization}. If the representative MAG is simple, we consider the original graph as being Markov equivalent to some simple MAGs. 

  \begin{minipage}{\textwidth}
  \begin{minipage}[b]{0.50\textwidth}
    \centering
    \includegraphics[scale=0.45]{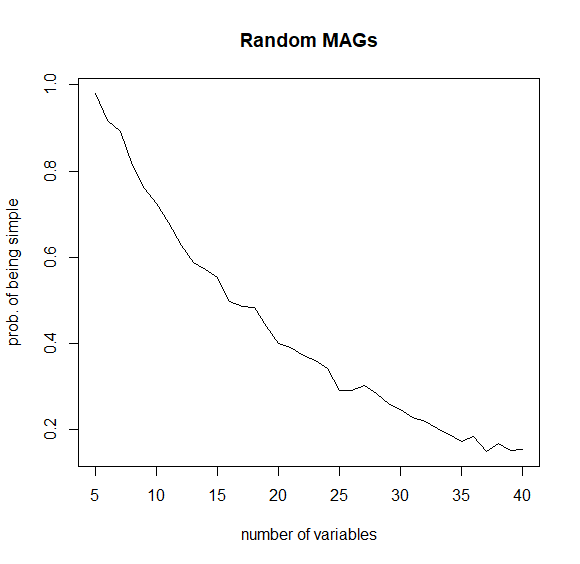}
    \captionof{figure}{A plot for probability of random graphs being Markov equivalent to some simple MAGs}\label{Figure: random MAGs being simple}
  \end{minipage}
  \hfill
  \begin{minipage}[b]{0.50\textwidth}
    
     \begin{tabular}[h]{c|c|c|c}
\toprule
$|V|$ & equiv.~classes & simple MAGs & DAGs\\
\midrule
5 & 285 & 205 & 119\\
6 & 13,303 & 6,278 & 2,025\\
7$^*$ & 1,161,461 & 331,310 & 57,661\\
\bottomrule
\multicolumn{4}{c}{$^*$having at most 13 or at least 18 edges.}\\
\end{tabular}
      \captionof{table}{Number of equivalence classes}\label{table: number of equiv.class}
    \end{minipage}
  \end{minipage}

\subsection{Proof of Theorem \ref{thm: standard imset with head size less 3}}\label{Proof of simple mag Theorem}

\begin{proof}[Proof of Theorem \ref{thm: standard imset with head size less 3}]
Given a topological ordering, we can write the standard imset $u_{\mathcal{G}}$ of a simple MAG $\G$ in the following way:
\begin{align*}
u_{\mathcal{G}}&=  \delta_{\mathcal{V}}-\delta_{\emptyset}-\sum_{i \in \mathcal{V}} \biggl\{ \delta_{\{i\} \cup \pa(i)} - \delta_{\pa(i)} \biggr\} \\
&\phantomrel{=}- \sum_{i \leftrightarrow j} \biggl\{ \delta_{\{i,j\} \cup \tail(i,j)}-\delta_{\{i\} \cup \tail(i,j)} -\delta_{\{j\} \cup \tail(i,j)} +\delta_{\tail(i,j)} \biggl\}\\
&= \sum_{i \in \mathcal{V}} \biggl\{ \delta_{[i]} - \delta_{[i-1]} -\delta_{\{i\} \cup \pa(i)} + \delta_{\pa(i)} \biggr\} \\
&\phantomrel{=}- \sum_{i \leftrightarrow j} \biggl\{  \delta_{\{i,j\} \cup \tail(i,j)}-\delta_{\{i\} \cup \tail(i,j)} -\delta_{\{j\} \cup \tail(i,j)} +\delta_{\tail(i,j)} \biggr\}\\
&= \sum_{i \in \mathcal{V}} \biggl\{ \delta_{[i]} - \delta_{[i-1]} -\delta_{\{i\} \cup \pa(i)} +\delta_{\pa(i)}\\
&\phantomrel{=} + \sum_{i \leftrightarrow j, i > j} - \delta_{\{i,j\} \cup \tail(i,j)}+\delta_{\{i\} \cup \tail(i,j)} +\delta_{\{j\} \cup \tail(i,j)} -\delta_{\tail(i,j)}\biggr\}\\
\end{align*}
For each vertex $i$, consider the topological ordering on all the $j$ such that $j < i$ and $i \leftrightarrow j$, where $j_1 < ... < j_k <i$. Hence we have:
\begin{align*}
u_{\mathcal{G}}&= \sum_{i \in \mathcal{V}} \biggl\{ \delta_{[i]} - \delta_{[i-1]} -\delta_{\{i\} \cup \pa(i)} +\delta_{\pa(i)}\\
&\phantomrel{=} + \sum_{l=1}^k -\delta_{\{i,j_l\} \cup \tail(i,j_l)}+\delta_{\{i\} \cup \tail(i,j_l)} +\delta_{\{j_l\} \cup \tail(i,j_l)} -\delta_{\tail(i,j_l)}\biggr\}\\
&= \sum_{i \in \mathcal{V}} \biggl\{ \delta_{[i]} - \delta_{[i-1]} -\delta_{\{i,j_k\} \cup \tail(i,j_k)} + \delta_{\{j_k\} \cup \tail(i,j_k)} \\
&\phantomrel{=} +\sum_{l=1}^{k-1} -\delta_{\{i,j_{l}\} \cup \tail(i,j_l)} + \delta_{\{i\} \cup \tail(i,j_{l+1})} +\delta_{\{j_{l}\} \cup \tail(i,j_l)}-\delta_{\tail(i,j_{l+1})} \\
&\phantomrel{=} +\delta_{\{i\} \cup \tail(i,j_1)}-\delta_{\tail(i,j_1)}-\delta_{\{i\} \cup \pa(i)} +\delta_{\pa(i)}\biggr\}.
\end{align*}
Now for each vertex $i$, consider the following list of conditional independence, denoted by $\mathbb{L}_i$:
\begin{align*}
i &\indep [i-1] \setminus (\tail(i,j_k) \cup \{j_k\}) \mid \tail(i,j_k)\cup \{j_k\}\\
i &\indep \tail(i,j_k) \setminus (\tail(i,j_{k-1}) \cup \{j_{k-1}\}) \mid \tail(i,j_{k-1})\cup \{j_{k-1}\}\\
&\phantomrel{=}\vdots\\
i &\indep \tail(i,j_2) \setminus (\tail(i,j_1) \cup \{j_{1}\}) \mid \tail(i,j_1) \cup \{j_1\}\\
i &\indep \tail(i,j_1) \setminus \pa(i) \mid \pa(i).
\end{align*}
It is straightforward to check that $u_{\mathcal{G}}$ is a sum of the semi-elementary imsets corresponding to the conditional independence list $\mathbb{L}^{\G} = \bigcup_{i \in \mathcal{V}} \mathbb{L}_i$. Notice that if there is no such $j$ for $i$ ($k=0$) then it reduces to the ordered local Markov property of DAGs: $$i \indep [i-1] \setminus \pa(i) \mid \pa(i).$$ Thus $u_{\mathcal{G}}$ is a combinatorial imset and every independence in $\mathbb{L}$ is in $\I_{u_{\G}}$.

By Theorem \ref{thm:head and tail Markov property}, any ordered local Markov property can be deduced from $\mathbb{L}^{\G} = \bigcup_i \mathbb{L}_i$ using semi-graphoid thus $\I_\G \subseteq \I_{u_\G}$. Now  the faithfulness result in \citet{richardson2002} shows that for every MAG $\G$ there exists a distribution
$P$ such that $\I_P = \I_{\G}$  and Theorem 5.2 in \citet{studeny2006probabilistic} implies the existence of
a structural imset $u$ such that $I_u = I_P$. Now every independence in $\mathbb{L}^{\G}$ is in $\I_{u}$ by Theorem \ref{thm:head and tail Markov property}, and $u_{\G}$ is
sum of the semi-elementary imsets corresponding to the independence in $\mathbb{L}^{\G}$, so by Lemma 
6.1 in \citet{studeny2006probabilistic}, we have $I_{u_{\G}} \subseteq \I_u = \I_{\G}$. Note the idea of this proof is similar to
the proof of Lemma 7.1 in \citet{studeny2006probabilistic}.
\end{proof}

\section{Some useful results on heads}

\subsection{Characterization of the marginalization sets}

We first study some properties of the marginalization sets $K$.

\begin{lemma}\label{the minimal set of vertices from one head to another head}
For a MAG $\G$ and two heads $H,H'$ whose maximal vertex is $i$, if $H \to^{K} H'$ and $H \to^{L} H'$ then $H \to^{K \cap L} H'$. 
\end{lemma}
\begin{proof}
Let $T = K \cap L$ and $\dot{K} = K \setminus T$. Suppose $H \to^{T} H''$. Now after marginalizing $T$, the vertices in $\dot{K}$ are either not in the same district as $i$ or lie in $H''$. If all of them are not in the same district with $i$, then $H' = H''$ because then marginalizing $\dot{K}$ would not change the barren subset of the district containing $i$, which are $H''$ and $H'$ before and after the marginalization, respectively. 

Let $\Tilde{K}$ be those lie in $H_3$ and similarly we define $\Tilde{T}$ for $L$, note that they are disjoint by definition, because $T$ is the intersection of $K$ and $L$, and $\dot{K}, \dot{K}_2$ are the complement of $T$ in $K$ and $L$ respectively. Also $H \to^{T \cup \Tilde{K}} H'$ and $H \to^{T \cup \Tilde{T}} H'$. 

We firstly consider $H \to^{T \cup \Tilde{K}} H'$. This implies that for any bidirected path from any vertex $i \in \Tilde{T}$ to any $w \in H' \cup T'$, there is a vertex from $\Tilde{K}$ on the path, otherwise some vertices of $\Tilde{T}$ would be preserved, which is a contradiction. The equivalent statement by swapping $\Tilde{K}$ and $\Tilde{T}$ also holds. If one considers the first vertex in either $\Tilde{K}$ or $\Tilde{T}$ on any such path, one of the two statements would be false. Hence the lemma is true. \end{proof}

An implication of this result is that for any $H, H' \leq i$ , if $H$ is a parent head of $H'$ on the power DAG for $i$, there exists a unique \emph{minimal marginalization set of vertices} that leads from $H$ to $H'$.
\begin{definition}\label{minimal marginalizing set }
For a MAG $\G$ and two heads $H, H' \leq i$, let $K^{m}$ be the minimal marginalization set of vertices such that $H \to^{K^{m}} H'$.
\end{definition}

\begin{lemma}\label{construct K given minimal K}
For a MAG $\G$ and two heads $H, H' \leq i$, and $H \to^{K^m} H'$ , then $H \to^{K} H'$ if and only if $K = K^m \dot{\cup} B $ where $B \subseteq (H \setminus (H' \cup K^m))$.
\end{lemma}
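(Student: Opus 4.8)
The plan is to prove the two directions of the biconditional separately, disposing of the ``only if'' direction quickly and reserving the real work for ``if''. Throughout I write $\G'' = \G_{\an(H) \setminus K^m}$ and $D'' = \dis_{\G''}(i)$, so that by hypothesis $H' = \barren_{\G''}(D'')$, and I repeatedly use the defining fact (Definition \ref{def: heads and tails}) that every vertex of $H = \barren_\G(\an(H))$ has no strict descendant inside $\an(H)$, and hence no outgoing directed edge within any induced subgraph of $\G_{\an(H)}$.

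For the ``only if'' direction, suppose $H \to^{K} H'$. Since $K^m$ is the minimal marginalization set with $H \to^{K^m} H'$ and Lemma \ref{the minimal set of vertices from one head to another head } guarantees that the intersection of any two valid sets is again valid, $K^m$ is the intersection of all valid sets and so $K^m \subseteq K$. I then set $B := K \setminus K^m$, so that $B \cap K^m = \emptyset$ and $B \subseteq K \subseteq H \setminus \{i\} \subseteq H$ are immediate. It only remains to check $B \cap H' = \emptyset$; but every vertex of $B \subseteq K$ is deleted in forming $\G' = \G_{\an(H)\setminus K}$, so it cannot lie in $\dis_{\G'}(i)$, and therefore not in $H' = \barren_{\G'}(\dis_{\G'}(i))$. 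This gives $B \subseteq H \setminus (H' \cup K^m)$ and hence $K = K^m \dot{\cup} B$ of the claimed form.

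For the ``if'' direction, fix $B \subseteq H \setminus (H' \cup K^m)$; I must show that deleting $B$ from $\G''$ leaves both the district of $i$ and its barren unchanged. The first step is to verify $B \cap D'' = \emptyset$: for $v \in H \setminus K^m$, membership $v \in D''$ forces $v$ to be barren in $\G''$ (as $v$ has no strict descendant in $\an(H)$), hence $v \in \barren_{\G''}(D'') = H'$; the converse $H' \subseteq D''$ is trivial, so $(H\setminus K^m)\setminus H' = (H\setminus K^m)\setminus D''$, and since $B \subseteq (H\setminus K^m)\setminus H'$ we get $B \cap D'' = \emptyset$. Now writing $\G_B = \G_{\an(H)\setminus(K^m\cup B)}$, the district is preserved because every bidirected path realizing $D''$ uses only vertices of $D''$, none of which lie in $B$, so $\dis_{\G_B}(i) = D''$. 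The barren is preserved because any directed path between two vertices of $D''$ has both endpoints in $D''$ (disjoint from $B$) and all interior vertices outside $H$ (they possess an outgoing directed edge, so are not barren), whence such a path avoids $B$ entirely; thus $\de(w)\cap D''$ is unaffected by deleting $B$ and $\barren_{\G_B}(D'') = H'$. This yields $H \to^{K^m \cup B} H'$, completing the direction.

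I expect the main obstacle to be precisely this district/barren invariance in the backward direction — in particular, showing that no descendant relation among vertices of $D''$ is destroyed by removing $B$. The structural fact that makes it go through is that $B \subseteq H$ consists of barren vertices of $\an(H)$, so these vertices can appear only as endpoints of directed paths and never as interior vertices, and they are disjoint from $D''$; once both of these are established the invariance arguments are routine.
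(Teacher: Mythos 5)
Your proof is correct and follows essentially the same route as the paper's: minimality of $K^m$ (via closure of valid marginalization sets under intersection) gives the forward direction, and the backward direction rests on the observation that the extra vertices $B$ lie outside the district of $i$ after marginalizing $K^m$, so removing them alters neither the district nor its barren. You simply spell out in full the district/barren invariance that the paper's two-sentence argument leaves implicit.
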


\begin{proof}
($\Rightarrow$): By definition of $K^m$, $K^m \subseteq K$, and $H \setminus (H' \cup K^m)$ are simply all the vertices we can marginalize if we want to reach the head $H'$.

($\Leftarrow$): This is because after marginalization of some vertices in the barren subset, the remaining vertices are either outside of the Markov blanket of $i$ or it stays in the barren subset of the district, i.e.\ the head. Then by the definition of $K^m$,  we may marginalize any other irrelevant vertices in the barren subset after marginalization of $K^m$ as they would be outside of the Markov blanket.
\end{proof}

We only require one more lemma to validate our decomposition of $u_G$. One key step is to find situations when there is only one marginalization set, which is also the minimal marginalization set. We propose the following definitions.

\begin{lemma}\label{characterization of minimal set}
Consider $H' \to^{K} H $ where $K = H' \setminus H$ is the minimal marginalization set of vertices, i.e.\ $K$ is the only marginalization set to reach $H$ from $H'$. Then this happens if and only if  $H'=\barren(K \cup H)$ where $K \subseteq \ceil(\ham(H))$.
\end{lemma}

\begin{proof}
($\Rightarrow$): After marginalizing vertices in the barren subset, the remaining vertices of $H'$ either stay in the barren subset or outside of the Markov blanket. Since $K$ is the only marginalization set of vertices, we know the remaining vertices stay in the barren subset (also in $H$). Hence $H'=\barren(K \cup H)$. We first show that $H' \setminus H \subseteq \ham(H)$. Suppose, 
for a contradiction, that it is not true; let $K^*  = K \setminus (\sib(\dis_{\an(H)}(i))\setminus \dis_{\an(H)}(i))$. 
Consider any bidirected path between any $i \in K^*$ and any $k \in \dis_{\an(H)}(i)$, the first vertex $x$ from $k$ that is not in $\dis_{\an(H)}(i)$ lies in $\an(K) \setminus \an(H)$. 

If $x$ is in $\an(K) \setminus (\an(H) \cup K)$ we have after marginalizing $K$ then $\barren(\dis_{H \setminus K}(i)) \neq H$, because it would include a descendant of $x$ (possibly $x$ itself) that is \emph{not} an ancestor of $H$. This is a contradiction to our assumption.

Hence $x$ must lie in $K$, so by assumption $x \in K \setminus K^*$. Since the choice of the path and vertices are free, it means that if we marginalize $K \setminus K^*$ then $K^*$ would be outside the district of $i$, so $K$ is not minimal.  Hence we reach another contradiction.

Then $K \subseteq \ceil(\ham(H))$ follows, as if it has some ancestors that are also siblings of $\dis_{\an(H)}(i)$ then after marginalizing $K$, the barren subset of the district would not be $H$.

($\Leftarrow$): Let $K$ be any subset of $\ceil(\ham(H))$ (but not $\emptyset$) and consider $H' = \barren(K \cup H)$. Clearly $H'$ is a head. Moreover $K \subseteq H'$ as $K$ either has no ancestral relation with $H$ or $K$ are descendent of $H$; in particular this means that $K$ is a valid marginalization set for $H'$. We still need to show that $(1)$ after marginalizing $K$, we reach the head $H$  and $(2)$ $K$ is minimal.

For (1), suppose there is a vertex $t \notin K$ stays in the barren subset of the district but $t$ is not in $H$. Consider any bidirected path from $t$ to $\dis_{\an(H)}(i)$ in $\G_{\an(H')}$. By definition, on this path there is a vertex in $K$ (not just $\an(K)$) next to some vertex in $\dis_{\an(H)}(i)$, then this path is removed after marginalizing $K$. Also note that all vertices in $\dis_{\an(H)}(i)$ stay in the graph.

For (2), If we marginalize some subset of $K$, then the remaining vertices of $K$ stay in the Markov blanket and hence in the barren subset of the district, which then would not be $H$.
\end{proof}

\subsection{Existence of maximal parents in the power DAG}

\begin{lemma}\label{lemma:maximal marginalization set}
Suppose that for two heads $ i \geq H, H'$, we have $H \to^{K} H'$. Then $H \to^{L} H'$ for $L = H \setminus H'$.
\end{lemma}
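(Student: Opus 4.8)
The plan is to deduce the result directly from Lemma~\ref{construct K given minimal K} by identifying $L = H \setminus H'$ as the \emph{largest} admissible marginalization set leading from $H$ to $H'$. First I would note that the hypothesis $H \to^{K} H'$ guarantees, via Definition~\ref{minimal marginalizing set } together with Lemma~\ref{the minimal set of vertices from one head to another head }, the existence of a (unique) minimal marginalization set $K^m$ with $H \to^{K^m} H'$ and $K^m \subseteq K$.

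The crucial observation I would establish is that $K^m \subseteq H \setminus H'$. Indeed, any marginalization set is by definition a subset of $H \setminus \{i\}$, and no vertex of $K^m$ can survive into the resulting head $H' = \barren_{\G'}(\dis_{\G'}(i))$, since $\G' = \G_{\an(H) \setminus K^m}$ does not even contain those vertices; hence $K^m \cap H' = \emptyset$. Because both heads share the maximal vertex $i$ and $i$ lies in the barren of the district containing it, we have $i \in H'$, so $K^m \subseteq (H \setminus \{i\}) \setminus H' = H \setminus H'$.

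With this in hand, I would invoke Lemma~\ref{construct K given minimal K}, which states that $H \to^{L} H'$ precisely when $L = K^m \dot\cup B$ with $B \subseteq H \setminus (H' \cup K^m)$. Choosing the maximal admissible $B = H \setminus (H' \cup K^m) = (H \setminus H') \setminus K^m$, and using $K^m \subseteq H \setminus H'$, yields $L = K^m \cup \big((H \setminus H') \setminus K^m\big) = H \setminus H'$. Finally, $L = H \setminus H'$ is nonempty because $\emptyset \neq K \subseteq H \setminus H'$, so the defining constraint $\emptyset \subset L$ is met, and therefore $H \to^{H \setminus H'} H'$.

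I expect the only genuinely delicate point to be the containment $K^m \subseteq H \setminus H'$, as everything else is bookkeeping on top of Lemma~\ref{construct K given minimal K}. In particular, one must be sure that $i \in H'$ (so that deleting $H'$ from $H$ also removes $i$) and that marginalized vertices cannot reappear in the resulting head; both follow from the definitions of heads and of the relation $\to^{K}$. An alternative, more self-contained route would avoid $K^m$ entirely and argue directly that marginalizing all of $H \setminus H'$ still leaves $H'$ as the barren of the district of $i$ in $\G_{\an(H) \setminus (H \setminus H')}$, but routing through Lemma~\ref{construct K given minimal K} is shorter and reuses machinery already in place.
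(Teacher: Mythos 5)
Your proposal is correct and follows essentially the same route as the paper's own (much terser) proof: observe that any marginalization set reaching $H'$ must avoid $H'$, then pass to the minimal set $K^m$ and pad it with all remaining vertices of $H \setminus (H' \cup K^m)$ via Lemma \ref{construct K given minimal K}. You have merely made explicit the details (in particular $K^m \subseteq H \setminus H'$ and the nonemptiness of $L$) that the paper leaves implicit.
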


\begin{proof}
Clearly, any marginalization set from $H$ to $H'$ must not contain any element from $H'$. By existence of the minimal marginalization set, we can just adding all irrelevant vertices to the set.
\end{proof}

\begin{proposition}\label{prop: maximal parent in power dag}
For a MAG $\G$, suppose that for three heads $ i \geq H_1,H_2, H$, we have $H_1 \to^{K} H$ and $H_2 \to^{L} H$. Then $H_3 = \barren(H_1 \cup H_2)$ is a head and  $H_3 \to^{K'} H$ for $K' = H_3 \setminus H$. This means that in the power DAG for $i$, if a head has a parent head, then there exists a maximal parent head.
\end{proposition}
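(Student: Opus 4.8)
The plan is to reduce the whole statement to the converse direction of Lemma~\ref{characterization of minimal set}, after recording a few elementary facts about the $\barren$ operator. First I would note that $\barren$ is idempotent, that $\an(\barren(W))=\an(W)$ (each vertex of $W$ has a descendant that is a sink of $W$), and that $\barren(X\cup Y)=\barren(\barren(X)\cup\barren(Y))$; all three are immediate from the definition. The crucial identity is that \emph{for any barren set $H'$ with $H\subseteq\an(H')$ one has $H'=\barren(H'\cup H)$}. Indeed, no vertex of $H'$ can have a strict descendant in $H$, since such a descendant would be an ancestor of $H'$ and hence would make the original vertex a strict ancestor of another vertex of $H'$, contradicting $\barren(H')=H'$ and acyclicity; this gives $H'\subseteq\barren(H'\cup H)$. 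Conversely a barren vertex of $H'\cup H$ lying in $H\setminus H'$ is a strict ancestor of some vertex of $H'$, which is impossible, so $\barren(H'\cup H)\subseteq H'$.

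The head claim is then quick. Since $\an(H_3)=\an(\barren(H_1\cup H_2))=\an(H_1)\cup\an(H_2)\supseteq\an(H_\ell)$, the district of $i$ can only grow, so $H_1,H_2\subseteq\dis_{\an(H_3)}(i)$ and hence $H_3=\barren(H_1\cup H_2)$ lies in the single district $\dis_{\an(H_3)}(i)$; together with idempotence of $\barren$ this shows $H_3$ is a head (and indeed $\barren(\dis_{\an(H_3)}(i))=H_3$). The same containment of ancestral sets gives $H_1,H_2<H_3$ in the ordering of Definition~\ref{partial ordering}, so $H_3$ will be the maximal parent once we know it is a parent.

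To obtain $H_3\to^{K'}H$ with $K'=H_3\setminus H$, I combine the identities. Because $H\subseteq\an(H_1)\subseteq\an(H_3)$,
\[
H_3=\barren(H_1\cup H_2)=\barren(H_1\cup H_2\cup H)=\barren\big((H_1\setminus H)\cup(H_2\setminus H)\cup H\big),
\]
and applying the key identity once more to the barren set $H_3$ yields $H_3=\barren(K'\cup H)$. A short finiteness argument shows $K'\neq\emptyset$: no vertex of $(H_1\setminus H)\cup(H_2\setminus H)$ is an ancestor of $H$ (again by $H\subseteq\an(H_\ell)$ and barrenness of $H_\ell$), so chasing descendants within these two sets stays out of $H$ and terminates at a sink, which lies in $\barren(H_1\cup H_2\cup H)\setminus H=K'$. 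It therefore remains only to verify the hypotheses of Lemma~\ref{characterization of minimal set}$(\Leftarrow)$, namely $K'\subseteq\ceil_\G(\ham_\G(H))$; since $H_3=\barren(K'\cup H)$ already holds, that lemma then delivers exactly $H_3\to^{K'}H$ (with $K'$ the unique marginalisation set), completing the proof.

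The containment $K'\subseteq\ceil_\G(\ham_\G(H))$ is where the genuine work lies, and I expect it to be the main obstacle. As $K'\subseteq(H_1\setminus H)\cup(H_2\setminus H)$, it suffices to show that for any head $H'$ with $H'\to^{K}H$ one has $H'\setminus H\subseteq\ham_\G(H)$, together with the fact that a barren member of $K'\cup H$ outside $H$ must sit in the ceiling. This is essentially the multi-vertex version of Lemma~\ref{characterizing marginalization vertex}: a vertex of $H'\setminus H$ shares the district of $H$ inside $\G_{\an(H')}$ but is separated from it once we marginalise, so any bidirected path realising its connection to $H$ must leave through the bidirected boundary $\ham_\G(H)$, and ancestrality (no vertex is simultaneously a sibling and an ancestor of another) prevents it from carrying descendants that would keep it in the district; the ceiling condition then follows from the observation after Definition~\ref{ceiling} that adjoining any vertex of $\ceil_\G(\ham_\G(H))$ to $H$ again forms a head. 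I would establish $H'\setminus H\subseteq\ham_\G(H)$ directly by a bidirected-path analysis in $\G_{\an(H')}$, in the style of the proof of Lemma~\ref{characterization of minimal set}, rather than by composing single-vertex marginalisation steps, since the intermediate Hamlets need not be nested.
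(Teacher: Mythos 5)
Your first two steps are sound and essentially match the paper: the argument that $H_3$ lies in a single district of $\G_{\an(H_3)}$ because each $j\in H_3$ already shares a district with $i$ in $\G_{\an(H_1)}$ or $\G_{\an(H_2)}$ is exactly the paper's opening paragraph, and your identity $H_3=\barren(K'\cup H)$ is a correct (and tidy) observation. The gap is in the final reduction. The containment $K'\subseteq\ceil_\G(\ham_\G(H))$ that you defer to as ``the main obstacle'' is not merely hard --- it is false in general, so the appeal to Lemma \ref{characterization of minimal set}($\Leftarrow$) cannot be made to work. That lemma characterizes the situation where $H'\setminus H$ is the \emph{unique} (hence minimal) marginalization set; but $K'=H_3\setminus H$ is the \emph{maximal} one (cf.\ Lemma \ref{lemma:maximal marginalization set}), and by Lemma \ref{construct K given minimal K} it typically contains ``irrelevant'' vertices of $H_3\setminus(H\cup K^m)$ that simply fall out of the Markov blanket once the genuinely necessary vertices are removed; such vertices need not be siblings of $\dis_{\an(H)}(H)$ at all. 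Concretely, take the bidirected path $1\leftrightarrow 2\leftrightarrow 3$ with $i=3$, $H_1=H_2=\{1,2,3\}$ and $H=\{3\}$: then $H_3=\{1,2,3\}$, $K'=\{1,2\}$, but $\ham_\G(H)=\ceil_\G(\ham_\G(H))=\{2\}$, so $1\in K'\setminus\ceil_\G(\ham_\G(H))$ even though the proposition's conclusion $H_3\to^{\{1,2\}}\{3\}$ is true. The same example refutes your proposed intermediate claim that $H'\to^{K}H$ forces $H'\setminus H\subseteq\ham_\G(H)$. (Had the lemma applied, it would also certify $K'$ as the \emph{only} marginalization set from $H_3$ to $H$, contradicting Lemma \ref{construct K given minimal K} whenever $K^m\subsetneq K'$ --- an internal inconsistency worth noticing.)

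What is actually needed, and what the paper does, is a direct verification that marginalizing $K'=H_3\setminus H$ from $\an(H_3)$ lands on $H$ and nothing larger: suppose $H_3\to^{K'}H'$ with $H'\neq H$; since $\an(H')\supseteq\an(H)$, the district $\dis_{\an(H')}(i)$ strictly contains $\dis_{\an(H)}(i)$ and therefore contains a sibling $j$ of $\dis_{\an(H)}(i)$ outside it; this $j$ must have been marginalized to pass from (say) $H_1$ to $H$, so $j\in H_1$; it must survive into $H_3=\barren(H_1\cup H_2)$ (otherwise a descendant of $j$ in $H_2$ would block $H_2\to^{L}H$); and since $j\in\an(H')$ it is not in $K'$, forcing $j\in H$ --- a contradiction. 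Your identity $H_3=\barren(K'\cup H)$ does not substitute for this step, because it controls only the barren of $K'\cup H$, not the barren of the full district $\dis_{\an(H_3)\setminus K'}(i)$, which may pick up vertices from lower down the ancestral set. If you want to salvage a route through the ceiling-of-Hamlet machinery, the correct target is the \emph{minimal} set $K^m_3=H_3\cap\ceil_\G(\ham_\G(H))$ (which Proposition \ref{prop:maximal independence from maximal parent head} identifies as $K^m_1\cup K^m_2$), followed by Lemma \ref{construct K given minimal K} to enlarge to $K'$; but establishing $H_3\to^{K^m_3}H$ still requires the district-level contradiction argument above rather than Lemma \ref{characterization of minimal set}.
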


\begin{proof}
First of all, we show that $H_3$ is a head. Since $i$ is the maximal vertex, $i \in H_3$, then for any $j \in H_3$, it is either in $H_1$ or $H_2$, which means that $j$ lies in the same district as $i$ in either $\G_{\an(H_1)}$ or $\G_{\an(H_2)}$. This graph is a subgraph of $\G_{\an(H_3)}$, hence $j$ lies in the same district as $i$ in $\G_{\an(H_3)}$.

Now let $K_3 = H_3 \setminus H$. It does not contain $i$ and also it is not empty since $H_3 \neq H$, thus it is a valid marginalization set for $H_3$. Suppose also that $H_3 \to^{K_3} H'$. We know that $\an(H_3) \supseteq \an(H_1) \supseteq \an(H)$, so $\an(H_3) \setminus K_3 = B = \an(H')\supseteq \an(H)$. Now suppose $H' \neq H$, this means that $\dis_{\an(H')}(i)$ contains some vertices that are not in $\dis_{\an(H)}(i)$. Among those vertices, there are the strict siblings of $\dis_{\an(H)}(i)$; there must exist such vertices because $H'$ is bidirected-connected. Now select one of these siblings, say, $j$. WLOG, $j$ belongs to $\an(H_1)$. Then to go from $H_1$ to $H$, $j$ must be marginalized, thus $j \in H_1$. If $j$ is not in $H_3$, then this means that there are some descendants of $j$ that are in $H_2$, but then we cannot go from $H_2$ to $H$ since $j$ stays in the districts, there is some extra vertex in the barren subset of the district of $i$, other than $H$. Hence $j \in H_3$. But then as $j$ is in $\an(H')$, $j$ does not lie in $K_3$, so $j \in H$, which is a contradiction to the definition of $j$.
\end{proof}

\begin{lemma}\label{prop: what is minimal marginalization set}
For a MAG $\G$, if head $H$ is a parent head of $H'$ in the power DAG, 
then the minimal marginalization set $K$ is $H \cap \ceil_\G(\ham_\G(H'))$.
\end{lemma}

\begin{proposition}\label{prop:maximal independence from maximal parent head}
Consider the four heads $H, H_k$, $k = 1,2,3$ with the same setting in Proposition \ref{prop: maximal parent in power dag}, then the minimal marginalization set for $H_3$ (to $H$) is the union of the minimal marginalization sets of $H_1, H_2$ (to $H$).
\end{proposition}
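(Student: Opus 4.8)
The plan is to reduce the statement to a set-theoretic identity via the closed form for minimal marginalization sets in Lemma \ref{prop: what is minimal marginalization set}. By hypothesis $H_1,H_2$ are parent heads of $H$, and by Proposition \ref{prop: maximal parent in power dag} so is $H_3 = \barren_\G(H_1\cup H_2)$; hence that lemma gives the three minimal marginalization sets (to $H$) as $K_j^m = H_j \cap \ceil_\G(\ham_\G(H))$ for $j=1,2,3$. Writing $C = \ceil_\G(\ham_\G(H))$, the assertion $K_3^m = K_1^m \cup K_2^m$ is exactly
$$
H_3 \cap C = (H_1\cap C)\cup(H_2\cap C) = (H_1\cup H_2)\cap C.
$$
The inclusion $\subseteq$ is immediate since $H_3 = \barren_\G(H_1\cup H_2) \subseteq H_1\cup H_2$, so the real content is the reverse inclusion: every vertex of $C$ lying in $H_1\cup H_2$ is barren in $H_1\cup H_2$.

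Before attacking that, I would first record the auxiliary fact that $\ham_\G(H)\cap\an_\G(H)=\emptyset$. Indeed, if $v\in\ham_\G(H)$ then $v\leftrightarrow d$ for some $d\in\dis_{\an(H)}(H)$; were $v\in\an_\G(H)$, this bidirected edge would survive in $\G_{\an(H)}$ and force $v$ into the district of $d$, contradicting $v\notin\dis_{\an(H)}(H)$. With this in hand, take $v\in(H_1\cup H_2)\cap C$, say $v\in H_1$. If $v\in H_1\cap H_2$ it is barren in each head and hence in their union, so I may assume $v\in H_1\setminus H_2$ and, toward a contradiction, that $v$ has a proper descendant $w$ in $H_1\cup H_2$. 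Since $H_1$ is a head, $w\notin H_1$, so $w\in H_2$, and therefore $v\in\an_\G(w)\subseteq\an_\G(H_2)$.

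The crux is then to re-locate $v$ in the ancestral set reached when marginalizing from $H_2$ to $H$. Because $K_2^m\subseteq H_2$ and $v\notin H_2$, we have $v\notin K_2^m$, so $v$ survives into $B := \an_\G(H_2)\setminus K_2^m$, the ancestral set for which $H = \barren_\G(\dis_B(i))$. Since $v\in C\subseteq\ham_\G(H)$, it is bidirected-adjacent to some $d\in\dis_{\an(H)}(i)\subseteq\dis_B(i)$, and both $v,d\in B$, so $v\in\dis_B(i)$. But $v\notin\dis_{\an(H)}(H)\supseteq H$, so $v$ is a non-barren vertex of $\dis_B(i)$; chasing a descending chain inside $\dis_B(i)$ to a barren vertex $u\in H$ yields $v\in\an_\G(u)\subseteq\an_\G(H)$, contradicting $\ham_\G(H)\cap\an_\G(H)=\emptyset$. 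Hence no such $w$ exists, $v$ is barren in $H_1\cup H_2$, and the reverse inclusion holds.

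I expect the main obstacle to be precisely this last step: showing that the surviving vertex $v$ genuinely re-enters the district $\dis_B(i)$ after marginalization and is thereby forced to be an ancestor of $H$. This is where one must carefully exploit that marginalizing barren vertices only enlarges the district of $i$ (so $\dis_{\an(H)}(i)\subseteq\dis_B(i)$) together with the defining property of the Hamlet. The preliminary identity $\ham_\G(H)\cap\an_\G(H)=\emptyset$ is what turns the resulting district membership into an immediate contradiction, and establishing it cleanly is the key enabling lemma for the argument.
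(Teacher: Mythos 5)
Your proof is correct and follows the same basic route as the paper: both reduce the claim, via Lemma \ref{prop: what is minimal marginalization set}, to the set identity $H_3 \cap \ceil_\G(\ham_\G(H)) = (H_1 \cup H_2) \cap \ceil_\G(\ham_\G(H))$, using $\an_\G(H_3) = \an_\G(H_1) \cup \an_\G(H_2)$ and $H_3 = \barren_\G(H_1 \cup H_2) \subseteq H_1 \cup H_2$. The difference is one of completeness: the paper's proof is a two-line remark that only really accounts for the inclusion $K_3^m \subseteq K_1^m \cup K_2^m$ (``it will not introduce extra vertices''), whereas you correctly identify the reverse inclusion --- that a vertex of $\ceil_\G(\ham_\G(H))$ lying in $H_1 \cup H_2$ cannot fail to be barren there --- as the real content, and you supply a full argument for it. Your argument (establishing $\ham_\G(H) \cap \an_\G(H) = \emptyset$, then showing that a hypothetical non-barren $v \in H_1 \cap \ceil_\G(\ham_\G(H))$ with a descendant in $H_2$ would survive into $B = \an_\G(H_2) \setminus K_2^m$, re-enter $\dis_B(i)$ through its bidirected edge into $\dis_{\an(H)}(i)$, and hence be forced into $\an_\G(H)$) checks out at each step and makes rigorous what the paper leaves implicit. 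The only shared dependency is Lemma \ref{prop: what is minimal marginalization set} itself, which both you and the paper use as a black box.
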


\begin{proof}
Lemma \ref{prop: what is minimal marginalization set} shows that the minimal marginalization set is the intersection between the parent head and $\ceil(\ham(H))$, thus since $\an_\G(H_3)$ is just an union of $\an_\G(H_1)$ and $\an_\G(H_2)$, it will not introduce extra vertices in $\ceil(\ham(H))$. 
\end{proof}

In Lemma \ref{prop: what is minimal marginalization set},  by construction, $H_3 \geq H_1, H_2$. One can show that if $H \geq H'$ then $H \cup T \supseteq H' \cup T'$, therefore by Lemma \ref{prop: what is minimal marginalization set}, the maximal independence from the $H_3$ always implies the maximal independences from $H_1$ and $H_2$.

\begin{proposition}\label{prop: maximal indep from minimal independencs}
Suppose in a power DAG for $i$, $H$ is a parent head of $H'$, then the independence associated with the minimal marginalization set $K^m$, $$i \indep H \cup T \setminus H' \cup T' \cup K^m  \mid H' \cup T' \setminus \{i\},$$ implies all the independences associated with any $H \to^{K} H'$ and any other marginalization set $K$.
\end{proposition}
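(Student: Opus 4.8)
The plan is to recognise that all of the independences attached to the relation $H \to^{K} H'$ (for a fixed target head $H'$) condition on one and the same set, and that the minimal marginalization set produces the independence with the \emph{largest} free side; the whole proposition then collapses to a single application of the decomposition axiom of semi-graphoids.

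First I would observe that the conditioning set of the independence associated with $H \to^{K} H'$, namely $(H' \cup T') \setminus \{i\}$ with $T' = \tail_\G(H')$, depends only on $H'$ and not on the particular marginalization set $K$. Hence every independence of the form $i \indep (H \cup T) \setminus (H' \cup T' \cup K) \cmid (H' \cup T') \setminus \{i\}$ arising from $H \to^{K} H'$ has exactly the same conditioning set. Next, by Lemma \ref{construct K given minimal K}, any such $K$ can be written as $K = K^m \dot\cup B$ with $B \subseteq H \setminus (H' \cup K^m)$; in particular $K \supseteq K^m$. Writing $S_{K} := (H \cup T) \setminus (H' \cup T' \cup K)$ for the free (independent) side, the inclusion $K \supseteq K^m$ forces $H' \cup T' \cup K \supseteq H' \cup T' \cup K^m$ and therefore $S_{K} \subseteq S_{K^m}$, so the free side of every $K$-independence is contained in that of the minimal independence.

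Finally, since the $K^m$-independence and the $K$-independence share the conditioning set $(H' \cup T') \setminus \{i\}$ while $S_{K} \subseteq S_{K^m}$, the decomposition property of semi-graphoids (axiom (2) in Appendix \ref{sec:graphoids}) gives $i \indep S_{K} \cmid (H' \cup T') \setminus \{i\}$ from $i \indep S_{K^m} \cmid (H' \cup T') \setminus \{i\}$. As this holds for every admissible $K$, the minimal independence implies all of the independences associated with $H \to^{K} H'$, as claimed. The argument is essentially bookkeeping, and there is no real obstacle: the two facts it rests on, namely that the conditioning set is a function of $H'$ alone and that every marginalization set contains the minimal one (guaranteed by Lemma \ref{the minimal set of vertices from one head to another head } and made explicit in Lemma \ref{construct K given minimal K}), have already been established. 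The only point requiring a moment's care is the set-theoretic containment $S_{K} \subseteq S_{K^m}$, which follows immediately once $K$ is written as $K^m \dot\cup B$.
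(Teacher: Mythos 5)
Your argument is correct and is essentially the paper's own proof, which simply observes that the conditioning set $(H'\cup T')\setminus\{i\}$ is fixed by $H'$ and that a larger $K$ leaves a smaller free side, so decomposition finishes the job. Your version just makes explicit the containment $K \supseteq K^m$ via Lemma \ref{construct K given minimal K} and the resulting inclusion of free sides, which is exactly the bookkeeping the paper leaves implicit.
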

\begin{proof}
The conditioning set is fixed, and the more elements $K$ has, the fewer $i$ is independent from.
\end{proof}

We call this the \emph{maximal independence} associated with $H \to^{K} H'$




\section{Proofs and supplementary materials for section \ref{power DAGs}}

\subsection{Complete power DAGs}\label{sec:complete power DAG}

\begin{definition}\label{def:power DAG}
Consider a MAG $\G$ with a topological ordering.  Given a set $S \subseteq \mathcal{V}$ 
we say that $s \in S$ is a \emph{marginalization vertex} if it is in $\barren_\G(S)$ and 
is not maximal in $S$.

Define the \emph{complete power DAG} $\mathfrak{I}(\G)$ as a graph with vertices $\mathcal{H}(\G)$.  An edge is added from $H \to H'$ if there is a marginalization 
vertex $k \in H$ such that $H \to^k H'$.  In this case 
we call $H$ a \emph{parent head} of $H'$. 
There is a unique component for each 
vertex $i$, which we denote $\mathfrak{I}_i(\G)$.

\end{definition}

Next we present a few results that justify the nomenclature of power DAGs.

\begin{lemma}\label{lemma: power dags are simple}
For a MAG $\G$ and any $i$, there is at most one edge between any two heads in $\mathfrak{I}_i (\G)$.
\end{lemma}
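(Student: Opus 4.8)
The plan is to show that every edge of $\mathfrak{I}_i(\G)$ strictly decreases the ancestor set of the head it leaves, so that edges can only point ``downwards'' in the partial order of Definition~\ref{partial ordering}; this rules out anti-parallel edges, and by construction there is never more than one edge for a fixed ordered pair. So the whole statement reduces to a monotonicity property of $\to^k$.

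First I would record the basic consequence of an edge. Suppose $H \to^{k} H'$, so that $k \in \barren_\G(H) = H$ is not maximal and $H' = \barren_{\G'}(\dis_{\G'}(i))$ with $\G' = \G_{\an(H) \setminus \{k\}}$, where $i$ is the maximal vertex of $H$. Since $H$ is a head, $H = \barren_\G(\an_\G(H))$, so $k$ is a barren vertex of the ancestral set $\an_\G(H)$; removing barren vertices from an ancestral set leaves it ancestral, hence $B := \an_\G(H) \setminus \{k\}$ is ancestral in $\G$. Because $H' \subseteq B$ and $B$ is ancestral we have $\an_\G(H') \subseteq B$, and as $k \in \an_\G(H)$ this gives $\an_\G(H') \subseteq B \subsetneq \an_\G(H)$. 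Moreover $i$ is still the topologically maximal vertex of $B$ (we deleted only $k < i$), so $i$ has no proper descendant in $\G'$ and therefore $i \in \barren_{\G'}(\dis_{\G'}(i)) = H'$, with $i$ maximal in $H'$. Thus $H$ and $H'$ share the maximal vertex $i$ and $\an_\G(H') \subsetneq \an_\G(H)$, i.e.\ $H' < H$ in the partial order.

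Then I would conclude as follows. By Lemma~\ref{justify the partial ordering} the relation $<$ is a (strict) partial order on the heads with maximal vertex $i$, so in particular it is antisymmetric. If both $H \to H'$ and $H' \to H$ were edges of $\mathfrak{I}_i(\G)$, the previous paragraph would give $H' < H$ and $H < H'$ simultaneously, contradicting antisymmetry; hence no anti-parallel pair exists. Since Definition~\ref{def:power DAG} adds at most one edge for each ordered pair $(H,H')$ (an edge is present precisely when some marginalization vertex $k$ realises $H \to^k H'$, regardless of how many such $k$ happen to work), there is at most one edge between any two heads in $\mathfrak{I}_i(\G)$.

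The main obstacle is the middle step: verifying cleanly that deleting the single marginalization vertex $k$ keeps $\an_\G(H)$ ancestral and \emph{strictly} shrinks it, and that $i$ survives as the maximal vertex of the new head $H'$. Both facts follow from the head identity $H = \barren_\G(\an_\G(H))$ together with the definition of a marginalization vertex as a non-maximal barren vertex, but they must be stated carefully to guarantee the strict inequality $\an_\G(H') \subsetneq \an_\G(H)$ that drives the entire argument.
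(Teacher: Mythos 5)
Your argument proves a true statement, but not the one this lemma is actually asserting, so there is a genuine gap. The content of the lemma, as the paper uses it, is that the \emph{labelled} power DAG has no parallel edges: for a fixed ordered pair of heads $(H,H')$ there is at most one marginalization vertex $k$ with $H \to^{k} H'$. Each edge of $\mathfrak{I}_i(\G)$ carries its marginalization vertex as a label and has a conditional independence attached to it, so if two distinct vertices $k_1 \neq k_2$ both satisfied $H \to^{k_1} H'$ and $H \to^{k_2} H'$, you would obtain two edges (and two distinct labelled independences) between the same pair of heads. You explicitly set this issue aside with the parenthetical ``regardless of how many such $k$ happen to work'', reading Definition \ref{def:power DAG} as collapsing all such $k$ into a single unlabelled edge; under that reading the same-direction case of the lemma becomes vacuous, which cannot be the intent given that the surrounding results (the refined power DAG, Algorithm \ref{algo1}, and the Markov property of Definition \ref{def: head and tail Markov property}) all treat an edge $H \to^{k} H'$ as determining its vertex $k$.

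The paper closes exactly this gap by invoking Lemma \ref{the minimal set of vertices from one head to another head }: if $H \to^{K} H'$ and $H \to^{L} H'$ then $H \to^{K \cap L} H'$. Applied with $K = \{k_1\}$ and $L = \{k_2\}$ for $k_1 \neq k_2$, this would yield $H \to^{\emptyset} H'$, which is impossible because a marginalization set must be nonempty --- marginalizing nothing leaves the barren of the district of $i$ equal to $H$ itself, whereas $H' \neq H$. What you do prove --- that every edge strictly shrinks the ancestor set, so edges respect the partial order of Definition \ref{partial ordering} and no anti-parallel pair can occur --- is correct, but it is essentially the content of Lemma \ref{lemma: partial orders admits ancestral relations} and the subsequent observation that $\mathfrak{I}_i(\G)$ is a DAG; it does not substitute for the uniqueness of the marginalization vertex for a given ordered pair of heads.
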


\begin{proof}
This is a direct consequence of Lemma \ref{the minimal set of vertices from one head to another head}, which proves that for a MAG $\G$ and two heads $H,H'$ whose maximal vertex is $i$, if $H \to^{K} H'$ and $H \to^{L} H'$ then $H \to^{K \cap L} H'$.
\end{proof}

\begin{lemma}\label{lemma: partial orders admits ancestral relations}
For two heads $H, H' \leq i$, we have $H > H'$ if and only if $H$ is an ancestor of $H'$ in  $\mathfrak{I}_i (\G)$
\end{lemma}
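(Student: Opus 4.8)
The plan is to prove the two implications separately, exploiting the fact that a single edge of the power DAG corresponds to \emph{strictly} shrinking the ancestral set of a head while keeping the maximal vertex $i$ fixed. The easy direction is that reachability implies $H>H'$. Suppose $H_j \to^{k} H_{j+1}$ is an edge, so $H_{j+1} = \barren_{\G'}(\dis_{\G'}(i))$ with $\G' = \G_{\an(H_j)\setminus\{k\}}$. Since $H_{j+1}\subseteq \an(H_j)\setminus\{k\}$ and the latter is ancestral (we deleted a barren, hence non-maximal, vertex), we get $\an(H_{j+1})\subseteq \an(H_j)\setminus\{k\}\subsetneq \an(H_j)$, and $i$ is still maximal in $H_{j+1}$; thus $H_{j+1}<H_j$ by Definition \ref{partial ordering}. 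Walking along a directed path $H=H_0\to\cdots\to H_m=H'$ and invoking transitivity of $<$ (Lemma \ref{justify the partial ordering}) gives $H'<H$, i.e.\ $H>H'$.

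For the converse I would induct on $n=\abs{\an(H)\setminus\an(H')}$, assuming $H>H'$ so that $\an(H')\subsetneq\an(H)$ with common maximal vertex $i$. First note $H=\barren(\an(H))\not\subseteq\an(H')$, since $H\subseteq\an(H')$ would force $\an(H)\subseteq\an(H')$, contradicting $\an(H')\subsetneq\an(H)$. Hence there is a vertex $k\in H\setminus\an(H')$; because $i\in H'\subseteq\an(H')$ we have $k\neq i$, so $k$ is a legitimate marginalization vertex. Marginalizing it gives an edge $H\to^{k} H''$ with $\an(H'')\subseteq\an(H)\setminus\{k\}$, so $\abs{\an(H'')\setminus\an(H')}<n$. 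If I can show $H'\leq H''$, then either $H''=H'$ (and $H\to^{k}H'$ is the required path) or $H''>H'$, in which case the induction hypothesis supplies a directed path from $H''$ to $H'$ which I prepend with $H\to^{k}H''$.

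The crux is the inequality $H'\leq H''$, which is an instance of \emph{monotonicity of heads under ancestral inclusion}: if $A\subseteq B$ are ancestral sets both containing $i$, and $H_A,H_B$ denote the barren of the district of $i$ in $\G_A,\G_B$ respectively, then $\an(H_A)\subseteq\an(H_B)$. I would prove this directly: from $A\subseteq B$ we get $\dis_{\G_A}(i)\subseteq\dis_{\G_B}(i)$ (every bidirected path in $\G_A$ survives in $\G_B$), and since every vertex of a set is an ancestor of some barren vertex of that set, $H_A\subseteq\dis_{\G_A}(i)\subseteq\dis_{\G_B}(i)\subseteq\an(H_B)$, whence $\an(H_A)\subseteq\an(H_B)$. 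Taking $A=\an(H')$ (so $H_A=H'$) and $B=\an(H)\setminus\{k\}\supseteq\an(H')$ (so $H_B=H''$) yields exactly $H'\leq H''$, closing the induction.

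I expect the two delicate points to be precisely (i) the existence of a marginalization vertex $k\in H\setminus\an(H')$ to peel off at each step, and (ii) the monotonicity lemma guaranteeing that marginalizing $k$ cannot drop the head below $H'$; the remainder is routine bookkeeping with the definitions of $\barren$, $\dis$, and $\an$, together with transitivity of the partial order. Throughout one should keep in mind that $H\to^{H\setminus H'}H'$ need \emph{not} hold (e.g.\ the heads $\{7,8\},\{6,8\},\{8\}$ of Example \ref{non-trivial simple MAG} form a length-two chain), which is exactly why the statement is phrased in terms of \emph{ancestors} in $\mathfrak{I}_i(\G)$ rather than single edges.
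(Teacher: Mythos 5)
Your proof is correct and follows exactly the strategy the paper intends: the paper's own proof is the single sentence ``this can be proved by marginalizing irrelevant vertices step by step,'' and your argument is precisely that peeling-off procedure made rigorous, with the monotonicity of $\barren(\dis_A(i))$ under inclusion of ancestral sets supplying the justification (left implicit in the paper) that each single-vertex marginalization cannot overshoot $H'$. The only steps you lean on without proof --- that a head equals the barren of the district of $i$ in its own ancestral closure, and that the result of a marginalization is again a head --- are standard facts about heads that the paper itself assumes throughout, so there is no gap.
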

\begin{proof}
This can be proved by marginalizing vertices in $\an(H) \setminus \an(H)$ step by step. 
\end{proof}

A useful fact is that for any $i$ and $\mathfrak{I}^\G_i$, there exists a (maximal) head $H$ such that $H \geq H'$ for any $H' \leq i$ and this head is the barren subset of the district of $i$ in $\G_{[i]}$.

\begin{lemma}
For a MAG $\G$ and any $i$, $\mathfrak{I}_i (\G)$ is a DAG
\end{lemma}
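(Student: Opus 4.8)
The plan is to show that every edge of $\mathfrak{I}_i(\G)$ strictly decreases the ancestral set of the head it leaves, so that a directed cycle would force some head's ancestral set to be a proper subset of itself. First I would fix an edge $H \to^k H'$, recalling from Definition \ref{def:power DAG} that $k \in H \setminus \{i\}$ is a marginalization vertex and that $H' = \barren_{\G'}(\dis_{\G'}(i))$ with $\G' = \G_{\an(H) \setminus \{k\}}$. Since the numerical ordering is topological and $H \leq i$, the vertex $i$ is maximal, remains after marginalizing $k < i$, and is barren in its own district, so $i \in H'$ and $H'$ is again a head with maximal vertex $i$. Because $H' \subseteq \an_\G(H) \setminus \{k\}$ and $\an_\G(H)$ is ancestral, we get $\an_\G(H') \subseteq \an_\G(H)$; moreover $k \in H$ while $k \notin H'$, so $H \neq H'$.

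Next I would invoke the fact that two distinct heads sharing the maximal vertex $i$ cannot have the same ancestral set: if $\an_\G(H) = \an_\G(H')$ then both equal $\barren(\dis_{\an(H)}(i))$ and hence coincide. Combined with the previous step this upgrades the containment to the strict one $\an_\G(H') \subsetneq \an_\G(H)$; equivalently, by Definition \ref{partial ordering}, $H > H'$ in the partial order of Lemma \ref{justify the partial ordering}. This is exactly the single-edge case of Lemma \ref{lemma: partial orders admits ancestral relations}, so I may either reprove it in one line as above or simply cite that lemma.

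Finally, suppose for contradiction that $\mathfrak{I}_i(\G)$ contains a directed cycle $H_0 \to H_1 \to \cdots \to H_m = H_0$. Applying the single-edge claim to each arc yields $\an_\G(H_0) \supsetneq \an_\G(H_1) \supsetneq \cdots \supsetneq \an_\G(H_m) = \an_\G(H_0)$, whence $\an_\G(H_0) \subsetneq \an_\G(H_0)$, which is absurd. Hence $\mathfrak{I}_i(\G)$ has no directed cycle and is a DAG. Equivalently, such a cycle would make $H_0$ an ancestor of itself in $\mathfrak{I}_i(\G)$, forcing $H_0 > H_0$ and contradicting the irreflexivity of the strict partial order from Lemma \ref{justify the partial ordering}.

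The only genuine content is the single-edge claim that marginalizing one barren non-maximal vertex strictly shrinks the ancestral set, and I expect the one mild obstacle to be justifying $i \in H'$ and $k \notin H'$ cleanly from the definition of $\to^k$, together with the observation that distinct heads with a common maximal vertex have distinct ancestral sets. Everything after that is a routine strictly-descending-chain argument and requires no further calculation.
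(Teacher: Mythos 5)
Your proof is correct and follows essentially the same route as the paper, which derives acyclicity from the partial order on heads (Lemmas \ref{justify the partial ordering} and \ref{lemma: partial orders admits ancestral relations}): an edge $H \to^k H'$ forces $\an_\G(H') \subsetneq \an_\G(H)$, so no directed cycle can exist. You simply prove the single-edge strict containment directly (including the observation that distinct heads with the same maximal vertex have distinct ancestral sets) where the paper cites its lemmas, so the content is the same.
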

\begin{proof}
This is a direct consequence of Lemma \ref{lemma: power dags are simple} and \ref{lemma: partial orders admits ancestral relations}.
\end{proof}

Here we prove that the list of independences associated with edges in $\mathfrak{I}^\G_i$ for every $i$, combined with the independences $i \indep [i-1] \setminus \mb_\G(i,[i]) \mid \mb_\G(i,[i])$, are sufficient to deduce the ordered local Markov property. 

\begin{definition}\label{def: head and tail Markov property}
 For a MAG $\G$ and any $i$, we associate $\mathfrak{I}^\G_i$ with a collection of independences  $\mathbb{L}^\G_i$ that contains:
 \begin{itemize}
     \item[($a$)] $i \indep [i-1] \setminus \mb_\G(i,[i]) \mid \mb_\G(i,[i])$, and
     \item[($b$)] for every head $H$ (except $\{i\}$) whose maximal element is $i$:
\begin{align*}
    &i \indep (H \cup T) \setminus (H' \cup T' \cup k ) \mid H' \cup T' \setminus \{i\}  && \text{for } k \in H \setminus \{i\},
\end{align*}
 where $H \to^{k} H'$, and $T=\tail_\G(H)$ and $T'=\tail_\G(H')$.
 \end{itemize}
\end{definition}

\begin{theorem}\label{thm:head and tail Markov property}
For a MAG $\G$, the collection $\mathbb{L}^\G = \bigcup_i \mathbb{L}^\G_i$ is equivalent to the list of independences implied by the ordered local Markov property for $\G$.
\end{theorem}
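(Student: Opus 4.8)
The statement asserts that the semi-graphoid closures of the two lists coincide, so I would prove the two inclusions separately.

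\textbf{The easy inclusion.} First I would show that every member of $\mathbb{L}^\G$ is a decomposition consequence of the ordered local Markov property. Statement $(a)$ is literally the ordered local Markov independence for the ancestral set $[i]$ and its topologically maximal (hence childless) vertex $i$. For statement $(b)$, given an edge $H \to^{k} H'$ I would take $B = \an_\G(H) \setminus \{k\}$; since $k$ is a marginalization vertex it is barren in $\an_\G(H)$, so $B$ is again ancestral and $i$ is still childless in it. The ordered local Markov property applied to $(B,i)$ gives $i \indep B \setminus (\mb_\G(i,B) \cup \{i\}) \mid \mb_\G(i,B)$, and using the Markov-blanket identity $\{i\} \cup \mb_\G(i,B) = H' \cup \tail(H')$ recorded just before Definition \ref{ceiling}, decomposition down to the sublist $(H \cup \tail(H)) \setminus (H' \cup \tail(H') \cup k)$ yields exactly the independence attached to the edge. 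This direction needs nothing beyond decomposition.

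\textbf{The hard inclusion.} The real content is deriving the ordered local Markov property from $\mathbb{L}^\G$. I would first note that, after reducing to a maximal childless vertex $i$, every ordered local Markov independence for $i$ follows by decomposition from the one attached to the \emph{maximal} ancestral set realising its Markov blanket --- equivalently, from the maximal independences of \citet{richardlocalmarkov}, of which there is one per head $H \le i$. Since these are indexed by heads, the plan is to induct down the component $\mathfrak{I}_i(\G)$ of the power DAG: the maximal head $H_{\max}$ (the barren of the district of $i$ in $\G_{[i]}$) has exactly statement $(a)$ as its independence, and along each edge $H \to^{k} H'$ the attached edge independence of $\mathbb{L}^\G_i$ --- which already carries the smaller conditioning set $(H' \cup \tail(H')) \setminus \{i\}$ --- should transport the independence to the child, with the free set of the child assembled from those along a directed path down from $H_{\max}$ by contraction and weak union.

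The main obstacle is the marginalization vertex $k$. It lies in the conditioning set $(H \cup \tail(H)) \setminus \{i\}$ of the parent's independence but in neither side of the child's, and because $k$ belongs to the head $H$ it is $m$-connected to $i$ given the rest of $H \cup \tail(H)$, so $i \indep k \mid \cdots$ is false; hence $k$ can be neither dropped from a conditioning set nor moved into a free set by semi-graphoid reasoning about $i$ alone, and intersection is unavailable since we make no positivity assumption. I expect to resolve this exactly as in the worked simple-MAG example (the $8 \indep 2$ discussion in Example \ref{non-trivial simple MAG}): the elimination of $k$ is achieved by weaving in independences that concern $k$ itself, supplied by the inductive hypothesis applied to the earlier vertices of $\G_{[i-1]}$, and chaining them with the edge independences by contraction so that $k$ is consistently cancelled. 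Verifying that these auxiliary independences are available and combine correctly for \emph{every} head simultaneously --- using the existence of maximal parent heads (Proposition \ref{prop:maximal independence from maximal parent head}) and the ceiling/Hamlet characterization of marginalization vertices (Lemma \ref{characterizing marginalization vertex}) to control precisely which vertices must be eliminated at each step --- is the delicate combinatorial heart of the argument, and the step I would budget the most effort for.
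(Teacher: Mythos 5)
Your two inclusions are the right ones, and the easy direction is exactly the paper's: each edge independence is the ordered local Markov statement for the ancestral set $\an_\G(H)\setminus\{k\}$ with some irrelevant vertices marginalized, which needs only decomposition. For the hard direction your strategy---transport the independence down the power DAG and eliminate the marginalization vertex $k$ by weaving in independences \emph{about $k$ itself} supplied by the induction on $\G_{[i-1]}$, combined by contraction---is precisely the paper's key idea, and your diagnosis that $k$ cannot be removed by semi-graphoid reasoning about $i$ alone is correct. But the step you explicitly defer is where all the work lives, so as written the proposal is a skeleton rather than a proof.

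The paper fills that step with a double induction over ancestral sets $A$ with maximal element $i$, decreasing $\abs{A}$ one vertex at a time, rather than an induction over heads. Given $A$, choose $v$ parentless in $[i]\setminus A$ so that $A'=A\cup\{v\}$ is ancestral; the inner hypothesis gives $i \indep A'\setminus(\mb(i,A')\cup\{i\}) \mid \mb(i,A')$, and the outer hypothesis (applied to $A''=A'\setminus\{i\}$ in $\G_{[i-1]}$, after reordering so that the childless vertex $v\in\barren(A'')$ is last) gives the local Markov statement for $v$. Weak union and contraction turn these into $\{i,v\}\indep A''\setminus\mb(i,A') \mid \mb(i,A')\setminus\{v\}$; marginalizing $v$ and then contracting with the single edge independence of $\mathbb{L}^\G_i$ for $H\to^{\{v\}}H'$, where $H=\barren(\dis_{A'}(i))$, shrinks the conditioning set from $\mb(i,A')\setminus\{v\}$ to $\mb(i,A)$ and yields the statement for $A$. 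Because this removes exactly one barren vertex per step, the ``every head simultaneously'' coordination you budget effort for never arises: the complete power DAG contains \emph{every} single-vertex marginalization edge, so no selection among parent heads is needed, and neither Proposition \ref{prop:maximal independence from maximal parent head} nor the ceiling/Hamlet characterization of Lemma \ref{characterizing marginalization vertex} enters this proof---those are needed only for the refined power DAG in Proposition \ref{prop:equiv between refined and local}. I would also flag your opening reduction to one independence per head via the maximal ancestral sets of \citet{richardlocalmarkov}: it is valid, but it is an extra dependency the paper avoids by inducting over all ancestral sets directly.
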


\begin{proof}
($\Longleftarrow$): notice that $H'\cup T' \setminus \{i\}$ is the Markov blanket of $i$ in the ancestral set $\an(H \cup T \setminus K)$, thus it follows from the ordered local Markov property by marginalizing irrelevant vertices.

($\Longrightarrow$): let $\mathcal{A}^{x}$  denote the set of all ancestral sets whose maximal elements are $x$. Further let $\mathcal{A}_{r}^{x} = \{A \in  \mathcal{A}^{x} : \abs{A} = r\}$ (note $1 \leq r \leq x$). We will proceed by induction on $x$ from $x = 1$ to $x=n$. To show for every $A \in \mathcal{A}^{x}$, the corresponding independence implied by the ordered local Markov property is implied by $\mathbb{L}_{i}$, we will apply a further induction on $\mathcal{A}_{r}^{x}$ from $r=x$ to $r=1$.

For $A \in \mathcal{A}^x$, $A \subseteq [x]$. The base case $x=1$ is trivial. Suppose the induction hypothesis is true, i.e.\  for any ancestral set $A \in \mathcal{A}^{x}, x \leq i-1$ the corresponding conditional independence implied by the ordered local Markov property is in $\I_{u_{\G}}$. 
 
Now consider $\mathcal{A}^i$, we will then apply induction on $\mathcal{A}_{r}^i$. For the base case $r = i$, i.e.\ $A = [i]$, the independence is in $\mathbb{L}_i$, that is, (a). Now suppose the induction hypothesis is true, that is: for any $A \in \mathcal{A}_{r}^i, s+1 \leq r \leq i$, the corresponding conditional independence from the ordered local Markov property is implied by $\mathbb{L}_i$ and we can use the ordered local Markov property on $\G_{[i-1]}$. 

Now consider any $A \in \mathcal{A}_s^i$. There is at least one vertex $v$ in $[i] \setminus A$ such that $v$ is parentless in $[i] \setminus A$, in other words, $A \cup \{v\} = A'$ is ancestral and $A' \in \mathcal{A}^i_{s+1}$.  Hence by the second induction hypothesis we have 
\begin{align}\label{CI1}
i \indep A' \setminus (\mb(i,A') \cup \{i\}) \mid \mb(i,A').    
\end{align}
If $\mb(i,A) = \mb(i,A')$, then we can get required conditional independence by marginalizing $v \in A' \setminus (\mb(i,A') \cup \{i\})$. 

So now assume $\{v\} \cup \mb(i,A) \subseteq \mb(i,A')$ ($v,i$ are in the same district). Let $A'' = A \cup \{v\} \setminus \{i\} = A' \setminus \{i\}$. We know $A''$ is ancestral and is in $\mathcal{A}^{i-1}$, thus by the firt induction hypothesis, we can apply the ordered local Markov property to any vertex in $\barren(A'')$ (changing the topological order on $[i-1]$). Moreover, $v \in \barren(A'')$ because if $v$ has any child in $A$ then $A$ is not ancestral. So we have:
\begin{align*}
v \indep A'' \setminus (\mb(v,A'') \cup \{v\}) \mid \mb(v,A'').   
\end{align*}
Next notice that by assumption, $\mb(v,A') \cup\{v\} = \mb(i,A') \cup \{i\}$ ($i,v$ are in the same district in $A'$). As $A'' \subset A'$, we have $\mb(v,A'') \subset \mb(v,A') = \mb(i,A') \cup \{i\} \setminus \{v\}$. Then because $\mb(v,A'') \subset \mb(i,A') \cup \{i\} \setminus \{v\}$ and $v \in \mb(i,A')$, by semi-graphoids, moving $\mb(i,A') \setminus (\mb(v,A'') \cup \{v\})$ (which is a subset of $A''$) to the conditioning set, we have: 
 \begin{align}\label{CI3}
 v \indep A'' \setminus \mb(i,A')  \mid \mb(i,A') \setminus \{v\}.  
 \end{align}
Now (\ref{CI1}) is equivalent to:
\begin{align}\label{CI4}
i \indep A'' \setminus \mb(i,A') \mid \mb(i,A'). 
\end{align}
Thus by semi-graphoids, we can deduce the following from (\ref{CI3}) and (\ref{CI4}):
\begin{align*}
\{i,v\} \indep A'' \setminus \mb(i,A')  \mid \mb(i,A') \setminus \{v\}.
\end{align*}
The next step is to marginalize $v$ so that:
\begin{align*}
\{i\} \indep A'' \setminus \mb(i,A')  \mid \mb(i,A') \setminus \{v\}.
\end{align*}
which is equivalent to:
\begin{align}\label{CI7}
\{i\} \indep A \setminus ((\mb(i,A') \setminus \{v\}) \cup \{i\} ) \mid \mb(i,A') \setminus \{v\}.    
\end{align}
Note that $(\mb(i,A') \setminus \{v\}) \subseteq A$. To extract vertices in $(\mb(i,A') \setminus \{v\})  \setminus \mb(i,A)$, the key point is to notice that $\{v,i\} \subseteq \barren(\dis_{A'}(i)) = H,$ where $H$ is a head that contains the maximal vertex $i$. Thus $\{v\}$ can be a marginalizing set $K$ for $H$ with the tail $T$, and $H \cup T = \mb(i,A')$. Now by marginalizing $\{v\}$ we reach $B = A' \setminus \{v\} = A$, thus the following conditional independence is in $\mathbb{L}_i$: 
\begin{align}\label{CI8}
\{i\} \indep \mb(i,A') \setminus ((\mb(i,A) \cup \{v\}) \cup \{i\} ) \mid \mb(i,A).    
\end{align}

From (\ref{CI7}) and (\ref{CI8}), we can deduce that:
\begin{align*}
i \indep A \setminus (\mb(i,A) \cup \{i\}) \mid \mb(i,A).    
\end{align*}
\end{proof}

\subsection{Missing proofs}

\begin{proof}[Proof of Proposition \ref{prop:equiv between refined and local}]
In Theorem \ref{thm:head and tail Markov property}, we prove that the list of independences associated with the complete power DAGs, denoted as $\mathbb{L}^\G$, is equivalent to the ordered local Markov property. Because $\widetilde{\mathbb{L}}^\G \subseteq \mathbb{L}^\G$, by Theorem \ref{thm:head and tail Markov property}, it is sufficient to prove that $\widetilde{\mathbb{L}}^\G$ implies $\mathbb{L}^\G$.

We proceed by three inductions. The first induction is on the topological ordering of vertices and it is sufficient to show that given $\mathbb{L}_{k}^\G$ is true for $1 \leq k \leq i-1$, $\mathbb{L}_{i}^\G$ is implied by $\widetilde{\mathbb{L}}^\G_i$. The base case is trivial.

The second induction is on the topological ordering on heads.  We start from the maximal head in $\mathfrak{I}_i (\G)$ and proceed downwards to show that the independences in $\mathbb{L}_i^\G$ associated with each head and its parent heads are true. Again, the base case is trivial. Now suppose that for a head $H'$, every independence associated with any heads preceding $H'$ is true, in particular, independences in $\mathbb{L}^\G$ associated with any parent head of $H'$ hold. We need to show that independences associated with edges $H \to^{k} H'$ for any parent head $H$ of $H'$ hold. 

Now consider the topological ordering on the vertices in $\ceil_\G(\ham_\G(H'))$, which is made of all the marginalization vertices and we proceed to the third induction on this ordering. The base case is for those parent heads of $H'$ with marginalization vertex $k = \min\ceil_\G(\ham_\G(H'))$.  Clearly the independence from the maximal parent head is in $\widetilde{\mathbb{L}}^\G_i$, and this implies all independences from other parent heads $H$ such that $H \to^{k} H'$. 

For the inductive step: consider some parent head $H_{j}$ of $H'$ with marginalization vertex $l \in \ceil_\G(\ham_\G(H'))$ and $l > k$. Take some parent head $H_{1}$ of $H'$  with marginalization vertex $k$, then clearly $k \notin H_{j}$ and $l \notin H_{1}$. Let $$
H^m = \barren_\G(H_{j} \cup H_{1}), 
$$ then by Proposition \ref{prop:maximal independence from maximal parent head}, $H^m \to^{kl} H'$. Therefore, we have $H^m \to^{l} H^{m'}$ for some parent head $H^{m'}$ of $H'$  and also  we have $H^{m'} \to^{k} H'$. 

By the hypothesis of the second induction, we have $$
i \indep H^m  T^m \setminus H^{m'} T^m_{t} l \mid H^{m'}T^{m'} \setminus \{i\}.
$$
By the hypothesis of the third induction, we have $$
i \indep H^{m'}  T^{m'} \setminus H' T' k \mid H'T' \setminus \{i\}.
$$ Hence, by the contraction semi-graphoid, we have $$
i \indep H^m T^m \setminus H' T' kl \mid H'T' \setminus \{i\}.
$$ Now we have $k \notin H_{j}$ and also, by construction, $H^m  T^m \supseteq H_{j} T_{j}$. Therefore, we can marginalize irrelevant vertices to get 
$$
i \indep H_{j} T_{j} \setminus H'  T' l \mid H'T' \setminus \{i\}.
$$

Next we show it contains fewer and smaller independences than the reduced ordered local Markov property. For each maximal ancestral set 
$A$ and its maximal vertex $i$, there is a corresponding head by taking barren of the district of $i$ and this relation is one-to-one. Hence the number of independences in the refined Markov property is not more than the number of independences in the refined Markov property. Further every independence in the refined Markov property can be deduced by an independence from the reduced ordered local Markov property by simply marginalizing some vertices, hence the statement is true.
\end{proof}

\begin{proof}[Proof for Proposition \ref{prop: algorithm 1 give the refined power dag}]
At line \ref{algl:poheads}, the algorithm proceeds by the partial ordering on heads. Thus for any head $H$, once we arrive at line \ref{algl:poheads}, there will be no edge added that is into it. So it is sufficient to show that for any head $H$, once we arrive it at line \ref{algl:poheads}, there is only one edge into it, which corresponds to the edge in $\widetilde{\mathfrak{I}}^\G_i$. We prove this by induction on the partial ordering on heads. Clearly the algorithm does not add any edge into the maximal head. The base case is for the children of the maximal head and this clearly holds.

For the inductive step, consider all the parent heads of a head $H'$, which exceed $H'$ in the partial ordering and consider $H^*$ and $k$ defined in Definition \ref{def: refined power DAG}. Suppose $H^* \to^k  H'$ does not appear and instead we have $H \to^{k'} H'$ for some other parent head of $H'$. 

Firstly if $k' \neq k$ then by Lemma \ref{characterizing marginalization vertex} and definition of $k$, we must have $k' > k$. Now it is clear that $k \in \ceil_\G(\ham_\G(H))$, so $k$ must be marginalized at some point to reach $H$, therefore $k \leq M(H)$ and then line \ref{algl:smaller} prevents marginalizing of $k'$.

Then we have $k' = k$, so $H \to^{k} H'$. Then by definition of $H^*$ we have $H^* > H$, thus from the maximal head, $H^*$ can be reached within fewer steps than $H$, and line \ref{algl:shorter} prevents adding $H \to^{k} H'$.
\end{proof}

\begin{proof}[Proof of Proposition \ref{prop: complexity of algorithm 1}]

For Algorithm \ref{algo1}: there are at most $\binom{n}{k}$ heads at line \ref{line: for i in [n]} and \ref{algl:poheads}. Then line \ref{algl:smaller} is of order $k$. At line \ref{line: check heads}, it takes $O(n+e)$ to check which new head is reached. Remaining lines all takes constant time or does not contain any loop.

In Algorithm \ref{alg:algo3}, there are at most $\binom{n}{k}$ heads. Then we need to compute the tail for each head and add the corresponding semi-elementary imset. The latter is of constant time and the tail of each head can be computed simultaneously when we visit the head from its parent head, therefore is also of $O(n+e)$ time.
\end{proof}

\begin{algorithm}
    \SetAlgoLined
\SetKw{KwDef}{define}
\KwIn{A simple MAG $\mathcal{G}([n],\mathcal{E})$ ($\mathcal{E}$ stored as adjacencies)}
\KwResult{The refined power DAGs $\widetilde{\mathfrak{I}}^\G_i$ for each $i$}
\For{$i \in [n]$ }{
  
  Let smaller siblings of $i$ be $j_1,\ldots,j_k$\\
  Set $\widetilde{\mathfrak{I}}_i = \{\{j_k,i\} \to \ldots \to \{j_1,i\} \to \{i\}\}$\\
}

\Return{$( \widetilde{\mathfrak{I}}^\G_1, \widetilde{\mathfrak{I}}^\G_2, \cdots, \widetilde{\mathfrak{I}}^\G_n)$}
\caption{Obtain the refined power DAGs for a simple MAG}
\label{alg:algo2}
\end{algorithm}

\subsection{An example for redundant independences in the refined power DAGs}\label{sec:redundant indep for refined power DAGs}

\begin{example}\label{exp: reordering is not enough}
Consider Figure \ref{fig:reordering is not enough}(i). The independences associated with the component of its refined power DAG for $6$ under a numerical topological ordering are:
\begin{align*}
    6 &\indep 3 \mid 1,2,4 &
    6 &\indep 2 \mid 1,3,5 & 
    6 &\indep 2 \mid 1.
\end{align*}
Whatever the topological order over the other vertices, there is always a third independence that is redundant. For example, with this numerical ordering, one can deduce $6 \indep 2 \mid 1$ from $6 \indep 2 \mid 1,3,5$ and $2 \indep 3,5 \mid 1$.
If one adds an edge between $2$ and $3$ as in Figure \ref{fig:reordering is not enough}(ii), then only semi-graphoids are insufficient to deduce $6 \indep 2 \mid 1$ from only first two independences and independences associated with earlier vertices; hence, in this case, we need the third independence.

This example suggests that a minimal list of independences required to define the model sometimes depends on structures not local to the vertex being considered. We are investigating how to obtain such a list, but conjecture that it may be computationally difficult to 
do so in general.

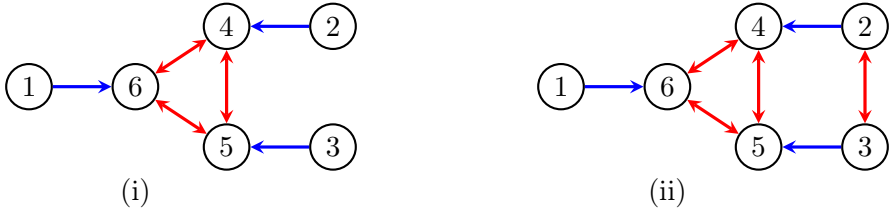
\begin{figure}
\centering
  \begin{tikzpicture}
  [rv/.style={circle, draw, thick, minimum size=6mm, inner sep=0.8mm}, node distance=14mm, >=stealth]
  \pgfsetarrows{latex-latex}
\begin{scope}
  \node[rv]  (6)             {$6$};
  \node[rv,right of=6, xshift=-0.2cm, yshift=-0.8cm] (5)  {$5$};
  \node[rv,right of=6, xshift=-0.2cm, yshift=0.8cm] (4)  {$4$};
  \node[rv, right of=5](3) {$3$};
  \node[rv, right of=4](2) {$2$};
  \node[rv, left of=6](1) {$1$};
  \draw[beg] (6) -- (4);
  \draw[beg] (5) -- (4);
  \draw[beg] (6) -- (5);
  \draw[deg] (3) -- (5);
  \draw[deg] (2) -- (4);
  \draw[deg] (1) -- (6);
  \node[below of=6] {(i)};
  \end{scope}
\begin{scope}[xshift = 7cm]
  \node[rv]  (6)          {$6$};
  \node[rv,right of=6, xshift=-0.2cm, yshift=-0.8cm] (5) {$5$};
  \node[rv,right of=6, xshift=-0.2cm, yshift=0.8cm] (4)  {$4$};
  \node[rv, right of=5](3) {$3$};
  \node[rv, right of=4](2) {$2$};
  \node[rv, left of=6](1) {$1$};
  \draw[beg] (2) -- (3);
  \draw[beg] (6) -- (4);
  \draw[beg] (5) -- (4);
  \draw[beg] (6) -- (5);
  \draw[deg] (3) -- (5);
  \draw[deg] (2) -- (4);
  \draw[deg] (1) -- (6);
  \node[below of=6] {(ii)};
  \end{scope}

\end{tikzpicture}
 \caption{Example where refined Markov property is still redundant}
 \label{fig:reordering is not enough}
\end{figure}

\end{example}
\subsection{Proof of Theorem \ref{thm:imset as sum of elementary imset for general MAGs} in Section \ref{sec:decomp of u_G}} \label{full proof of decomp of u_G}

\begin{proof}[Proof of Theorem \ref{thm:imset as sum of elementary imset for general MAGs}]
Let $T = \tail_\G(H)$. Note that $u_\G$ from Theorem \ref{thm: standard imset} can be rewritten as: $$\sum_{i=1}^{n}\biggl\{ \delta_{[i]}-\delta_{[i-1]}-\sum_{\substack{H \in \mathcal{H}(\G) \\ i \geq H}} \sum_{W \subseteq H} (-1)^{\abs{H \setminus W}} \delta_{W \cup T}\biggr\}.$$

By induction, if we restrict to the final vertex of a topological ordering, say $n$, then all we need to prove is that:
\begin{align*}
\lefteqn{\delta_{[n]} - \delta_{ [n-1]} - \sum_{\substack{H \in \mathcal{H}(\G) \\ n \geq H}} \sum_{W \subseteq H} (-1)^{|H \setminus W|} \delta_{W \cup T} }\\
&= u_{\langle n,[n-1] \setminus \mb(n,[n]) \mid \mb(n,[n]) \rangle } + \sum_{\substack{H \in \mathcal{H}(\G) \setminus \{n\} \\ n \geq H}}  \sum_{ \substack{ \emptyset \subset K \subseteq H \setminus \{i\}:  \\ H \to^{K} H'}} (-1)^{\abs{K}+1} 
    u_{\langle i,H  T \setminus H' T' K  \mid H'T' \setminus i \rangle }.
\end{align*}
Note that $u_{\langle n,[n-1] \setminus \mb(n,[n]) \mid \mb(n,[n]) \rangle} = \delta_{[n]} - \delta_{[n-1]} - \delta_{\{n\} \cup \mb(n)} + \delta_{\mb(n)}$, so we can reduce the equivalence to
\begin{align}
\lefteqn{-\sum_{\substack{H \in \mathcal{H}(\G) \\ n \geq H}} \sum_{W \subseteq H} (-1)^{|H \setminus W|} \delta_{W \cup T}} \label{LHS}\\ 
&= \delta_{\mb(n,[n])} - \delta_{\{n\} \cup \mb(n,[n])} \nonumber \\
&\phantomrel{=}+ \sum_{\substack{H \in \mathcal{H}(\G) \setminus \{n\} \\ n \geq H}} \sum_{ \substack{ \emptyset \subset K \subseteq H \setminus \{i\}:  \\ H \to^{K} H'}} (-1)^{\abs{K}+1} 
(\delta_{H T \setminus K} - \delta_{(HT \setminus Kn)} - \delta_{H'T' } + \delta_{H'T' \setminus n}). \label{RHS}
\end{align}
We can rewrite  (\ref{LHS}) as: 
\begin{eqnarray}
\lefteqn{-\sum_{\substack{H \in \mathcal{H}(\G) \\ n \geq H}}\sum_{W \subseteq H} (-1)^{\abs{H \setminus W}} \delta_{W \cup T} }  \nonumber\\
& &= -\sum_{\substack{H \in \mathcal{H}(\G) \\ n \geq H}}\sum_{K \subseteq H} (-1)^{\abs{K}} \delta_{H T \setminus K}\nonumber\\
& & =\sum_{\substack{H \in \mathcal{H}(\G) \\ n \geq H}}\sum_{K \subseteq H \setminus \{n\}} (-1)^{\abs{K}+1} (\delta_{H T \setminus K} - \delta_{H T \setminus Kn}). \label{LHS2}  
\end{eqnarray}
The repeating part is only $-\delta_{H T}+\delta_{H T \setminus n}$ for each $H$ (when $K= \emptyset$). For each $H$, the two terms do not appear in the summation for $H$ (as we rule out the case when $K=\emptyset$ in Theorem \ref{thm:imset as sum of elementary imset for general MAGs}) but they appear when other heads marginalized to $H$ and the two terms are multiplied by some constants. 

Our objective is to show that (\ref{LHS2}) and (\ref{RHS}) are equivalent, for which it is sufficient to prove that for every head $H$, the coefficient of $-\delta_{H T}+\delta_{H T\setminus i}$ is 1 in (\ref{RHS}). For the largest head, which is $\barren(\dis_{[n]}(n))$, it is clearly true because there is no head that is `larger'  than it and for this head $-\delta_{H T}+\delta_{H T \setminus i}$ is simply $\delta_{\mb(n,[n])} - \delta_{\{n\} \cup \mb(n,[n])}$, 
which never appears in the summation.

For any other head $H$ with maximal vertex $n$, we need to prove: $$\sum_{K, H': H' \to^{K} H}(-1)^{\abs{K}+1} = 1.$$ 
Note that the summation is over both $H'$ and $K$ because for a pair of heads there might exist different marginalization sets that lead one to another.

Now by Lemma \ref{construct K given minimal K}, any head $H'$ that can reach $H$ with multiple marginalizing sets and minimal marginalizing set $K'$ contributes $\sum_{B: B \subseteq (H' \setminus (H \cup K'))} (-1)^{\abs{K'}+\abs{B}+1} = 0$ to the coefficient of $-\delta_{H \cup T}+\delta_{(H \setminus \{n\} \cup T)}$. 

Thus it is sufficient to consider any head $H'$ that can reach $H$ with the only (minimal) marginalizing set $K = H' \setminus H$. By Lemma \ref{characterization of minimal set}, we find all these heads and in total they contribute $$\sum_{K: \emptyset \subset K \subseteq \ceil(\sib(\dis_{\an(H)}(i))\setminus \dis_{\an(H)}(i))} (-1)^{\abs{K}+1} = 1$$ to the coefficient of $-\delta_{H \cup T}+\delta_{(H \setminus \{n\} \cup T)}$.
\end{proof}


\begin{proof}[Proof for Corollary \ref{cor: submodel}]
By Theorem \ref{thm:imset as sum of elementary imset for general MAGs}, the `standard' imset $u_{\G}$ can be expressed as one combinatorial imset $u_p$ subtracted by another combinatorial imset $u_n$ where $u_p$ are obtained by adding all semi-elementary imsets with positive coefficients (where $|K| $ is odd), and similarly for $u_n$ but with negative coefficients (where $|K| $ is even). 

Let $u_p^1$ be sum of semi-elementary imsets corresponding to marginalize only one vertex, which are all in $u_p$. By Theorem \ref{thm:head and tail Markov property}, $I_{\G} \subseteq \I_{u_p^1}$. The faithfulness result in \citet{richardson2002} shows that for every MAG $\G$ there exists a distribution
$P$ such that $\I_P = \I_{\G}$  and Theorem 5.2 in \citet{studeny2006probabilistic} implies the existence of
a structural imset $u$ such that $I_u = I_P$. Now every independence in the list of independences that we used to construct $u_p^1$ is in $\I_{u}$ by Theorem \ref{thm:head and tail Markov property}, so by Lemma 
6.1 in \citet{studeny2006probabilistic}, we have $I_{u_p^1} \subseteq \I_u = \I_{\G}$. Hence $\I_{u_p^1} = \I_{\G}$. Now the remaining semi-elementary imsets that we use to construct $u_p$ are those corresponding to marginalize odd number of vertices (more than one), and the independences they correspond to are all in $\I_{\G} = \I_{u_p^1}$. Therefore $\I_{u_p^1} = \I_{u_p}$. 

Suppose $u_{\G}$ is structural and take any distribution $P$ that is not Markov to it. \citet{studeny2006probabilistic} shows that a distribution is Markov to a structural imset if and only if the inner product between the imset and entropy vector of the distribution is zero. Moreover this inner product is non-negative. Therefore, for $P$ and $u_{\G}$, this inner product is positive. Further, as $u_{\G}$ is also $u_p-u_n$ where $u_p$ and $u_n$ are both combinatorial, this means that the inner product between $P$'s entropy vector and $u_p$ are also positive, and hence $P$ is not Markov to $u_p$. Thus $P$ is not Markov to $I_{\G}$. As a result, $\I_{u_{\G}} \subseteq \I_{\G}$.
\end{proof}
\begin{definition}
    Let $\G$ be a MAG containing a district $D$.  
    Then by $\G^{|D}$ denote the induced subgraph over 
    $D \cup \pa_\G(D)$, but where all vertices within
    $\pa_\G(D) \setminus D$ have been joined by bidirected
    edges.
\end{definition}

\begin{corollary}{\label{cor:tian_factorization}}
Let $\G$ be a MAG. The `standard' imset $u_\G$ is perfectly Markovian
with respect to $\G$ if the `standard' imsets 
$u_{\G^{|D}}$ are perfectly Markovian with respect
to $\G^{|D}$.
\end{corollary}

\begin{proof}
Take the expression for the standard imset given in 
Then note that, for a district $D$, each summand other than the very first term only contains independences between an element of the district, and other elements of the district and its parents, and the collection of all first terms gives a DAG model.  Furthermore, the subimset defined by the second inner summation represents the model in which all the vertices in $\pa_\G(D) \setminus D$ have been joined by an edge.  Since we know that any MAG can be defined by independences only within the district (plus the initial independence in the expression above), this shows that a `standard' imset for a MAG $\G$ will be perfectly Markovian with respect to the graph if and only if the associated standard imsets for each district and its parents are perfectly Markovian with respect to the induced subgraph obtained after filling in any missing edges between parents.
\end{proof}

\begin{example}
Consider Figure \ref{DAG on heads}.  The multiple labels on the edges in (ii) means that there are different sets of vertices that can be marginalized to lead from one head to another.

The edge $\{4,5,6\} \to^{45} \{1,3,6\}$ indicates that when we compute the semi-elementary imset for $\{4,5,6\}$ by marginalizing vertices in $\{4,5\}$, we only reach  the head $\{1,3,6\}$ once. Hence the head $\{4,5,6\}$ contributes $(-1)^{2+1}=-1$ to the coefficient of $-\delta_{1236}+\delta_{123}$ for the head $\{1,3,6\}$. Similarly the edge from $\{3,5,6\}$ to $\{6\}$ means that we may reach to the head $\{6\}$ from the head $\{3,5,6\}$ by marginalizing either $\{3\}$ or $\{3,5\}$. So the head $\{3,5,6\}$ contributes $(-1)^{1+1}+(-1)^{2+1}=0$ to the coefficient of $-\delta_{26}+\delta_{2}$ for the head $\{6\}$.

\begin{figure}
\centering
  \begin{tikzpicture}
  [rv/.style={circle, draw, thick, minimum size=6mm, inner sep=0.8mm}, node distance=14mm, >=stealth]
  \pgfsetarrows{latex-latex}
\begin{scope}
  \node[rv]  (1)            {$1$};
  \node[rv, below of=1, xshift=-6mm,yshift=2mm] (2) {$2$};
  \node[rv, below of=1, xshift=6mm,yshift=2mm] (3) {$3$};
  \node[rv, above of=1,yshift=-2mm] (4) {$4$};
  \node[rv, below of=2, xshift=-10mm,yshift=5mm](5){$5$};
  \node[rv, below of=3, xshift=10mm,yshift=5mm](6){$6$};
  \draw[->, very thick, blue] (1) to [bend right=30] (5);
  \draw[->, very thick, blue] (2)  to [bend right=30] (6);
  \draw[->, very thick, blue] (3)  to [bend right=30] (4);
  \draw[beg] (1) -- (3);
  \draw[beg] (2) -- (3);
  \draw[beg] (1) -- (2);
  \draw[beg] (2) -- (5);
  \draw[beg] (3) -- (6);
  \draw[beg] (1) -- (4);
  \node[below of=1, yshift=-2.1cm] {(i)};
  \end{scope}
\begin{scope}[xshift = 7cm,yshift=3cm]
  \node[]  (1)            {$4,5,6$};
  \node[, below of=1,yshift=-0.5cm] (2) {$3,5,6$};
  \node[, right of=2,xshift=0.5cm] (3) {$1,4,6$};
  \node[, below of=2,yshift=-0.5cm] (4) {$1,3,6$};
  \node[, below of=3,yshift=-0.5cm] (5) {$3,6$};
  \node[, below of=4,yshift=-0.5cm] (6) {$6$};
  \path[->,draw,very thick,blue]
    (1) edge node [xshift=1.5mm]{\scriptsize{$4$}} (2)
    (1) edge node [xshift=1.5mm,yshift=1.5mm]{\scriptsize{$5$}} (3)
    (2) edge node [xshift=1.5mm,yshift=1mm]{\scriptsize{$5$}} (4)
    (3) edge node [xshift=-1.5mm,yshift=1.5mm]{\scriptsize{$4$}} (4)
    (3) edge node [xshift=2mm]{\scriptsize{$\substack{1 \\ 1,4}$}} (5)
    (4) edge node [yshift=2mm]{\scriptsize{$1$}} (5)
    (4) edge node [xshift=2mm,yshift=1mm]{\scriptsize{$\substack{3 \\ 1,3}$}} (6)
    (5) edge node [xshift=1.5mm,yshift=-1.5mm]{\scriptsize{$3$}} (6)
    (2) edge [bend right=30] node [xshift=-3mm]{\scriptsize{$\substack{3 \\ 3,5}$}} (6)
    (1) edge [bend right=30] node [xshift=-3mm]{\scriptsize{$4,5$}} (4);

  \node[below of=6, xshift=7.5mm, yshift=0.5cm] {(ii)};
  \end{scope}

\end{tikzpicture}
 \caption{(i) A MAG $\G$ (ii) A complete power DAG on the heads of $\G$ with maximal vertex 6, under the numerical topological ordering.}
 \label{DAG on heads}
\end{figure}
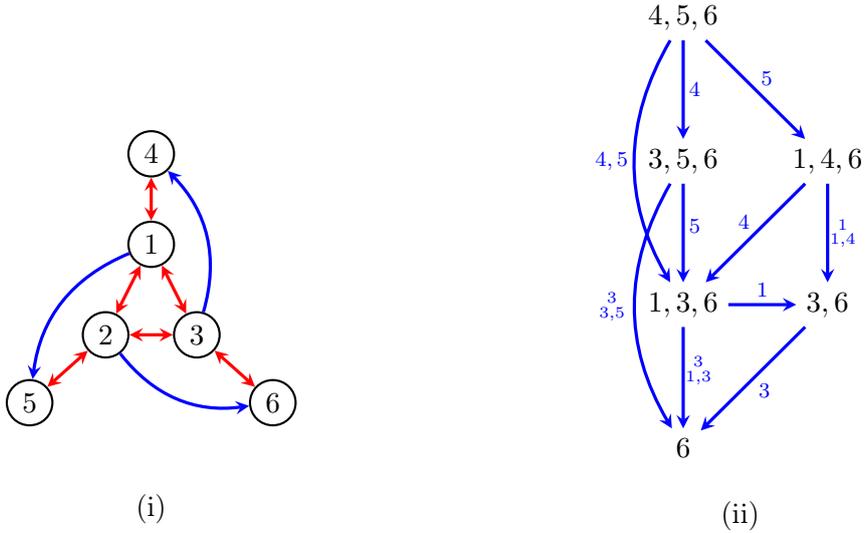

\end{example}

\section{Scoring criteria}\label{sec:scoring_app}
The BIC \citep{schwarz1978estimating} is a consistent (defined below) scoring criterion. For discrete models, \cite{Evans2014,evans2010maximum} provide procedures for fitting ADMGs by maximum likelihood, and thus we can compute the BIC.

Let $\ell$ be the log-likelihood and $q^{\theta}$ denote the family of distributions that are Markov to a fitted graph $\G$, with parameter $\theta$; denote the maximum likelihood estimator for $\theta$ by $\hat{\theta}$. Let $d=\abs{\mathcal{S}(\G)}$ be the dimension of the binary model \citep{Evans2014}, and $N$ and $N(x_{\mathcal{V}})$ be the number of samples and the number of samples such that $X_{\mathcal{V}}=x_{\mathcal{V}}$, respectively. Then the BIC for the model defined by $\G$ is: 
$$
-2 \hat{\ell}+ d \log N,
$$ 
where $\hat{\ell}=\sum_{x_{\mathcal{V}}} N(x_{\mathcal{V}}) \log q_{\hat{\theta}}(x_{\mathcal{V}}).$


However this score is not very suitable for a greedy learning algorithm as we need to re-fit the whole graph when we consider new models. Thus in this chapter we aim to study a scoring criterion proposed by \citet{andrews:phd} that is decomposable with respect to the parametrizing set $\mathcal{S}(\G)$. The motivation is that equivalent MAGs have the same BIC and the parametrizing set, and every time we move between Markov equivalence classes of MAGs, we simply change the 0-1 vector of the characteristic imset, and it would save a lot computations if for those sets remaining in the parametrizing set, we do not need to compute their corresponding scores again.

The proposing scoring criterion is $-2N \langle c_{\G}, \hat{\sf I} \rangle+ d\log N$, where $\hat{{\sf I}}$ is the plug-in estimate of \emph{interaction information}, defined below. Thus we are approximating $\hat{\ell}$ with $\langle c_{\G}, \hat{{\sf I}} \rangle$.

\subsection{Definitions}
\begin{definition}
The \emph{entropy} of a set of discrete variables $X_{\mathcal{V}}$, is defined as:
$$
{\sf H}(X_{\mathcal{V}})=\sum_{x_{\mathcal{V}}} P(X_{\mathcal{V}}=x_{\mathcal{V}}) \log P(X_{\mathcal{V}}=x_{\mathcal{V}}).
$$
\end{definition}

\begin{definition}
The interaction information of a set of discrete variables $X_{\mathcal{V}}$, is defined as: $${\sf I}(X_{\mathcal{V}}) = \sum_{S \subseteq V} (-1)^{\abs{V \setminus S}} {\sf H}(X_S).$$
\end{definition}

We have the following identity: $${\sf H}(X_{\mathcal{V}}) = \sum_{T \subseteq V} \sum_{S \subseteq T} (-1)^{\abs{T \setminus S}} {\sf H}(X_S)=\sum_{T \subseteq V} {\sf I}(X_T).$$

The plug-in estimate of entropy of $X_{\mathcal{V}}$, is $$\hat{{\sf H}}(X_{\mathcal{V}}) = \sum_{x_{\mathcal{V}}} \frac{N(x_{\mathcal{V}})}{N} \log\frac{N(x_{\mathcal{V}})}{N}.$$
The plug-in estimate of interaction information of $X_{\mathcal{V}}$, is $$\hat{{\sf I}}(X_{\mathcal{V}}) = \sum_{S \subseteq V} (-1)^{\abs{V \setminus S}} \hat{{\sf H}}(X_S).$$

Theorem \ref{thm:imset as sum of elementary imset for general MAGs} allows us to decompose the entropy in terms of the interaction information over the parametrizing set. We first need a definition to simplify the expression.

\begin{definition}\label{inner product between imset and functions}
Given a function $f$ which takes $X_A$ for any $A \subseteq V$ as input, and an imset $u$ over $V$, we define $$\langle u, f \rangle = \sum_{A \subseteq V} u(A) f(x_A).$$
\end{definition}
Let ${\sf H}, {\sf I}$ be the entropy function and the interaction information function, respectively.
\begin{proposition}\label{connection between u_G and entropy}
For a MAG $\G$ with vertex set $V$ and a distribution $p$ that is Markov to $\G$, we have 
\begin{align*}
    {\sf H}(X_{\mathcal{V}}) &= \sum_{T \in \mathcal{S}(\G)} {\sf I}(X_T) = \langle c_{\G}, {\sf I} \rangle = \langle \delta_{\mathcal{V}}-u_{\G}, {\sf H} \rangle.
    \end{align*}
\end{proposition}

\begin{proof}
Recall that $\mathcal{P}(S)$ denotes the power-set of $S$. 
By Theorem \ref{thm:imset as sum of elementary imset for general MAGs}, it is sufficient to show that $\langle u_{\G}, {\sf H} \rangle = \sum_{T \notin \mathcal{S}(\G)} {\sf I}(X_T) = 0$.
\begin{align*}
\langle u_{\G}, {\sf H} \rangle &= \sum_{S \in \mathcal{P}(\mathcal{V})} \sum_{\substack{T \in \mathcal{P}(\mathcal{V})\\S \subseteq T \subseteq \mathcal{V}}} (-1)^{\abs{T \setminus S}} (1-c_{\G}(T)) {\sf H}(X_S)\\
&= \sum_{S \in \mathcal{P}(\mathcal{V})} \sum_{\substack{T \in \mathcal{P}(\mathcal{V})\\S \subseteq T \subseteq \mathcal{V}}} (-1)^{\abs{T \setminus S}} \delta_{T \notin \mathcal{S}(\G)} {\sf H}(X_S)\\
&= \sum_{S \in \mathcal{P}(\mathcal{V})} \sum_{\substack{T \notin \mathcal{S}(\G)\\S \subseteq T \subseteq \mathcal{V}}} (-1)^{\abs{T \setminus S}}  {\sf H}(X_S)\\
&= \sum_{T \notin \mathcal{S}(\G)} \sum_{S \subseteq T} (-1)^{\abs{T \setminus S}}  {\sf H}(X_S)\\
&= \sum_{T \notin \mathcal{S}(\G)}  {\sf I}(X_T).
\end{align*}

The last equality in the statement then holds, as $\langle \delta_{\mathcal{V}}, {\sf H} \rangle = {\sf H}(X_{\mathcal{V}})$.
\end{proof}

A consistent scoring criterion requires the following property.

\begin{definition}
Let $\mathcal{D}$ be a set of data that are i.i.d.~samples from a distribution $P$. A scoring criterion $S$ is asymptotically consistent if in the limit when the number of sample grows, the following holds:
\begin{itemize}
    \item[1] If $\G$ contains $P$ and $\G^{'}$ does not, then $S(\G, \mathcal{D}) < S(\G^{'}, \mathcal{D})$.
    \item[2] If both $\G$ and $\G^{'}$ contain $P$ but $\G^{'}$ has higher dimension then $\G$, then $S(\G, \mathcal{D}) < S(\G^{'}, \mathcal{D})$.
\end{itemize}
\end{definition}

\subsection{Bayesian networks}
We firstly show that if we are fitting DAG models, then $\hat{\ell}$ is equal to $N \langle c_{\G}, \hat{{\sf I}} \rangle$.
Recall that for DAGs, the maximum likelihood estimate (MLE) of $P(x_v \mid x_{\pa_{v}})$ is $N(x_v,x_{\pa_{v}})/ N(x_{\pa_{v}}).$ Given $N$ i.i.d.~samples,  the log-likelihood can be expressed as:
\begin{align*}
\ell(P;N) &= \sum_{x_{\mathcal{V}}} N(x_{\mathcal{V}}) \log P(x_{\mathcal{V}})\\
&= \sum_{x_{\mathcal{V}}} N(x_{\mathcal{V}}) \sum_v \log P(x_v \mid x_{\pa(v)})\\
&= \sum_v \sum_{x_v,x_{\pa(v)}} N(x_v, x_{\pa(v)}) \log P(x_v \mid x_{\pa(v)})\\
&= \sum_v \sum_{x_{\pa(v)}} \sum_{x_v} N(x_v, x_{\pa(v)}) \log P(x_v \mid x_{\pa(v)}).
\end{align*}
Replacing $P(x_v \mid x_{\pa(v)})$ by the MLE $N(x_v,x_{\pa_{v}})/ N(x_{\pa_{v}})$, the log-likelihood is then 
\begin{align*}
\hat{\ell} &= \sum_v \sum_{\pa(v)} \sum_{x_v} N(x_v, x_{\pa(v)}) \log \frac{N(x_v, x_{\pa(v)})}{N(x_{\pa(v)})}\\
&= \sum_v \sum_{\pa(v)} \left\{\sum_{x_v} N(x_v, x_{\pa(v)}) \log \frac{N(x_v, x_{\pa(v)})}{N} + \right.\\
&\phantomrel{=} \qquad- \left.  \sum_{x_v} N(x_v, x_{\pa(v)}) \log \frac{N( x_{\pa(v)})}{N}\right\}\\
&= \sum_v \sum_{\pa(v)} \sum_{x_v} N(x_v, x_{\pa(v)}) \log \frac{N(x_v, x_{\pa(v)})}{N}+\\
&\phantomrel{=}\qquad- \sum_v \sum_{\pa(v)}  N( x_{\pa(v)}) \log \frac{N( x_{\pa(v)})}{N}\\
&= N\langle \delta_{\mathcal{V}}-u_{\G}, \hat{{\sf H}} \rangle\\
&= N\langle c_{\G}, \hat{{\sf I}} \rangle.
\end{align*}

The above calculations have appeared in the literature before \citep{studeny2006probabilistic} but we demonstrate it in our notation to make other proofs clear.

\subsection{Consistency for MAGs when imset is perfectly Markovian}

In this section, we show that when fitted MAGs have imsets that are perfectly Markovian w.r.t.\ the graphs, the scoring criterion is consistent. First, we introduce the well-studied Kullback-Leibler divergence.
\begin{definition}
For two distributions $P$ and $Q$ defined over $X_{\mathcal{V}}$, and two disjoint subsets $A,C \subseteq V$, define the \emph{Kullback-Leibler} (KL) divergence between $P_{A|C}$ and $Q_{A|C}$ as
$$
\infdiv{P_{A \mid C}}{Q_{A \mid C}} = \mathbf{E}_{P_{A\mid C}}[\log{\frac{P_{A\mid C}}{Q_{A \mid C}}}].
$$
\end{definition}

Two nice properties of KL divergence are that it is non-negative, and that
it is zero if and only if $P=Q$ almost surely. 

\begin{proposition}\label{KL inner product}
For any two distributions $P$ and $Q$ and any semi-elementary imset $u_{\langle A,B \mid C \rangle}$, provided that $Q$ satisfies $X_A \indep X_B \mid X_C$, we have $$\langle u_{\langle A, B \mid C\rangle}, \infdiv{P}{Q} \rangle \geq 0,$$ with equality holding if and only if $P$ also satisfies $X_A \indep X_B \mid X_C$.
\end{proposition}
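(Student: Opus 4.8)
The plan is to read $\infdiv{P}{Q}$ as the set function $S \mapsto \infdiv{P_S}{Q_S}$ on marginals, so that pairing it with the semi-elementary imset $u_{\langle A,B\mid C\rangle} = \delta_{A\cup B\cup C} - \delta_{A\cup C} - \delta_{B\cup C} + \delta_C$ via Definition \ref{inner product between imset and functions} gives
\[
\langle u_{\langle A,B\mid C\rangle}, \infdiv{P}{Q}\rangle = \infdiv{P_{ABC}}{Q_{ABC}} - \infdiv{P_{AC}}{Q_{AC}} - \infdiv{P_{BC}}{Q_{BC}} + \infdiv{P_C}{Q_C}.
\]
First I would apply the chain rule for KL divergence to each of the four terms, peeling off the common marginal over $X_C$: writing the conditional divergence as $\mathbf{E}_{P_C}[\infdiv{P_{\cdot\mid C}}{Q_{\cdot\mid C}}]$, each term equals $\infdiv{P_C}{Q_C}$ plus a conditional piece. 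The four copies of $\infdiv{P_C}{Q_C}$ carry signs $+1,-1,-1,+1$ and cancel, leaving
\[
\langle u_{\langle A,B\mid C\rangle}, \infdiv{P}{Q}\rangle = \mathbf{E}_{P_C}\bigl[\infdiv{P_{AB\mid C}}{Q_{AB\mid C}} - \infdiv{P_{A\mid C}}{Q_{A\mid C}} - \infdiv{P_{B\mid C}}{Q_{B\mid C}}\bigr].
\]

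Next I would fix $x_C$ and work pointwise, invoking the hypothesis that $Q$ satisfies $X_A \indep X_B \mid X_C$, i.e.\ $Q_{AB\mid C} = Q_{A\mid C}\,Q_{B\mid C}$. Substituting this factorization into the first divergence and using $\sum_{x_B} P_{AB\mid C} = P_{A\mid C}$ (and symmetrically in $B$) to rewrite the two marginal divergences as sums against $P_{AB\mid C}$, the logarithm collapses to $\log P_{AB\mid C} - \log Q_A - \log Q_B - \log P_A + \log Q_A - \log P_B + \log Q_B$, in which every factor involving $Q$ cancels. What survives at each $x_C$ is
\[
\sum_{x_A,x_B} P_{AB\mid C}(x_A,x_B) \log\frac{P_{AB\mid C}(x_A,x_B)}{P_{A\mid C}(x_A)\,P_{B\mid C}(x_B)} = \infdiv{P_{AB\mid C}}{P_{A\mid C}P_{B\mid C}},
\]
the conditional mutual information of $X_A$ and $X_B$ given $X_C=x_C$ under $P$. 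Since any KL divergence is non-negative by Gibbs' inequality, each pointwise term is $\geq 0$, and hence so is its $P_C$-expectation, yielding the claimed inequality.

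For the equality case I would use that $\infdiv{P_{AB\mid C}}{P_{A\mid C}P_{B\mid C}} = 0$ exactly when $P_{AB\mid C} = P_{A\mid C}P_{B\mid C}$. As the outer expectation integrates a non-negative function against $P_C$, the whole quantity vanishes if and only if the factorization holds for $P_C$-almost every $x_C$, which is precisely $X_A \indep X_B \mid X_C$ under $P$.

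The single load-bearing step is the cancellation in the third paragraph: this is exactly where the assumption on $Q$ is consumed, and the point worth stating carefully is that after the substitution $Q_{AB\mid C}=Q_{A\mid C}Q_{B\mid C}$ all dependence on $Q$ disappears, collapsing a difference of three $P$-to-$Q$ divergences into a single $P$-to-$P$ divergence, namely the conditional mutual information. By comparison the chain-rule bookkeeping and the almost-sure phrasing of the equality condition are routine.
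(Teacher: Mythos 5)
Your argument is correct and follows essentially the same route as the paper's proof: expand the pairing, use the chain rule to cancel the four copies of $\infdiv{P_C}{Q_C}$, substitute the factorization $Q_{AB\mid C}=Q_{A\mid C}Q_{B\mid C}$ so that all dependence on $Q$ cancels, and recognise the remainder as the expected conditional mutual information under $P$. The only difference is that you spell out the equality case (vanishing of a non-negative integrand $P_C$-almost surely), which the paper leaves implicit.
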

\begin{proof}
\begin{align*}
\langle u_{\langle A, B \mid C\rangle}, \infdiv{P}{Q} \rangle &= \mathbf{E}_{P_C}[\infdiv{P_{AB \mid C}}{Q_{AB \mid C}} -\infdiv{P_{A \mid C}}{Q_{A \mid C}}-\infdiv{P_{B \mid C}}{Q_{B \mid C}}]\\
&= \mathbf{E}_{P_C}\left[{\sf I}(A;B \mid C)-\mathbf{E}_{P_{AB \mid C}}\log \frac{Q_{AB\mid C}}{Q_{A\mid C} Q_{B \mid C}}\right]\\
&= \mathbf{E}_{P_C}[{\sf I}(A;B \mid C)] \qquad \text{(provided } Q \text{ satisfies }A \indep B \mid C)\\
&\geq 0, \qquad
\end{align*}
where the last inequality comes from the fact that mutual information is always non-negative.
\end{proof}

Next we show that the score is consistent. 

\begin{proposition}\label{prop:new score is consistent}
The score $-2N \langle c_{\G}, \hat{{\sf I}} \rangle+ d\log N$, is consistent when $\I_{u_\G} = \I_\G$, where $\G$ is the fitting MAG.

\end{proposition}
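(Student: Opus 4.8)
The plan is to strip the graph-independent part out of the score and reduce everything to the behaviour of the single discrepancy term $\langle u_\G, \hat{\sf H}\rangle$. First I would record the purely algebraic identity underlying Proposition \ref{connection between u_G and entropy}: the M\"obius computation there never uses the Markov property, so for \emph{any} empirical entropy $\hat{\sf H}$ we have
$$\langle c_\G, \hat{\sf I}\rangle = \langle \delta_V - u_\G, \hat{\sf H}\rangle = \hat{\sf H}(X_V) - \langle u_\G, \hat{\sf H}\rangle.$$
Substituting this into the score gives $S(\G,\dataset) = -2N\hat{\sf H}(X_V) + 2N\langle u_\G, \hat{\sf H}\rangle + d_\G\log N$, and since $-2N\hat{\sf H}(X_V)$ does not depend on the graph it cancels in any comparison:
$$S(\G,\dataset) - S(\G',\dataset) = 2N\big(\langle u_\G,\hat{\sf H}\rangle - \langle u_{\G'},\hat{\sf H}\rangle\big) + (d_\G - d_{\G'})\log N.$$

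Next I would analyse $D(\G) := \langle u_\G, {\sf H}_P\rangle$ and its empirical counterpart $\hat D(\G) := \langle u_\G, \hat{\sf H}\rangle$. Because $\I_{u_\G} = \I_\G$, the imset $u_\G$ is structural, so some multiple $k\,u_\G$ is a nonnegative integer combination of elementary imsets $u_{\langle a,b\mid C\rangle}$; moreover each such triple is represented in $u_\G$ and hence lies in $\I_\G$. Using $\langle u_{\langle a,b\mid C\rangle}, \hat{\sf H}\rangle = \hat{\sf I}(a;b\mid C) \geq 0$ (the empirical conditional mutual information), this exhibits $\hat D(\G)$ as a nonnegative combination of empirical conditional mutual informations, so $\hat D(\G)\geq 0$, and by the law of large numbers $\hat D(\G)\to D(\G)$ almost surely. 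The population quantity satisfies $D(\G)\geq 0$ with equality if and only if $P$ is Markov to $u_\G$, i.e.\ $\I_\G \subseteq \I_P$ (``$\G$ contains $P$''): this is the characterisation of Markovianity of a distribution with respect to a structural imset, which I would cite from \citet{studeny2006probabilistic} and which refines the nonnegativity of Proposition \ref{KL inner product} (note $\langle u_\G,\cdot\rangle$ annihilates modular functions, so the entropy and multiinformation versions of the inner product coincide).

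With these facts the two consistency conditions follow by comparing rates. For condition~1, suppose $\G$ contains $P$ but $\G'$ does not; then $D(\G)=0$ while $D(\G')>0$, so $\langle u_\G,\hat{\sf H}\rangle - \langle u_{\G'},\hat{\sf H}\rangle \to -D(\G') < 0$, the leading $2N(\cdots)$ term diverges to $-\infty$ linearly in $N$, and the penalty difference is only $O(\log N)$; hence $S(\G,\dataset) < S(\G',\dataset)$ eventually. For condition~2, suppose both graphs contain $P$ but $d_{\G'} > d_\G$; now $D(\G) = D(\G') = 0$, and the data term must be controlled more finely. Since every elementary independence appearing in $\hat D(\G)$ and $\hat D(\G')$ holds under $P$, the statistic $2N\,\hat{\sf I}(a;b\mid C)$ converges in distribution to a $\chi^2$ (Wilks/$G$-test asymptotics), so $N\hat D(\G)$ and $N\hat D(\G')$ are $O_p(1)$ and the whole data term is $O_p(1)$; meanwhile $(d_\G - d_{\G'})\log N \to -\infty$, which dominates, giving $S(\G,\dataset) - S(\G',\dataset)\to-\infty$ in probability.

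The main obstacle is precisely the rate control in condition~2: I must know not merely that $\hat D(\G)\to 0$ but that $N\hat D(\G) = O_p(1)$ when the model holds, i.e.\ the Wilks-type statement that the empirical-entropy discrepancy behaves like a likelihood-ratio statistic with a $\chi^2$ limit. This is where the large-sample distribution theory of the plug-in mutual-information statistic enters, and I would invoke standard results for it, observing (as in the Bayesian-network calculation) that $N\langle u_\G, \hat{\sf H}\rangle$ is of log-likelihood-ratio type. A secondary point to make explicit is that the argument presumes the competing graphs $\G'$ are themselves perfectly Markovian, so that $u_{\G'}$ is structural and $D(\G')\geq 0$ with the stated equality characterisation; restricting the search space to such MAGs is the natural hypothesis under which the criterion is consistent.
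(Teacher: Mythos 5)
Your proposal is correct, but it takes a genuinely different route from the paper's. You compare the scores of two graphs directly: after cancelling the graph-independent term $-2N\hat{\sf H}(X_V)$ via the purely algebraic identity $\langle c_\G,\hat{\sf I}\rangle=\langle\delta_V-u_\G,\hat{\sf H}\rangle$, everything reduces to the discrepancy $\hat D(\G)=\langle u_\G,\hat{\sf H}\rangle$, which you control through the elementary-imset decomposition of the structural imset (the law of large numbers gives $\hat D(\G')\to D(\G')>0$ for a model that does not contain $P$, and Wilks-type $\chi^2$ asymptotics give $N\hat D(\G)=O_p(1)$ for one that does). The paper instead routes the comparison through the BIC: it writes the log-likelihood of the fitted MAG as an inner product of $\delta_V-u_\G$ with $\log q^{\hat\theta}$, expresses the gap $D_1$ between the new score and the BIC as $2N\langle\delta_V-u_\G,\infdiv{\hat P}{q^{\hat\theta}}\rangle$, and then leverages the known behaviour of the BIC difference $D_2$ to the true model, using Proposition \ref{KL inner product} to reduce $\langle u_\G,\infdiv{\hat P}{q^{\hat\theta}}\rangle$ to a sum of empirical conditional mutual informations. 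The probabilistic ingredients are identical in the two arguments ($\chi^2$ limits for the conditional mutual information statistics when the independence holds, a strictly positive population value when it fails); what your route buys is that it never needs the maximum-likelihood fit $q^{\hat\theta}$ or the MAG factorization machinery, while the paper's route makes the relationship to the BIC explicit, which is itself part of the point of the section (the score is motivated as an approximation to $\hat\ell$). Your closing caveat---that the competing graphs $\G'$ must also satisfy $\I_{u_{\G'}}=\I_{\G'}$ so that $u_{\G'}$ is structural and $D(\G')>0$ when $\G'$ fails to contain $P$---matches an assumption the paper also relies on in the second case of its condition (ii), but which it states less explicitly; making it precise is an improvement rather than a gap.
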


\begin{proof}
Let us first consider any general MAG $\G$ and $q^{\theta}$ with the parameters $\theta = \{q(x_H \mid x_{\tail(H)}): H \in \mathcal{H}(\G)\}$ \citep{Evans2014}. Note that $q^{\theta}$ can be factorized by Theorem \ref{thm:imset as sum of elementary imset for general MAGs} and the log-likelihood can be rewritten as the inner product between $\delta_{\mathcal{V}}-u_{\G}$ and $\log q^{\theta}$ where the inner product is taken over all subsets of $V$:
\begin{align*}
\ell &= \sum_{x_{\mathcal{V}}} N(x_{\mathcal{V}}) \langle \delta_{\mathcal{V}}-u_{\G}, \log q^{\theta} \rangle \\
&= N \langle \delta_{\mathcal{V}}-u_{\G}, \sum_{x_{\mathcal{V}}} \frac{N(x_{\mathcal{V}})}{N} \log q^{\theta} \rangle.
\end{align*}
The difference between $\hat{\ell}$ and $N \langle c_{\G}, \hat{{\sf I}} \rangle$ is then
\begin{align*}
 \hat{\ell}-N \langle c_{\G}, \hat{{\sf I}} \rangle &= N \langle \delta_{\mathcal{V}}-u_{\G} ,  \sum_{x_{\mathcal{V}}} \frac{N(x_{\mathcal{V}})}{N} \log q^{\hat{\theta}} - \hat{\sf H}\rangle\\
 &= -N \langle \delta_{\mathcal{V}}-u_{\G}, \infdiv{\hat{P}}{q^{\hat{\theta}}} \rangle.
\end{align*}
Hence the difference from the BIC of the fitted model $q^{\hat{\theta}}$ with graph $\G$ and the new score is 
$$
D_1= 2N \times \langle \delta_{\mathcal{V}}-u_{\G},  \infdiv{\hat{P}}{q^{\hat{\theta}}} \rangle. 
$$
Moreover the difference between the fitted model $q^{\hat{\theta}}$ with graph $\G$ and the BIC of the true model $p^{\hat{\theta}^{'}}$ is:
\begin{align*}
D_2 &= -2\sum_{x_{\mathcal{V}}} N(x_{\mathcal{V}}) \log q^{\hat{\theta}}(x_{\mathcal{V}})+d_2 \log N + 2\sum_{x_{\mathcal{V}}} N(x_{\mathcal{V}}) \log p^{\hat{\theta}^{'}}(x_{\mathcal{V}})-d_1 \log N\\
&= 2N \sum_{x_{\mathcal{V}}} \hat{P}(x_{\mathcal{V}}) \log \frac{p^{\hat{\theta}^{'}}(x_{\mathcal{V}})}{q^{\hat{\theta}}(x_{\mathcal{V}})}+(d_2-d_1) \log N\\
&= 2N \sum_{x_{\mathcal{V}}} \hat{P}(x_{\mathcal{V}}) \log\left[\frac{\hat{P}(x_{\mathcal{V}})}{q^{\hat{\theta}}(x_{\mathcal{V}})}\frac{p^{\hat{\theta}^{'}}(x_{\mathcal{V}})}{\hat{P}(x_{\mathcal{V}})}\right] +(d_2-d_1) \log N\\
&= 2N \times \infdiv{\hat{P}}{q^{\hat{\theta}}}-2N \times \infdiv{\hat{P}}{p^{\hat{\theta}^{'}}}+(d_2-d_1) \log N,
\end{align*}
where $d_1,d_2$ denote the dimension of the true model and fitted model, respectively.

To show that the score is consistent, it is sufficient to show that the following conditions are satisfied:
\begin{itemize}
    \item[(i)] when $D_2=0$, i.e.\ the fitted model is the true model, $D_1 = O_p(1)$;
    \item[(ii)] when $D_2 > 0$, i.e.\  the fitted model is not the true model,  we have $D_2-D_1$ diverges at rate at least $O_p(\log N)$.
\end{itemize}

Consider condition (i) when we are fitting the true model. For any subset $A \subseteq V$, the term $$2N \times \infdiv{\hat{P}(X_A)}{q^{\hat{\theta}}(X_A)}$$ has $\chi^2$ distribution with some degree of freedom. Since $\delta_{\mathcal{V}}-u_{\G}$ has fixed number of terms, $D_1=O_p(1)$ on all these as $N$ tends to infinity.

Consider condition (ii), we have:
$$
D_2-D_1 = 2N \times \langle u_{\G}, \infdiv{\hat{P}}{q^{\hat{\theta}}} \rangle-2N \times \infdiv{\hat{P}}{p^{\hat{\theta}^{'}}}+(d_2-d_1) \log N.
$$

The second term $2N \times \infdiv{\hat{P}}{p^{\hat{\theta}^{'}}}$ has $\chi^2_{d_1}$ distribution, so it is $O_p(1)$. Moreover, by Proposition \ref{KL inner product} and given that $u_{\mathcal{}}=\sum u_{\langle A, B \mid C\rangle}$ is combinatorial, we have:
\begin{align*}
2N \times \langle u_{\G}, \infdiv{\hat{P}}{q^{\hat{\theta}}} \rangle &= 2N \sum \mathbb{E}_{\hat{P}_C}[{\sf I}(A;B \mid C)]    
\end{align*}

Now we need to split into two cases. The first case is if the fitted model contains the true model, but has higher dimension and contains more parameters then $d_2 > d_1$ so the third term $(d_2-d_1) \log N$ is at $O(\log N)$. For the first term, because the fitted model contains the true model, it satisfies all the conditional independence corresponding to  the semi-elementary imsets in $u_{G}$. Treating each $2N \times \mathbb{E}_{\hat{P}_C}[{\sf I}(A;B \mid C)]$ as a likelihood ratio test, it has chi-squared distribution, so in total the first two terms are at $O_p(1)$. Thus in this case, $D_2-D_1 = O_p(\log N)$.

For the second case when the fitted model is wrong and does not contain the true model, given that $u_{\G}$ is combinatorial and is perfectly Markovian with respect to the graph, there exists at least one conditional independence in $u_{\mathcal{G}}=\sum u_{\langle A, B \mid C\rangle}$ such that $\mathbb{E}_{P_C}[{\sf I}(A;B \mid C)] = \lambda > 0$. Empirically, when $N$ tends to infinity, $\mathbb{E}_{\hat{P}_C}[{\sf I}(A;B \mid C)]$ will be close to $\lambda$, and hence the first term in $D_2-D_1$ grows at $O_p(N)$. Even if the third term $(d_2-d_1) \log N$ is negative, it grows at $O(\log N)$, which is slower than the first term, hence in the second case, $D_2-D_1=O_p(N).$
\end{proof}

\section{Bidirected graphs}\label{sec:bidirected graphs}

For general MAGs, the problem of characterizing the conditions for when $\I_{u_\G}$ is well defined and $\I_{u_\G} = \I_{\G}$, seems hard in general. To reduce the difficulty, we focus on bidirected graphs in this section. We will give a condition such that our proposed `standard' imsets $u_{\G}$ for \emph{bidirected graphs} are always combinatorial and perfectly Markovian with respect to $\G$. We have computationally verified that this condition is also necessary for $\abs{\mathcal{V}} \leq 7$, and not found any graphs for which it is not. 

\begin{definition}\label{dualgraph}
Let $\G$ be a bidirected graph, and define its undirected 
\emph{dual graph}, $\overline{\G}$, by $i - j \in \overline{\G}$ 
if and only if $i \not\leftrightarrow j$ in $\G$.
\end{definition}

The dual graph is a powerful tool when we analyse the bidirected graphs; see Example \ref{exm:5chain}.

\subsection{How does the characteristic imset help?}

There are many MAGs that are not simple, but still have combinatorial standard imsets that are perfectly Markovian with respect to the graph; one example is the bidirected 4-cycle, which has a standard imset consisting of the elementary imsets for its two marginal independences. 
For bidirected graphs, we found that since the vertices can be given any topological order and there are many heads (any connected subgraph), it is difficult to directly decompose the standard imset or prove the validity of decomposition; however, it turns out to be easier if we work with the characteristic imset. 
In this section, we consider the relationship between the semi-elementary imset decomposition of combinatorial standard imsets and the characteristic imset.  

Recall that two conditional independences $I_1, I_2$ overlap if 
$\bar{\Sset}(I_1) \cap \bar{\Sset}(I_2) \neq \emptyset$.




\begin{proposition}\label{how characteristic imset help}
A `standard' imset $u_{\G}$ is combinatorial if and only if there is a list of non-overlapping conditional independences $\mathbb{L}$ such that $\cup_{I \in \mathbb{L}} \bar{\Sset}(I) = \mathcal{P}(\mathcal{V}) \setminus \mathcal{S}(\G)$.
\end{proposition}

\begin{proof}
This follows from Lemma \ref{lemma:linear relation between u and c}, 
Corollary \ref{cor:imset must agree on S(G)} and the arguments used
in their proofs.
\end{proof}

For example, if $\G$ is the bidirected 4-cycle $1 \leftrightarrow 2\leftrightarrow 3\leftrightarrow 4\leftrightarrow 1$ the independences are $1\indep 3$ and $2 \indep 4$.  The corresponding sets to be constrained are $\{1,3\}$ and $\{2,4\}$, and these are clearly disjoint.  Therefore the standard imset is perfectly Markovian and is simply $u_{\G} = u_{\langle 1,3 \rangle}+u_{\langle 2,4 \rangle}$. 

Proposition \ref{prop:induced_subgraph} allows us to quickly determine if a standard imset is not perfectly Markovian w.r.t.\ the graph by checking if the graph contains certain ancestral structures.

\begin{figure}
    \centering
    \begin{tikzpicture}[rv/.style={circle, draw, thick, minimum size=5mm, inner sep=0.5mm}, node distance=14mm, >=stealth]
\begin{scope}[xshift=4cm, yshift=1.5cm]
    \node[] (0) {};
    \node[rv] (1) at (30:.8cm) {1};
    \node[rv] (2) at (150:.8cm) {2};
    \node[rv] (3) at (270:.8cm) {3};
    \node[rv] (4) at (90:1.6cm) {4};
    \node[rv] (5) at (330:1.6cm) {5};
    \node[rv] (6) at (210:1.6cm) {6};
\draw[very thick] (3) -- (2) -- (1) -- (3);
\draw[very thick] (1) -- (4) -- (2) -- (6) -- (3) -- (5) -- (1); 
\node at (270:1.8cm) {(a)};
    \end{scope}
\begin{scope}[xshift=8cm, yshift=-.4cm]
    \node[] (0) {};
    \node[rv] (1) at (180:.6cm) {1};
    \node[rv] (3) at (60:.6cm) {3};
    \node[rv] (5) at (300:.6cm) {5};
    \node[rv] (4) at (180:1.8cm) {4};
    \node[rv] (6) at (60:1.8cm) {6};
    \node[rv] (2) at (300:1.8cm) {2};
\draw[very thick] (1) -- (3) -- (5) -- (1) -- (4) -- (2) -- (6) -- (4);
\draw[very thick] (3) -- (6); 
\draw[very thick] (2) -- (5);
\node at (270:1.8cm) {(c)};
    \end{scope}
    \begin{scope}[xshift=8cm, yshift=-5cm]
    \node[] (0) {};
    \node[rv] (1) at (180:.6cm) {1};
    \node[rv] (3) at (60:.6cm) {3};
    \node[rv] (5) at (300:.6cm) {5};
    \node[rv] (4) at (180:1.8cm) {4};
    \node[rv] (6) at (60:1.8cm) {6};
    \node[rv] (2) at (300:1.8cm) {2};
\draw[very thick] (1) -- (3) -- (5) -- (1) -- (4) -- (2) -- (6) -- (4) to[bend left=8] (3);
\draw[very thick] (3) -- (6); 
\draw[very thick] (2) -- (5);
\node at (270:1.8cm) {(f)};
    \end{scope}
       \begin{scope}[xshift=0cm, yshift=-.4cm]
    \node[] (0) {};
    \node[rv] (1) at (70:1.5cm) {1};
    \node[rv] (2) at (110:1.5cm) {2};
    \node[rv] (3) at (150:1.5cm) {3};
    \node[] (4) at (190:1.5cm) {};
    \node[rv] (k) at (30:1.5cm) {$k$};
    \node[] (k1) at (-10:1.5cm) {};
\draw[very thick] (3) -- (2) -- (1) -- (k);
\draw[very thick, dashed] (4) -- (3); 
\draw[very thick, dashed] (k) -- (k1);
\node at (270:1cm) {(b)};
    \end{scope}
             \begin{scope}[xshift=37mm, yshift=-25mm]
     \node[rv] (2) {2};
    \node[rv, below of=2] (3) {3};
    \node[rv, right of=2] (5) {5};
    \node[rv, right of=3] (4) {4};
    \node[rv, yshift=-7mm, xshift=-10mm] (1) at (2) {1};
    \node[rv, xshift=-7mm, yshift=10mm] (6) at (5) {6};
\draw[very thick] (3) -- (2) -- (5) -- (4) -- (3) -- (1) -- (2);
\draw[very thick] (2) -- (6) -- (5); 
\node[below of=3, xshift=3mm, yshift=5mm] {(d)};
\end{scope}
         \begin{scope}[xshift=-7mm, yshift=-37mm]
    \node[rv] (2) {2};
    \node[rv, below of=2] (3) {3};
    \node[rv, right of=2] (5) {5};
    \node[rv, right of=3] (4) {4};
    \node[rv, yshift=-7mm, xshift=-10mm] (1) at (2) {1};
    \node[rv, yshift=-7mm, xshift=10mm] (6) at (5) {6};
\draw[very thick] (3) -- (2) -- (5) -- (4) -- (3) -- (1) -- (2);
\draw[very thick] (4) -- (6) -- (5); 
\node[below of=3, xshift=7mm, yshift=5mm] {(e)};
    \end{scope}
            \begin{scope}[xshift=3.7cm, yshift=-7cm, node distance=12mm]
    \node[rv] (2) {2};
    \node[rv, below of=2] (3) {3};
    \node[rv, right of=2] (5) {5};
    \node[rv, right of=3] (4) {4};
    \node[rv, left of=2] (1) {1};
    \node[rv, above of=2] (6) {6};
\draw[very thick] (3) -- (2) -- (5) -- (4) -- (3) -- (1) -- (2);
\draw[very thick] (2) -- (6) -- (5); 
\draw[very thick] (1) -- (6);
\node[below of=3, xshift=3mm, yshift=5mm] {(g)};
    \end{scope}

\end{tikzpicture}
    \caption{ Forbidden dual subgraphs: (a) triangles; (b) a $k$-cycle, for $k \geq 5$; (c) dual to the 6-cycle; (f) dual to the 6-chain; (d), (e), (g) other graphs with at least one chordless 4-cycle.}
    \label{fig:forbidden}
\end{figure}
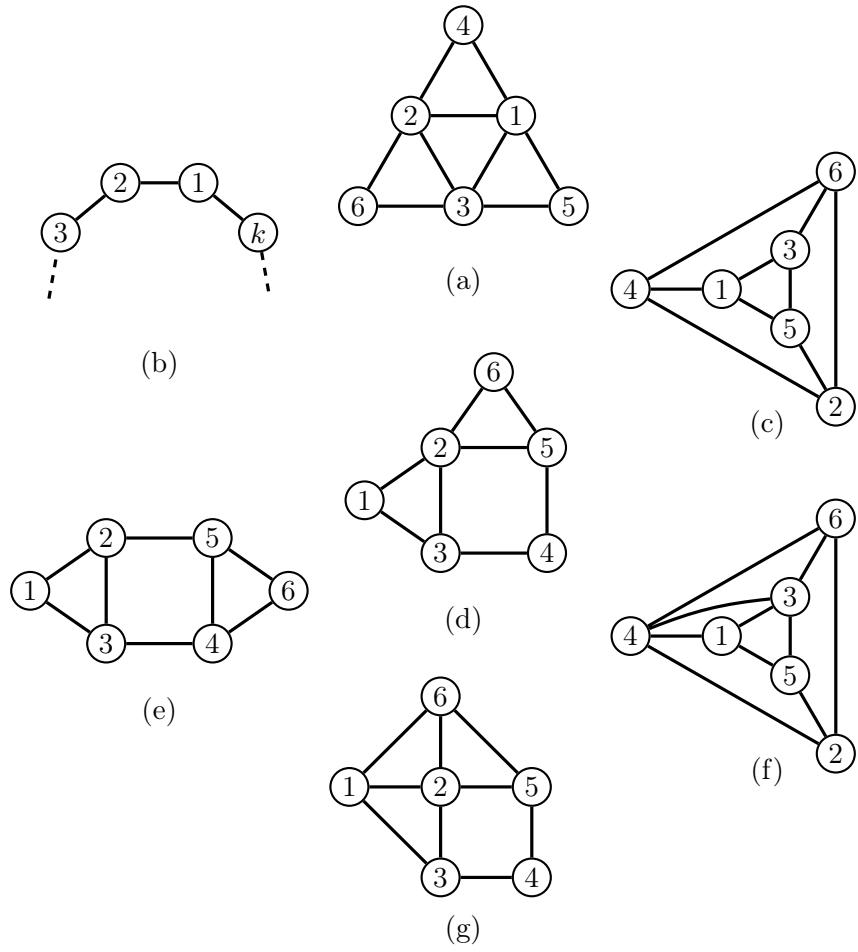

\begin{remark}
In Figure \ref{fig:forbidden}, we display all dual graphs with at most 6 vertices that must not appear as induced subgraphs to the dual of a bidirected graph for the standard imset to be perfectly Markovian with respect to the graph. This is primarily a consequence of Proposition \ref{prop:induced_subgraph}.
\end{remark}

\subsection{For which bidirected graphs do we get standard imsets that are perfectly Markovian?}

We will present a theorem which gives sufficient criteria for bidirected graphs to have combinatorial standard imsets and induce the same model as the graph. For $\abs{\mathcal{V}} \leq 7$ we have empirically verified that this is also necessary. Before we present the theorem, we give a motivating example.

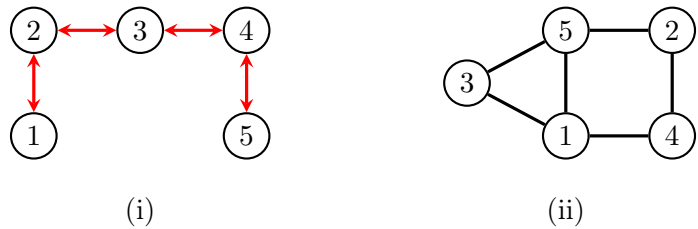
\begin{figure}
    \centering
     \begin{tikzpicture}
  [rv/.style={circle, draw, thick, minimum size=6mm, inner sep=0.8mm}, node distance=14mm, >=stealth]
  \pgfsetarrows{latex-latex}
\begin{scope}
  \node[rv]  (1)            {$1$};
  \node[rv, above of=1] (2) {$2$};
  \node[rv, right of=2] (3) {$3$};
  \node[rv, right of=3] (4) {$4$};
  \node[rv, below of=4] (5) {$5$};

  \draw[beg] (1) -- (2);
  \draw[beg] (2) -- (3);
  \draw[beg] (3) -- (4);
  \draw[beg] (4) -- (5);
  \node[below of=3, yshift=-1cm] {(i)};
  \end{scope}
\begin{scope}[xshift=7cm]
  \node[rv]  (1)            {$1$};
  \node[rv, left of=1, yshift=7mm, xshift=1mm] (3) {$3$};
  \node[rv, right of=1] (4) {$4$};
  \node[rv, above of=1](5){$5$};
  \node[rv, right of=5] (2) {$2$};

  \draw[-,very thick, black] (1) -- (3);
  \draw[-,very thick, black] (1) -- (5);
  \draw[-,very thick, black] (3) -- (5);
  \draw[-,very thick, black] (1) -- (4);
  \draw[-,very thick, black] (2) -- (4);
  \draw[-,very thick, black] (2) -- (5);

  \node[below of=5, yshift=-1cm] {(ii)};
  \end{scope}

\end{tikzpicture}
    \caption{(i) The bidirected 5-chain and (ii) its dual graph.}
    \label{fig:5-chain and its dual}
\end{figure}

\begin{example} \label{exm:5chain}
Consider the bidirected 5-chain and its dual graph in Figure \ref{fig:5-chain and its dual}. Consider the following list of conditional independences: 
\begin{align*}
    & \left\{ \begin{array}{c}
    4 \indep 2 \cmid 1,5\\
    4 \indep 1 \cmid 3,5 
    \end{array} \right\} &
    2 &\indep 5 \cmid 1,3 &
    5 &\indep 1,3 &
    1 &\indep 3. 
\end{align*}
\end{example}
One can check that the sum of semi-elementary imsets corresponding to the above list of independences is the same as the standard imset for the bidirected 5-chain.  In addition the
standard imset is perfectly Markovian with respect to the graph. 

This decomposition starts with 4, and by symmetry it could also start with 2; however, none of the other vertices will work, and we cannot obtain a decomposition in this manner if we try to do so.  In particular, if we take the imset for the graph where all edges for 1, 3, 5 are added and subtract from it the standard imset for the 5-chain, what remains is not a structural imset.  This shows that for bidirected graphs, even though the topological order can be arbitrary, in order to properly decompose the standard imset further restrictions are required.  

Here are some observations. In the dual graph, 4 has neighbours 1 and 2, these vertices share one common neighbour 5, and 1 has one more neighbour 3. However for 3, two of its neighbours have distinct neighbours. This suggests that perhaps we need the neighbours of neighbours to be nested within one another. 

For a vertex $v$, we may also want to treat its neighbours which have the same neighbours as a block, since any path to any one of them also lead to any other vertex in the block. Then any edge within one block does not have any effect on any path to $v$.

Suppose we have a vertex $v$ and that it has neighbours $A$ in the dual graph.  Now partition its neighbours $A^{j}, 1 \leq j \leq m$ such that the neighbours of each $w \in A^j$ (outside $\{v\} \cup A$) are precisely the set $N^{j}$, and such that $N^{1} \subseteq N^{2} \subseteq \dots \subseteq N^{m}$. 

Now the question is how the edges connects the blocks of neighbours of $v$ so that $v$ is a valid ending vertex. One observation from the 5-chain is that for two blocks $A_i$ and $A_j$, if there is one edge between $A_i$ and $A_j$, then we want $A_i$ and $A_j$ to be fully connected. For example, consider vertex 1 which has neighbours $\{3,4,5\}$. Vertices 4 and 5 share the same neighbours outside $A$, so they are in the same block, while 3 is connected only to 5.

Another question is how $A_i$ and $A_j$ are connected, i.e.~the cross-block edges. Suppose $v$ has three blocks of neighbours, $A_1, A_2$ and $A_3$. There are 8 possible ways to connect the three blocks. We cannot write out a proper decomposition for three of them. They are: $A_1 - A_3$, $A_2-A_3$ and $A_1 - A_2 - A_3$. Inspired by the 5-chain example, we have the following definition. For convenience, we introduce the notation $A^{[i,j]}$ to denote $\bigcup_{k=i}^j \{A^k\}$ and each $A^k$ is referred as a block (of vertices). 

\begin{definition}\label{rootedneigihbours}
For a vertex $v$ and an ascending partition of its neighbours $A^{[j_l, j_h]}$,
we say it is \emph{rooted} if:
\begin{itemize}
    \item[(i)] it is empty; or
    \item[(ii)] there exists a block $A^j$ (the root) for some $j_l \leq j \leq j_h$ and 
    $H^j := A^{[j+1, j_h]}$
    and 
    $L^j := A^{[j_l, j-1]}$
    (respectively empty if $j=j_h$ or $j_l$) such that $A^j$ and each set in $H^j$ are fully connected to each set in $L^j$, and for every set in $H^j$ it is entirely disconnected from $A^j$.  In addition, $H^j$ and $L^j$ must themselves be rooted.
\end{itemize}
\end{definition}

One can check that for three blocks and 8 possible ways to connect them, the three ways that do not lead to a proper decomposition also do not satisfy the above definition, but that the other five ways do.

\begin{definition}
For a sequence of blocks $A^{[j_l,j_h]}$ rooted at 
$A^j$,
we define $T^j$ to be the subset of $A^{[1,j_l-1]}$
such that $A^i$ is in $T^j$ if and only if it is connected to $A^j$.
\end{definition}

\begin{lemma}\label{root of subset}
Let $A^{[j_l,j_h]}$ be a collection of neighbours of $v$ 
that is rooted.  Then any subset of it is also rooted.
\end{lemma}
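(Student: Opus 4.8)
The plan is to exploit the recursive structure of Definition \ref{rootedneigihbours} directly. The key observation is that passing to a sub-collection of blocks neither subdivides a block nor changes which pairs of blocks are fully connected or entirely disconnected; hence every connectivity hypothesis in clause (ii) is automatically inherited by any sub-collection, and the only real content of the lemma is exhibiting a valid \emph{root} for the chosen subset. To organise the recursion I would run a lexicographic induction: primarily on the number of blocks in $A^{[j_l,j_h]}$, and secondarily on the size $\abs{S}$ of the index set $S$ specifying the sub-collection.

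First I would dispose of the trivial cases. If $S$ is empty the sub-collection is rooted by clause (i). Otherwise, write $A^{[j_l,j_h]}$ with root $A^j$, so that $L^j = A^{[j_l,j-1]}$ and $H^j = A^{[j+1,j_h]}$ are each rooted and contain strictly fewer blocks. Split the index set as $S = S_L \sqcup S_H$, together with $j$ itself if $j \in S$, where $S_L$ picks out blocks of $L^j$ and $S_H$ blocks of $H^j$. By the primary induction hypothesis, applied to the smaller rooted collections $L^j$ and $H^j$, both sub-collections indexed by $S_L$ and $S_H$ are rooted. If $j \in S$, I would claim $A^j$ is still a root for $S$: every block of $S_H$ remains disconnected from $A^j$, and $A^j$ together with the blocks of $S_H$ remain fully connected to the blocks of $S_L$, these being restrictions of the conditions that held between $H^j$, $A^j$ and $L^j$; with $S_L,S_H$ rooted this verifies clause (ii). If $j \notin S$ and one of $S_L,S_H$ is empty, then $S$ coincides with the other, which is already rooted.

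The main obstacle is the last case: $j \notin S$ with both $S_L$ and $S_H$ nonempty, where none of the original blocks is an obvious candidate root. Here I would take the root $A^{j'}$ of the (already rooted) sub-collection indexed by $S_H$, and argue that $A^{j'}$ roots all of $S$. Writing $L'$ and $H'$ for the lower and upper parts of $S_H$ about $A^{j'}$, the new lower part for $S$ is $S_L \cup L'$ and the new upper part is $H'$. The four full-connection checks reduce either to the rooting of $S_H$ (for connections internal to $S_H$, namely $A^{j'},H' \leftrightarrow L'$) or to the fact that every block of $L^j$ is fully connected to every block of $H^j$ (for the cross connections $A^{j'},H' \leftrightarrow S_L$, since $A^{j'},H' \subseteq H^j$ while $S_L \subseteq L^j$); the disconnection of $H'$ from $A^{j'}$ and the rootedness of $H'$ come directly from $S_H$ being rooted at $A^{j'}$.

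The one genuinely nontrivial obligation is that the new lower part $S_L \cup L' = S \cap [j_l, j'-1]$ is itself rooted. Since $A^{j'} \in S$ is excluded from this intersection, it is a \emph{proper} subset of $S$ within the same collection $A^{[j_l,j_h]}$, so it is strictly smaller in the secondary coordinate and the secondary induction hypothesis applies; this closes the argument. I expect the bookkeeping of these cross-block connections to be the only delicate point, and it is precisely the lexicographic measure — decreasing in block count when descending into $L^j$ or $H^j$, and decreasing in $\abs{S}$ when treating the recombined lower part — that guarantees every appeal to the induction hypothesis is legitimate.
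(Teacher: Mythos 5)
The paper states Lemma \ref{root of subset} without proof, so there is no argument of the authors' to compare yours against; judged on its own terms, your proof is correct and supplies the missing argument. You correctly identify that all connectivity clauses of Definition \ref{rootedneigihbours} are inherited by restriction, so the only substantive obligation is producing a root, and your case analysis is exhaustive: when the original root $A^j$ survives into the subset $S$ it still roots $S$ with lower and upper parts $S_L \subseteq L^j$ and $S_H \subseteq H^j$ rooted by the induction on block count; when it does not survive and one side is empty you fall back on $L^j$ or $H^j$ alone. The genuinely delicate case --- $j \notin S$ with both $S_L$ and $S_H$ nonempty --- is handled correctly by promoting the root $A^{j'}$ of $S_H$: the cross connections from $\{A^{j'}\} \cup H'$ to $S_L$ follow because these blocks sit in $H^j$ and $L^j$ respectively and are therefore fully connected by the original rooting, while the internal conditions on $H'$ and $L'$ come from the rooting of $S_H$. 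Your appeal to the secondary induction for the recombined lower part $S_L \cup L' = S \cap [j_l, j'-1]$ is legitimate, since $A^{j'}$ is removed and the primary coordinate is unchanged, so the lexicographic measure strictly decreases. One small remark: in the only place the paper actually invokes this lemma (for the contiguous tail $A^{[j,m]}$ in the proof of Theorem \ref{thm: rooted bidirected graph}), your fourth case never arises, so your argument proves strictly more than is needed there --- but it does establish the lemma in the generality in which it is stated.
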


\begin{lemma}\label{uniqueroot}
For a vertex $v$ and a collection of its consecutive neighbours, if it is rooted then the root is unique.
\end{lemma}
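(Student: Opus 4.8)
The plan is to argue by contradiction, exploiting the asymmetry built into Definition~\ref{rootedneigihbours} between the blocks lying \emph{above} the root and those lying \emph{below} it. Suppose the ascending collection of consecutive neighbours $A^{[j_l,j_h]}$ is rooted and, contrary to the claim, admits two distinct roots $A^j$ and $A^{j'}$; without loss of generality take $j < j'$. I would first unpack precisely what each of these blocks being a root forces about the relationship between the two blocks $A^j$ and $A^{j'}$ themselves, and then show that the two conclusions are contradictory.

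Since $j < j'$, the block $A^{j'}$ lies in $H^j = A^{[j+1,j_h]}$, the high part relative to the root $A^j$. Applying clause (ii) of Definition~\ref{rootedneigihbours} to $A^j$, every set in $H^j$ is entirely disconnected from $A^j$; in particular $A^{j'}$ has no edges to $A^j$. On the other hand, again because $j < j'$, the block $A^j$ lies in $L^{j'} = A^{[j_l,j'-1]}$, the low part relative to the root $A^{j'}$. Applying clause (ii) to $A^{j'}$, the root $A^{j'}$ is fully connected to every set in $L^{j'}$, and hence to $A^j$. Thus the unordered pair $(A^j, A^{j'})$ is simultaneously required to be fully connected and entirely disconnected.

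The key step is to observe that these two requirements are genuinely incompatible, and this is where nonemptiness of the blocks matters: each $A^k$ is a block of a partition of the neighbours of $v$, so it contains at least one vertex. Consequently "fully connected" forces at least one edge between $A^j$ and $A^{j'}$, while "entirely disconnected" forbids any such edge, a contradiction. Hence no two distinct roots can exist, and the root is unique. I expect the only point needing genuine care to be the index bookkeeping --- verifying $A^{j'} \in H^j$ and $A^j \in L^{j'}$ exactly, including the boundary conventions under which $H^j$ or $L^{j'}$ would otherwise be empty --- together with the observation that the blocks are nonempty so that the two connectivity conditions truly clash. The argument does not appear to require Lemma~\ref{root of subset}, though that result is available if one prefers to phrase the contradiction by passing to the two-block subcollection $\{A^j, A^{j'}\}$, which is itself rooted and in which the incompatibility is immediate.
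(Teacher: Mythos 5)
Your argument is correct: taking two putative roots $A^j$ and $A^{j'}$ with $j<j'$, clause (ii) of Definition~\ref{rootedneigihbours} applied at $A^j$ places $A^{j'}$ in $H^j$ and hence forces it to be entirely disconnected from $A^j$, while the same clause applied at $A^{j'}$ places $A^j$ in $L^{j'}$ and forces $A^{j'}$ to be fully connected to $A^j$; since partition blocks are non-empty, these requirements are incompatible. The paper states Lemma~\ref{uniqueroot} without giving a proof, so there is nothing to compare against, but your argument is the natural one and is complete, including the index check that $j_l \le j < j' \le j_h$ guarantees $A^{j'} \in H^j$ and $A^j \in L^{j'}$.
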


Note that $H^j$ or $L^j$ and $A^{[i,j]}$ are written as sets of sets, but when we use them in certain set expressions or in any independence, we just think of it as the union of the blocks contained in that set, i.e.~sets of variables.  The context should prevent any confusion.

\begin{lemma}\label{L^jrelation}
Suppose a block ${A}^{[j_l,j_h]} = H^j \cup L^j \cup \{A^j\}$ is rooted at $A^j$, where $H^j$ and $L^j$ are rooted at $A^k$ and $A^t$ respectively (non-empty), then $T^k = T^j \cup L^j$ and $T^t = T^j$. Moreover $k_l = j+1$ and $t_l = j_l$.
\end{lemma}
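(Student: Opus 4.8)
The two index identities $k_l=j+1$ and $t_l=j_l$ are immediate: by Definition \ref{rootedneigihbours} we have $H^j=A^{[j+1,j_h]}$ and $L^j=A^{[j_l,j-1]}$, and $A^k$, $A^t$ are the roots of these sub-ranges, so the low indices of the ranges they root are exactly $j+1$ and $j_l$. The content of the lemma is therefore the two set identities $T^k=T^j\cup L^j$ and $T^t=T^j$. Unwinding the definition of $T^{(\cdot)}$, and writing $A^i\sim A^x$ for adjacency in the dual graph as in Definition \ref{dualgraph}, I would reduce both identities to a single connectivity claim: for every block $A^i$ with $i<j_l$ one has $A^i\sim A^j\iff A^i\sim A^k\iff A^i\sim A^t$.

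The plan is to prove this claim via a characterization of block adjacency in a rooted sequence. Working inside the fully rooted neighbour set $A=A^{[1,m]}$ of $v$ (condition (iv) of Theorem \ref{thm: rooted bidirected graph}), the recursive rooting organizes the blocks into a tree in which each rooted range has its root as parent of the roots of its $H$- and $L$-parts, with smaller-indexed blocks to the left and larger-indexed blocks to the right; this tree is well defined because each range has a unique root (Lemma \ref{uniqueroot}) and rootedness passes to subsets (Lemma \ref{root of subset}). I would then show that for $a<b$ one has $A^a\not\sim A^b$ if and only if $A^a$ is an ancestor of $A^b$ in this tree. This follows from a short case analysis on the root $A^c$ of the smallest rooted range containing both $A^a$ and $A^b$ (necessarily $a\le c\le b$, else both would lie in one child and the range would not be minimal): if $c=a$ then $A^b\in H^a$ and the root is disconnected from $H^a$, forcing $A^a\not\sim A^b$; if $c=b$ then $A^a\in L^b$, which the root meets fully, forcing $A^a\sim A^b$; and if $a<c<b$ then $A^a\in L^c$ and $A^b\in H^c$, which are fully connected, again $A^a\sim A^b$. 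Only the first case is "$a$ is an ancestor of $b$".

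With the characterization in hand I would finish as follows. Since $A^k$ and $A^t$ are the two children of $A^j$, their ancestor sets each equal that of $A^j$ together with $A^j$ itself; as $i<j_l\le j$ forces $i\neq j$, the characterization gives $A^i\sim A^j\iff A^i\sim A^k$ and $A^i\sim A^j\iff A^i\sim A^t$ for every $i<j_l$, which is the connectivity claim. Then $T^t=\{A^i:i<j_l,\ A^i\sim A^t\}=\{A^i:i<j_l,\ A^i\sim A^j\}=T^j$. For $T^k$, using $k_l=j+1$ I split the indices below $j+1$ into the three ranges $i<j_l$, $i\in[j_l,j-1]$ and $i=j$: the first contributes exactly $T^j$ by the claim; every $A^i\in L^j=A^{[j_l,j-1]}$ is adjacent to $A^k$ since $A^k\in H^j$ and $H^j$ is fully connected to $L^j$, contributing all of $L^j$; and $A^j\not\sim A^k$ because the root is disconnected from $H^j$. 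Hence $T^k=T^j\cup L^j$.

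The main obstacle, I expect, is establishing the adjacency characterization rigorously rather than the final bookkeeping: one must argue that adjacencies to blocks lying strictly below $A^{[j_l,j_h]}$ are controlled by the \emph{global} rooted structure, not merely by the local data of the range, and that the recursive decomposition really yields a tree in which "smallest containing range" behaves like a least common ancestor so that the adjacency of $A^a$ and $A^b$ is \emph{forced} by the rooting of that range. Lemmas \ref{uniqueroot} and \ref{root of subset} are exactly what legitimize this descent; once the characterization is available, the two identities of the lemma reduce to the elementary partition of the index range described above.
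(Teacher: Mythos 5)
The paper states Lemma \ref{L^jrelation} without any proof (as it does Lemmas \ref{root of subset} and \ref{uniqueroot}), so there is nothing to compare your argument against; what you have written would in fact fill a gap in the paper. Your argument is correct. The index identities are, as you say, immediate from Definition \ref{rootedneigihbours}, and your reduction of the two set identities to the single claim that for every $i<j_l$ one has $A^i\sim A^j\iff A^i\sim A^k\iff A^i\sim A^t$ is the right move: the contribution of $L^j$ to $T^k$ and the exclusion of $A^j$ itself follow directly from the full-connection and disconnection clauses of rootedness. The tree characterization ($A^a\not\sim A^b$ for $a<b$ if and only if $A^a$ is an ancestor of $A^b$, proved by the three-way case analysis on the root of the least common rooted range) is sound, and the laminarity of the recursively defined ranges does make that least common range well defined, with Lemma \ref{uniqueroot} guaranteeing the tree is unambiguous. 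Your closing remark identifies the one genuine subtlety correctly: the lemma as literally stated only hypothesizes that the single range $A^{[j_l,j_h]}$ is rooted, which says nothing about adjacencies to blocks below $j_l$, and the conclusion is false without the standing assumption (condition (iv) of Theorem \ref{thm: rooted bidirected graph}) that the entire collection $A^{[1,m]}$ is rooted; your proof makes that implicit hypothesis explicit and uses it exactly where it is needed.
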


\begin{theorem}\label{thm: rooted bidirected graph}
For a bidirected graph, if there exists an ordering of the vertices such that for each $i$ in $\overline{\G}_{[i]}$, its neighbours $A$ can be partitioned into blocks $\{A^{j}, 1 \leq j \leq m\}$ such that
\begin{itemize}
    \item[(i)] the neighbours of each $w \in A^j$ (outside $\{v\} \cup A$) in $\overline{\G}$ are the same set $N^{j}$; 
    \item[(ii)] $N^{1} \subseteq N^{2} \subseteq \dots \subseteq N^{m}$;
    \item[(iii)] for any two blocks, either there is no edge between them or they are fully connected;
    \item[(iv)] $\{A^{j}, 1 \leq j \leq m\}$ are rooted.
\end{itemize}
Then $u_\G$ is combinatorial and $\I_{u_{\G}} = \I_{\G}$.
\end{theorem}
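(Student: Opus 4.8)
The plan is to verify the hypotheses of Proposition \ref{how characteristic imset help}: I would construct a list $\mathbb{L}$ of pairwise non-overlapping conditional independences, each a valid m-separation in $\G$, whose constrained sets exactly cover $\mathcal{P}(\mathcal{V}) \setminus \Sset(\G)$. For a purely bidirected graph there are no directed edges, so $\an_\G(H) = H$ and $\barren_\G(H) = H$ for every set, the tail of every head is empty, and $\Sset(\G)$ is precisely the family of connected vertex sets. Hence $\mathcal{P}(\mathcal{V}) \setminus \Sset(\G)$ is the family of sets $S$ with $\G_S$ disconnected; in the dual $\overline{\G}$ these are exactly the sets admitting a complete bipartite split. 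Once such an $\mathbb{L}$ is built, combinatoriality is immediate from Proposition \ref{how characteristic imset help}, and perfect Markovianity follows by combining the faithfulness statement after Theorem \ref{thm:imset as sum of elementary imset for general MAGs} (which gives $\I_{u_\G} \subseteq \I_\G$ for every MAG) with the observation that a covering list of valid independences generates $\I_\G$ under the semi-graphoid axioms, yielding the reverse inclusion.

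I would build $\mathbb{L}$ by induction along the ordering, adding one vertex at a time and maintaining the invariant that the independences collected so far are non-overlapping, valid, and cover exactly the disconnected subsets of $\G_{[i]}$. When vertex $i$ is introduced, the only genuinely new disconnected sets are those containing $i$ in which $i$ is cut off from part of the remaining vertices within $\G_{[i]}$; the governing structure is exactly the neighbourhood $A$ of $i$ in $\overline{\G}_{[i]}$ together with its block decomposition. Mirroring the $5$-chain computation in Example \ref{exm:5chain}, I would peel the blocks off layer by layer along the nesting $N^1 \subseteq \dots \subseteq N^m$, producing independences of the form $\langle i, A^j \mid \cdots \rangle$ in which the conditioning set is assembled from the shared outer neighbours $N^j$ together with the more highly connected blocks, recursing on the sub-partitions $H^j$ and $L^j$ supplied by Definition \ref{rootedneigihbours}.

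The verification then splits into three parts. First, each constructed triple is a valid m-separation: since $i$ is non-adjacent to every vertex of $A$ in $\G$, and, for a bidirected graph, a path is active given $C$ only when all of its intermediate vertices lie in $C$, one checks that the chosen conditioning set blocks every bidirected path from $i$ into the target block. Second, the constrained sets of the new independences cover precisely the new disconnected sets: here conditions (i) and (ii) guarantee that a set containing $i$ is disconnected in $\G_{[i]}$ exactly when it separates along one of the nested layers. Third, non-overlap: condition (iii), that blocks are either disjoint or completely joined, and condition (iv), rootedness, together ensure that the layers partition the new disconnected sets, so the constrained sets are disjoint both from one another and from those produced at earlier stages; the recursion on $H^j$ and $L^j$ is what makes this partition consistent across the whole neighbourhood.

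The main obstacle is this third part. The rooted condition in Definition \ref{rootedneigihbours} appears to be engineered precisely so that the induced assignment of disconnected sets to independences is a genuine partition; the difficulty is to convert the recursive definition of rootedness into a clean accounting of constrained sets that provably neither double-counts nor omits any disconnected set. I expect the cleanest route is a simultaneous induction on the rooted partition that tracks, for each block $A^j$, exactly which disconnected sets are charged to the independence carrying $A^j$, and then shows that this charge is a bijection onto the newly appearing disconnected sets. Proposition \ref{prop:parametrizing set and independence} supplies the dictionary between disconnected sets and the constrained sets of valid separations on which this bookkeeping rests, while Proposition \ref{prop:induced_subgraph} serves as a consistency check, since any failure of the block conditions would reproduce one of the forbidden induced subgraphs and hence obstruct perfect Markovianity.
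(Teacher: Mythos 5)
Your overall strategy coincides with the paper's: reduce combinatoriality to Proposition \ref{how characteristic imset help} by exhibiting a non-overlapping list of valid independences whose constrained sets cover exactly the disconnected subsets, get $\I_{u_\G} \subseteq \I_\G$ from the faithfulness consequence of Theorem \ref{thm:imset as sum of elementary imset for general MAGs}, and get the reverse inclusion by showing the list generates the (disconnected-set) Markov property. The induction along the vertex ordering, the layer-by-layer peeling of blocks guided by $N^1 \subseteq \dots \subseteq N^m$, and the recursion on $H^j$ and $L^j$ are all exactly what the paper does, with the paper's list being $v \indep A^j \mid H^j, L^j, N^j, T^j$ for each $j$ (where $T^j$ collects the earlier blocks connected to $A^j$; the bookkeeping in Lemma \ref{L^jrelation} for how $T^j$ interacts with the sub-roots is an ingredient your sketch omits).

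However, two genuine gaps remain. First, you assert as an ``observation'' that a covering list of valid independences generates $\I_\G$ under the semi-graphoid axioms. Coverage of constrained sets gives combinatoriality only; it does not by itself yield semi-graphoid derivability of every independence implied by the graph, and this derivation is the bulk of the paper's proof: an inner induction on the rooted structure establishing $v \indep \hat{A}^j \mid N^{j_l}, T^j$ via the chain (\ref{H*})--(\ref{S5}), followed by a further induction on the rooting order of blocks to obtain $\{v\} \cup L^j \cup T^j \cup N^j \indep A^{[j,m]}$ and hence every disconnected-set independence containing $v$. Your proposal supplies none of these steps. Second, you explicitly leave the non-overlap verification unresolved, calling it the main obstacle. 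The paper's resolution is short but essential: for $i < j$, the block $A^i$ can appear in the conditioning set of the independence for $A^j$ only if $A^i \subseteq L^j \cup T^j$, which forces $A^i$ and $A^j$ to be fully connected, whereas $A^j$ can appear in the conditioning set for $A^i$ only if $A^j \subseteq H^i$, which forces them to be fully disconnected; since at most one of these holds, no set is double-counted. Without this dichotomy argument and the semi-graphoid derivation, the proposal is a correct plan rather than a proof.
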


The key to the proof of Theorem \ref{thm: rooted bidirected graph} is to construct a proper independence decomposition of $u_{\G}$ with the aid of Proposition \ref{how characteristic imset help}, and show that it is equivalent to the global Markov property. The following example will give some intuition on how the independences are constructed.
\begin{example}
Consider a representation of the dual graph $\overline{\G}$ in Figure \ref{bidirected graph theorem example}. The vertex $i$ has neighbours partitioned into $A^1, A^2, A^3$ which have neighbours $N^1 \subseteq N^2 \subseteq N^3$ respectively. 

To construct a proper decomposition of $u_{\G}$, we first look at the following conditional independences: 
\begin{align}
    i \indep \widetilde{A} \cmid (\cap_{w \in \widetilde{A}} \nb(w))\setminus \{i\}, \label{nb independence}
\end{align}
 for every $\widetilde{A} \subseteq A = \nb(i)$. These independences are definitely implied by the graph as none of the conditioning variables are siblings of $\widetilde{A}$, so no paths are open. Moreover, one can see that each disconnected set containing $i$ is associated with one of the independences. However, the independences also overlap. Proposition \ref{how characteristic imset help} suggests that we do not need all of them, but just a subset of independences (that may require further modification), such that:
\begin{itemize}
    \item[(i)] the associated constrained sets do not overlap; and  
    \item[(ii)] the independences associated with any disconnected set can be deduced from them using the semi-graphoid axioms.
\end{itemize}
Let's consider the following independences from (\ref{nb independence}):
\begin{align}
    i &\indep A^3, A^2, A^1 \cmid N^1\label{K1}\\
    i &\indep A^3, A^2 \cmid N^2, A^1\label{K2}\\
    i &\indep A^3, A^1 \cmid N^1\label{K3}\\
    i &\indep A^3 \cmid N^3, A^1\label{K4}\\
    i &\indep A^2 \cmid N^2, A^1\label{K5}\\
    i &\indep A^1 \cmid N^1, A^2, A^3.\label{K6}
\end{align}
(\ref{K3}), (\ref{K5}) and (\ref{K6}) are each implied by (\ref{K1}), (\ref{K2}) and (\ref{K1}) respectively, so we can ignore them. Then looking at (\ref{K1}), (\ref{K2}) and (\ref{K4}) we see that there are some overlaps. The set $\{i\} \cup A^3 \cup A^1$ is associated with both (\ref{K1}) and (\ref{K4}), but the latter cannot be further simplified so we marginalize $A^3$ in (\ref{K1}). 

The set $\{i\} \cup A^3 \cup A^2 \cup A^1$ only appears in (\ref{K2}) so we want to keep that, but $\{i\} \cup A^3 \cup A^1$ is associated with both (\ref{K2}) and (\ref{K4}). Move $A^3$ into the conditioning set for (\ref{K2}) to resolve this. 
Also $\{i\} \cup A^2 \cup A^1$ is associated with both (\ref{K1}) and (\ref{K2}), so we marginalize $A^2$ in (\ref{K1}).
This gives the following:
\begin{align*}
    (\ref{K1}) \quad &\rightarrow \quad i \indep A^1 \cmid N^1\\
    (\ref{K2}) \quad &\rightarrow \quad i \indep A^2 \cmid A^3, A^1,N^2\\
    (\ref{K4}) \quad &\rightarrow \quad \text{keep } (\ref{K4}).
\end{align*}
One can check that any independence involving $i$ can be deduced from the above three independences, together with other independences that do not involve $i$ from the induction hypothesis. In addition, these cover all the disconnected sets containing $i$.

\begin{figure}
    \centering
     \begin{tikzpicture}
  [rv/.style={circle, draw, thick, minimum size=6mm, inner sep=0.8mm},
  bl/.style={rectangle, draw, thick, minimum size=6mm, inner sep=0.8mm}, node distance=14mm, >=stealth,
  node distance=14mm, >=stealth]
  \pgfsetarrows{latex-latex}
\begin{scope}
  \node[rv]  (1)            {$i$};
  \node[bl, below of=1,  xshift=-1.5cm] (2) {$A^1$};
  \node[bl, below of=1] (3) {$A^2$};
  \node[bl, below of=1,xshift=1.5cm](4){$A^3$};
  \node[bl, below of=2, xshift=-.2cm] (5) {$N^1$};
  \node[bl, below of=3] (6) {$N^2 \setminus N^1$};
  \node[bl, below of=4, xshift=.5cm] (7) {$N^3 \setminus N^2$};

  \draw[-,very thick, black] (1) -- (2);
  \draw[-,very thick, black] (1) -- (3);
  \draw[-,very thick, black] (1) -- (4);
  \draw[-,very thick, black] (2) -- (5);

  \draw[-,very thick, black] (3) -- (6);

\draw[-,very thick, black] (3) -- (5);
\draw[-,very thick, black] (4) -- (5);
\draw[-,very thick, black] (4) -- (6);

  \draw[-,very thick, black] (4) -- (7);
  \draw[-,very thick, black] (2) -- (3);
  \draw[-,very thick, black] (2) to[bend left=40] (4);

  \end{scope}

\end{tikzpicture}
    \caption{Example for Theorem \ref{thm: rooted bidirected graph}}
    \label{bidirected graph theorem example}
\end{figure}
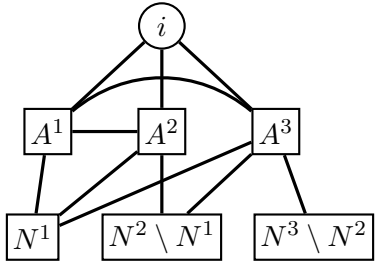
\end{example}
\begin{proof}[Proof of Theorem \ref{thm: rooted bidirected graph}]
The main idea of this proof is to construct a list of independences $\mathbb{L}$, where the sum of semi-elementary imsets corresponding to the independences in $\mathbb{L}$ is $u_{\G}$, then using the similar ideas in the proof of Theorem \ref{thm: standard imset with head size less 3}, we just need to show that $\mathbb{L}$ holds in $\G$ so $I_{u_{\G}} \subseteq I_{\G}$, and $\mathbb{L}$ implies the global Markov property of $\G$ so $I_{\G} \subseteq I_{u_{\G}}$ or the disconnected Markov property from \citet{drton2008binary}.

We will proceed by induction on the ordering of vertices. For a vertex $v$, suppose its neighbours in $\overline{\G}$ are partitioned into $A^{j}, 1 \leq j \leq m$, each of which have neighbours $N^{j}$ (outside of $\{v\} \cup \nb_{\overline{\G}}(v)$) such that $N^{1} \subset N^{2} \subset \dots \subset N^{m}$. 

For each $A^{j}$, it will be a root for exactly one collection of consecutive sets $\hat{A}^{j} = A^{[j_l, j_h]}$, and we consider the following list of independences $\mathbb{L}$:
\begin{align*}
    v &\indep A^{j} \mid H^{j}, L^j, N^j, T^j, \qquad j \in 1,\ldots,m. 
\end{align*}
We will prove that for each $j$, the following independence holds for each $j$ from the above list of independences:
$$v \indep \hat{A}^j \mid N^{j_l}, T^j.$$
We will proceed by two inductions, an outer induction on the number of vertices in $\G$, and an inner induction on the lengths of $\hat{A}^j$.  The base case for the outer induction is trivial, since a graph with one vertex has no independences.  For the inner induction, the base case is $\abs{\hat{A}^j} = 1$, so $\hat{A}^j = \{A^j\}$, $j_l = j$ and $H^j = L^j = \emptyset$, hence it is in the given list of independences.

For the inner induction step, suppose $\hat{A}^j = H^j \cup L^j \cup A^j$, where $H^j$ and $L^j$ are rooted at $k$ and $t$ (non-empty), respectively. Then by the induction hypothesis, we have:
\begin{align*}
    v &\indep H^j \mid N^{k_l}, T^k &
    v &\indep L^j \mid N^{t_l}, T^t,
\end{align*}
but by Lemma \ref{L^jrelation}, this is equivalent to:
\begin{align}
    v &\indep H^j \mid N^{j+1}, L^j, T^j \label{H*}\\
    v &\indep L^j \mid N^{j_l}, T^j. \label{S*}
\end{align}
Since $v$ is the last vertex, any independence not involving $v$ will hold by the induction hypothesis, hence we have:
\begin{align}
    H^j \indep L^j, N^{j+1} \mid T^j \label{I1}\\
    H^j, A^j \indep L^j, N^j \mid T^j. \label{I2}
\end{align}
Combining (\ref{I1}) and (\ref{H*}), we have:
\begin{align}
    H^j \indep \{v\}, L^j, N^{j+1} \mid T^j. \label{S1}
\end{align}
Then marginalizing $N^{j+1}$ to $N^j$ and moving $L^j \cup N^j$ to conditioning variables, we have:
\begin{align}
    v \indep H^j \mid L^j, N^j, T^j. \label{S2}
\end{align}
Now from the given list of independences we have: $$v \indep A^j \mid H^j, L^j, N^j, T^j.$$ Putting this with (\ref{S2}) we obtain:
\begin{align}
    v \indep H^j, A^j \mid L^j, N^j, T^j. \label{S3}
\end{align}
Using (\ref{I2}), (\ref{S3}) then changes to:
\begin{align}
    \{v\}, L^j, N^j \indep H^j, A^j \mid T^j. \label{S4}
\end{align}
marginalizing $N^{j_l+1}$ to $N^{j}$ and moving $L^j \cup N^{j_l}$ to the conditioning variables, we have:
\begin{align}
    v \indep H^j, A^j \mid L^j, N^{j_l}, T^j. \label{S5}
\end{align}
Finally, combining (\ref{S5}) and (\ref{S*}), we obtained the required independence:
\begin{align}
    v \indep H^j, A^j, L^j \mid N^{j_l}, T^j.
\end{align}

Next we show that for any disconnected sets involving $v$, we can deduce the associated independence, that is, the global Markov property. We use the independence (\ref{S4}) that appears in the previous deduction and combine it with an independence from the induction hypothesis, $T^j \indep H^j, A^j$, to get:
\begin{align}
   \{v\}, L^j, T^j, N^j \indep H^j, A^j. \label{S5a} 
\end{align}
Based on this, we prove by induction that for every $j$, we have:
\begin{align}
    \{v\}, L^j, T^j, N^j \indep A^{[j,m]}. \label{S6}
\end{align}

Before the induction, one should notice that for any $i > j$, $L^i \cup T^i \supseteq L^j \cup T^j$.

An order can be given to blocks based on the collection of blocks in which they are rooted. If $H^j, L^j$ are rooted at $A^k, A^t$ respectively, then we say that $j$ is precedes both $k$ and $t$, and apply the induction on this order. The base case is for the root of $A^{[1,m]}$, say $A^j$. In this case, $H^j \cup A^j = A^{[j,m]}$, and it is true.

Suppose for a block $A^j$, any other block $A^i$ such that $i$ precedes $j$ in the above order so we have $\{v\} \cup L^i \cup T^i \cup N^i \indep A^{[i,m]}$. If $H^j \cup A^j \neq A^{[j,m]}$, then $A^{j_h+1}$ must be a block that also precedes $A^j$ in the order, and so we also have $\{v\} \cup L^{j_h+1} \cup T^{j_h+1} \cup N^{j_h+1} \indep A^{[j_h+1,m]}$ and (more importantly) $L^{j_h+1} \cup T^{j_h+1} \supseteq A^{[j, j_h]} \cup L^j \cup T^j$ (it is possible these sets are equal). now put $A^{[j, j_h]} = H^j \cup A^j$ into the conditioning variables and marginalize $N^{j_h+1}$ to $N^{j-1}$ and other irrelevant variables, to obtain:
\begin{align}
    \{v\} \cup L^j \cup T^j \cup N^j \indep A^{[j_h, m]} \mid H^j \cup A^j.
\end{align}
Then combining this with (\ref{S4}), the result holds for all $j$. 

Now we prove that for every disconnected set $C$ involving $v$, the associated independence $\{v\} \cup D \indep A$ is true, where $\{v\} \cup D$ is the district for $\{v\}$ in the subgraph induced by $C$. We only need to consider vertices in $\{v\} \cup A^{[1,m]} \cup N^m \cup \dots \cup N^1$, as other vertices will definitely link $v$ and any of its neighbours. This $A$ must be a subset of $A^{[1,m]}$. Suppose $A^j$ is the least block that have non-empty intersection with $A$, so $A \subseteq A^{[j,m]}$.

Consider the vertices in $D$. Firstly notice that it cannot contain any vertices in $N^i$ or $A^i$ where $i > j$. The first one is clear, suppose the second is not true, consider the bidirected path from $A^i$ to $v$, the last vertex before $v$ is a sibling of $v$ and it must belong to some of the $N^k$. Now if $k > j$, then as $A^j$ is connected to $N^k$, this contradicts to the assumption of $j$. so $k < j$ but then as $A^i$ is a neighbour of $N^k$ in the dual graph, they are disconnected, so there must be other blocks lower than $A^j$ involved. however as we have $L^i \cup T^i \supseteq L^j \cup T^j$, this means that any block connect to $A^i$ must also connect to $A^j$ in the bidirected graphs and then $A^j$ lies in the same districts as $v$, contradiction. Thus $D$ can only contain vertices in $L^j \cup T^j \cup N^j$, and we have the required independence.

We are left to prove that the list of independence $v \indep A^j \mid H^j \cup L^j \cup T^j \cup N^j$ are non-overlapping and associated with every disconnected set. For the disconnected set $M$, suppose $D \cup \{v\} = \dis_{M}(i)$ and we have this independence $\{v\} \cup D \indep A$ with $A^j$ the least block. Then consider the root of $A^m, \dots, A^j$ (by Lemma \ref{root of subset}), say $A^i$, and also consider $\hat{A}^j$ then it is clear that the sets associates with the independence $v \indep A^i \mid H^i \cup L^i \cup T^i \cup N^i $ contains this disconnected set.

To show there is no overlap, assume that $i < j$ (in the numerical sense).  Then $A^i$ will 
appear in the conditioning set for $j$ only if it is in $L^j \cup T^j$,
which by definition implies that $A^j$ and $A^i$ are fully connected.
However $A^j$ will appear in the conditioning set for $A^i$ only if
$A^j \subseteq H^i$, which implies that they are completely
disconnected.  Since at most one of these conditions can be true, there is 
no overlap in the sets generated by these independences.
%
%
\end{proof}

\end{appendix}

\end{document}